\newtheorem{p}{Proposition}
\newtheorem{cor}{Corollary}
\newtheorem{de}{Definition}
\newtheorem{lem}{Lemma}
\DeclareMathOperator{\tr}{tr}
\DeclareMathOperator{\id}{id}
\DeclareMathOperator{\TPCPM}{TPCPM}
\def\roh{\hat{\rho}}
\def\rooh{\hat{\rho}_{AB}}
\def\sih{\hat{\sigma}}
\def\roo{\rho_{AB}}
\def\roBx{\rho_B^x}
\def\rot{\tilde{\rho}}
\def\root{\tilde{\rho}_{AB}}
\def\sii{\sigma_B}
\def\siit{\tilde{\sigma}_B}
\def\sitC{\tilde{\sigma}_C}
\def\mcor{\vert \Psi_{AB} \rangle \langle \Psi_{AB} \vert }
\def\Hmin{H_{\mathrm{min}}(\rho_{AB}|\sigma_B)}
\def\Hmink{H_{\mathrm{min}}(\rho^k_{AB}|\sigma^k_B)}
\def\HminB{H_{\mathrm{min}}(\rho_{AB}|B)}
\def\HmaxB{H_{\mathrm{max}}(\rho_{AB}|B)}
\def\HminBk{H_{\mathrm{min}}(\rho^k_{AB}|B)}
\def\HmaxBk{H_{\mathrm{max}}(\rho^k_{AB}|B)}
\def\HminBt{H_{\mathrm{min}}(\tilde{\rho}_{AB}|B)}
\def\HmaxBt{H_{\mathrm{max}}(\tilde{\rho}_{AB}|B)}
\def\sHmin{H_{\mathrm{min}}^{\epsilon}(\rho_{AB}|B)}
\def\sHmax{H_{\mathrm{max}}^{\epsilon}(\rho_{AB}|B)}
\def\sinf{ \inf_{\tilde{\rho}_{AB} \in \mathcal{B}^{\epsilon}(\rho_{AB})}}
\def\ssup{ \sup_{\tilde{\rho}_{AB} \in \mathcal{B}^{\epsilon}(\rho_{AB})}}
\def\ssupp{ \sup_{{\sigma}_{AB} \in \mathcal{B}^{\epsilon}(\rho_{AB})}}
\def\ball{\mc{B}^{\epsilon} }
\def\hahb{H_A \otimes H_B}
\def\ha{H_A}
\def\hb{H_B}
\def\mc{\mathcal}
\begin{document}

\title{Min- and Max-Entropy in Infinite Dimensions}

\author{Fabian Furrer}
\email{fabian.furrer@itp.uni-hannover.de}
\affiliation{Institute for Theoretical Physics, Leibniz Universit\"at Hannover, 30167 Hannover, Germany}

\author{Johan {\AA}berg}
\email{jaaberg@phys.ethz.ch}
\affiliation{Institute for Theoretical Physics, ETH Zurich,
             8093 Zurich, Switzerland}

\author{Renato Renner}
\email{renner@phys.ethz.ch} \affiliation{Institute for Theoretical Physics, ETH Zurich,
             8093 Zurich, Switzerland}

\begin{abstract}
We consider an extension of the conditional min- and max-entropies to infinite-dimensional separable Hilbert spaces. We show that these satisfy characterizing properties known from the finite-dimensional case, and retain information-theoretic operational interpretations, e.g., the  min-entropy as maximum achievable quantum correlation, and the max-entropy as decoupling accuracy.
We furthermore generalize the smoothed versions of these entropies and prove an infinite-dimensional  quantum asymptotic equipartition property. To facilitate these generalizations we show that the min- and max-entropy can be expressed in terms of convergent sequences of finite-dimensional min- and max-entropies, which provides a convenient technique to extend proofs from the finite to the infinite-dimensional setting.
\end{abstract}

\maketitle

\section{Introduction}

Entropy measures are fundamental to information theory. For example, in classical
information theory a central role is played by the Shannon entropy \cite{Shannon}
and in quantum information theory by the von Neumann entropy. Their usefulness partially stems from the fact that they have several convenient mathematical properties (e.g. strong subadditivity) that facilitate a `calculus' of information and uncertainty. Indeed, entropy measures can even be characterized axiomatically in terms of such properties \cite{Renyi}. However, equally important for their use in information theory is the fact that they are related to operational quantities. This means that they characterize the optimal efficiency by which various information-theoretic tasks can be solved. One example of such a task is source coding, where one considers a source that randomly outputs data according to some given probability distribution. The question of interest is how much memory is needed in order to store and faithfully regenerate the data. Another example is channel coding, where the aim is to reliably transmit information
over a channel. Here we ask how many bits (or qubits in the quantum
case) one can optimally transmit per use of the channel \cite{Shannon,ChannelCoding,Schuhmacher}.

The operational relevance of Shannon and von Neumann entropy is normally limited to the case when one considers the asymptotic limit over infinitely many instances of a random experiment, which are independent and identically distributed (iid) or can be described by a Markov process. In the case of source coding this corresponds to assuming an iid repetition of the source. In the limit of infinitely many such repetitions, the average number of bits one needs to store per output is given by the Shannon entropy of the distribution of the source \cite{Shannon}. In the general case, where we have more complicated types of correlations, or where we only consider finite instances, the role of the Shannon or von Neumann entropies appears to be taken over by other measures of entropy, referred to as the smooth min- and max-entropies  \cite{RennerPhD}.
For example, in \cite{ChannelCodingMaxEntropy,RenesRenner} it was found that the smooth max-entropy characterizes one-shot data compression, i.e., when we wish to compress a single output of an information source. Furthermore,
in \cite{ChannelCodingMinEntropy} it was proved that in one single use of a classical channel, the transmission can be characterized by the  difference between a smooth min- and max-entropy. The von Neumann entropy of a state can be regained  via the quantum asymptotic equipartition property (AEP) \cite{RennerPhD,QuantumAEP}, by applying these measures to asymptotically many iid repetitions of the
state. This allows us to derive properties of the von Neumann entropy from the smooth min- and max-entropies; a technique that has been used for an alternative proof of the quantum reverse Shannon theorem \cite{Reverse}, and to derive an entropic uncertainty relation \cite{Uncertainty}.
The min- and max-entropies furthermore generalize the spectral entropy rates \cite{InformationSpectrum}  (that are defined in an asymptotic sense) which themselves have been introduced as generalizations of the Shannon entropy \cite{Han,HanVerdu}.
Closely related quantities  are  the  relative min- and max-entropies \cite{parent}, which have been applied to entanglement theory \cite{entang1,entang2} as well as channel capacity \cite{capacity}.

So far, the investigations of the operational relevance and properties of the min- and max-entropy and their smoothed versions have been almost exclusively focused on quantum systems with finite-dimensional Hilbert spaces. Here we consider the min- and max-entropy in infinite-dimensional separable Hilbert spaces. Since the modeling in vast parts of quantum physics is  firmly rooted in infinite-dimensional Hilbert spaces, it appears that such a generalization is crucial for the application of these tools. For example, it has recently been shown that the smooth min- and max-entropies are the relevant measures of entropy in certain statistical mechanics settings \cite{Oscar,Lidia}. An extension of these ideas to, e.g., quantized classical systems, would require an infinite-dimensional version of the min- and max-entropy.
Another example is quantum key distribution (QKD), where in the
finite-dimensional case the smooth min-entropy bounds the length of the
secure key that can be extracted from an initial raw key \cite{RennerPhD}. The
generalization to infinite dimensions has therefore direct relevance for continuous
variable QKD (for references see, e.g., Section II.D.~3  of \cite{Scarani}).
In such a scheme one uses the quadratures of the electromagnetic field to establish a secret key (as opposed to other schemes that use, e.g.,  the polarization degree of freedom of single photons).  Since such QKD methods are based on the generation of coherent states and measurement of quadratures, it appears rather unavoidable to use infinite-dimensional Hilbert spaces to model the states of the field modes.
Beyond the obvious application to continuous variable quantum key distribution, one can argue that there are several quantum cryptographic tasks that today are analyzed in finite-dimensional settings, which strictly speaking would require an analysis in infinite-dimensions, since there is in general no reason to assume the Hilbert spaces of the adversary's systems to be finite.

As indicated by the above discussion, an extension of the min- and max-entropies to an infinite-dimensional setting does not only require that we can reproduce known mathematical properties of these measures, but also that we should retain their operational interpretations. A complete study of this two-fold goal would bring us far beyond the scope of this work. However, here we pave the way for this development by introducing an infinite-dimensional generalization of the min- and max-entropy, and demonstrating  a collection of `core' properties and operational interpretations. In particular, we derive (under conditions detailed below) a quantum AEP for a specific choice of an infinite-dimensional conditional von Neumann entropy.
On a more practical level we introduce a technique that facilitates the extension of results proved for the finite-dimensional case to the setting of separable Hilbert spaces. More precisely, we show that the conditional min- and max-entropies for infinite-dimensional states can be
 expressed as limits of entropies obtained by finite-dimensional truncations of the original state (Proposition~\ref{p:reduction of Hmin to finite dim}). This turns out to be a convenient tool for generalizations, and we illustrate this on the various infinite-dimensional extensions that we consider.

The $\epsilon$-smoothed min-and max-entropies are defined in terms of the `un-smoothed' ($\epsilon=0$) min- and max-entropies (which we simply refer to as `min- and max-entropy').  In Section~\ref{subsec:def} we extend these `plain' min- and max-entropies to separable Hilbert spaces.
 Section~\ref{section:reduction of cond entropies} contains the main technical tool, Proposition~\ref{p:reduction of Hmin to finite dim}, by which the infinite-dimensional min- and max-entropies can be expressed as limits of sequences of finite-dimensional entropies. The proof of Proposition~\ref{p:reduction of Hmin to finite dim} is given in Appendix~\ref{app:proofprop1}.
In Section~\ref{section:properties of min- and max-entropy} we consider properties of the min- and max-entropy, e.g.,  additivity and the data processing inequality. Section~\ref{chapter:operational interpretation} focuses on the generalization of operational interpretations.
In Section~\ref{chapter:smooth entropy} we consider the extension of the $\epsilon$-smooth min- and max-entropies, for $\epsilon>0$.
In Section~\ref{AEP} we bound the smooth min- and max-entropy of an iid state on a system $A$ conditioned on a system $B$ in terms of the conditional von Neumann entropy (Proposition \ref{prop:AEP lower bound}). This result relies on the assumption that $A$ has finite von Neumann entropy. If $A$ furthermore has a finite-dimensional Hilbert space (but the Hilbert space of $B$ is allowed to be separable) we prove that these smooth entropies converge to the conditional von Neumann entropy  (Corollary \ref{cor:AEP}), which corresponds to a quantum AEP.
 The paper ends with a short summary and outlook in Section~\ref{concl}.

\section{\label{chapter:conditioned entropies} Min- and Max-Entropy}

\subsection{\label{subsec:def}Definition of the conditional min- and max-entropy}

Associated to each quantum system is a Hilbert space $H$, which we assume to be separable in all that follows. We denote the bounded operators by $\mc{L}(H)=\{A:H \rightarrow H \ |\ \Vert A\Vert  < \infty\}$, where $\Vert A\Vert   = \sup_{\Vert \psi\Vert  = 1}\Vert A|\psi\rangle\Vert $ is the standard operator norm. Among these, the trace class operators satisfy the additional feature of having a finite trace norm $\Vert T\Vert_1 := \tr|T| = \tr\sqrt{T^{\dagger}T}$. The set of trace class operators is denoted by $\tau_1(H):= \{ T \in \mc{L}(H)| \ \Vert T\Vert_1 < \infty \}$.

We consider states which can be represented as density operators, i.e., normal states \cite{Bratteli Robinson}, and denote the set of all these states as $\mc{S}(H):=\{\rho \in \tau_1(H)| \ \rho \geq 0, \Vert \rho\Vert_1 =1 \}$. It is often convenient to allow non-normalized density operators, which form the positive cone $\tau^+_1(H)\subset\tau_1(H)$ consisting of all non-negative trace class operators.

We define the conditional min- and max-entropy of bipartite quantum systems  analogously to the finite-dimensional case \cite{OperationalMeaning}.\footnote{Max-entropy as we define it in Eq.~(\ref{def,eq3:min/max-entropy}) is related to the R\'enyi 1/2-entropy (see Section~\ref{subsect:entropy of pure states versus unconditioned entropy} or \cite{OperationalMeaning,OnTheSmoothing}). In the original definition \cite{RennerPhD} max-entropy was defined in terms of the R\'enyi 0-entropy.}

\begin{de} \label{def:min/max-entropy}Let $H_A$ and $H_B$ be separable Hilbert spaces and $\roo \in \tau_1^+(\hahb)$. The min-entropy of $\rho_{AB}$ conditioned on $\sigma_{B} \in \tau_1^+(\hb)$ is defined by
\begin{equation}\label{def,eq1:min/max-entropy}
H_{\mathrm{min}}(\rho_{AB}|\sigma_{B}) := -\log \inf \{\lambda \in \mathbb{R} | \lambda \id_{A} \otimes \sigma_{B}  \geq \rho_{AB} \},
\end{equation}
where we let $H_{\mathrm{min}}(\roo|\sii) := -\infty$ if the condition $\lambda \id_{A} \otimes \sigma_{B}  \geq \rho_{AB}$ cannot be satisfied for any $\lambda \in \mathbb{R}$.
Moreover, we define the min-entropy of $\roo$ conditioned on B by
\begin{equation}\label{def,eq2:min/max-entropy}
 H_{\mathrm{min}}(\rho_{AB}|B) := \sup_{\sigma_{B} \in \mc{S}(\hb)} H_{\mathrm{min}}(\rho_{AB}|\sigma_{B}).
\end{equation}
The max-entropy of $\rho_{AB}$ conditioned on B is defined as the dual of the min-entropy
\begin{equation}\label{def,eq3:min/max-entropy}
 H_{\mathrm{max}}(\rho_{AB}|B) := -H_{\mathrm{min}}(\rho_{AC}|C),
\end{equation}
where $\rho_{ABC}$ is a purification of $ \rho_{AB}$.
\end{de}
In the definition above, and in all that follows, we let ``$\log$'' denote the binary logarithm. The reduction of a state to a subsystem is indicated by the labels of the Hilbert space,  e.g., $\rho_A = \tr_C\rho_{AC}$. Note that the max-entropy $ H_{\mathrm{max}}(\rho_{AB}|B)$ as defined in (\ref{def,eq3:min/max-entropy}) is independent of the choice of the purification $\rho_{ABC}$, and thus well-defined. This follows from the fact that two purifications can only differ by a partial isometry on the purifying system, and the min-entropy $H_{\mathrm{min}}(\rho_{AC}|C)$ is invariant under these partial isometries on subsystem C.

The two optimizations in the definition of $H_{\mathrm{min}}(\rho_{AB}|B)$, in Eqs.~(\ref{def,eq1:min/max-entropy}) and (\ref{def,eq2:min/max-entropy}), can be combined into
\begin{equation}\label{eq:equiv def of the min-entropy}
H_{\mathrm{min}}(\rho_{AB}|B) = -\log \big(\inf \{ \tr\siit \ | \ \siit \in \tau^+_1(H_B), \\ \id_A \otimes \siit \geq \roo\} \big).
\end{equation}
For convenience we introduce the following two quantities:
\begin{eqnarray}
\label{lambdadef1}
\Lambda(\roo|\sii) & := &   2^{- H_{\mathrm{min}}(\rho_{AB}|\sii)} =\inf \{\lambda \in \mathbb{R} | \lambda \id_{A} \otimes \sigma_{B}  \geq \rho_{AB} \},\\
 \label{lambdadef2} \Lambda(\roo|B)  &:= & 2^{- H_{\mathrm{min}}(\rho_{AB}|B)} = \inf \{ \tr\siit \ | \ \siit \in \tau^+_1(H_B),  \id_A \otimes \siit \geq \roo\}.
\end{eqnarray}

\subsection{\label{section:reduction of cond entropies} Finite-dimensional approximations of min- and max-entropies}

In this section we present the main result, Proposition~\ref{p:reduction of Hmin to finite dim}, that provides a method to express the conditional min- and max-entropy as a limit of min- and max-entropies of finite-dimensional systems. The rough idea is to choose sequences $\{P^A_k\}_{k=1}^{\infty}$ and $\{P^B_k\}_{k=1}^{\infty}$ of projectors \footnote{With ``projector'' we intend a bounded operator $P$ such that $P^2 = P$ and $P^{\dagger} = P$, which in the mathematics literature usually is referred to as an ``orthogonal projector''.}
onto finite-dimensional subspaces $U_k^A \subset H_A$ and $U_k^B \subset H_B$, respectively, both converging to the identity.
Then we define a sequence of non-normalized states as $\roo^k = (P^A_k \otimes P^B_k) \roo (P^A_k \otimes P^B_k)$. The min- or max-entropy of $\roo^k$ can now be treated as if the underlying Hilbert space would be $U_k^A \otimes U_k^B$ (Lemma~\ref{I:id-P}), and therefore finite-dimensional. Proposition~\ref{p:reduction of Hmin to finite dim} shows that, as $k\rightarrow \infty$, these finite-dimensional entropies approach the desired infinite-dimensional entropy. As we will see, this provides a convenient method to extend properties from the finite to the infinite setting.

When we say that an operator sequence $Q_{k}$ converges to $Q$ in the weak operator topology we intend that $\lim_{k\rightarrow 0} \langle \chi |Q-Q_{k}|\psi\rangle = 0$  for all $|\chi\rangle,|\psi\rangle\in H$. The sequence converges in the strong operator topology if $\lim_{k\rightarrow 0} \Vert (Q-Q_{k})|\psi\rangle\Vert = 0$ for all $|\psi\rangle\in H$.

\begin{de} \label{def:projected states}
Let $\{P^A_k\}_{k\in\mathbb{N}} \subset \mc{L}(H_A)$, $\{P^B_k\}_{k\in\mathbb{N}} \subset \mc{L}(H_B)$ be sequences of projectors such that for each $k \in \mathbb{N}$ the projection spaces $U_k^A\subset H_A$, $U_k^B\subset H_B$ of $P_k^A$, $P_k^B$ are finite-dimensional, $P_k^A \leq P_{k'}^A$ and $P_k^B \leq P_{k'}^B$ for all $k \leq k'$, and $P_k^{A}$, $P_k^B$ converge in the weak operator topology to the identity. We refer to such a sequence $(P_k^A,P_k^B)$ as a generator of projected states. For $\roo \in \mc{S}(\hahb)$ we define the (non-normalized) states
\begin{equation}
\label{defprojectedstates}
\roo^k := (P^A_k \otimes P^B_k) \roo (P^A_k \otimes P^B_k),
\end{equation}
which we call the projected states of $\roo$ relative to $(P_k^A,P_k^B)$. Moreover, we refer to
\begin{equation}
\label{snklv}
\rooh^k := \frac{\roo^k}{ \tr\roo^k}
\end{equation}
as the normalized projected states of $\roo$ relative to $(P_k^A,P_k^B)$.
\end{de}

Note that a sequence of projectors that converges in the weak operator topology to the identity also converges in the strong operator topology to the identity. As a matter of convenience, we can thus in all that follows regard the
generators of projected states to converge in the strong operator topology. One may also note that the sequence of projected states $\roo^k$ (as well as the normalized projected states $\rooh^k$) converges to $\roo$ in the trace norm (see Corollary~\ref{dfnbkl}  in Appendix~\ref{section:Technical Lemmas}). The normalized projected states in Eq.~(\ref{snklv}) are of course only defined if  $\tr\roo^k \neq 0$. However,   this is true for all sufficiently large $k$ due to the trace norm convergence to $\roo$.

\begin{p}\label{p:reduction of Hmin to finite dim}
For $\roo\in\mc{S}(\hahb)$, let $\{\roo^k\}_{k\in\mathbb{N}}$ be the projected states of $\roo$ relative to a generator $(P_k^A,P_k^B)$, and  $\roh^k_{AB}$ the corresponding normalized projected states. Furthermore, let $\sii \in \mc{S}(\hb)$ and define the operators $\sigma_B^k := P^B_k\sigma_B P^B_k$ and $\sih_B^k :=\tr(\sigma_B^k)^{-1}\sigma^k_B$.  Then, the following three statements hold.
\begin{equation} \label{p,eq1:reduction of Hmin to finite dim}
\Hmin = \lim_{k\rightarrow \infty} \Hmink =\lim_{k\rightarrow \infty} H_{\mathrm{min}}\big(\roh^k_{AB}|\sih^k_B\big),
\end{equation}
and the infimum in Eq.~(\ref{def,eq1:min/max-entropy}) is attained if $\Hmin$ is finite.
\begin{equation}\label{p,eq2:reduction of Hmin to finite dim}
\HminB = \lim_{k\rightarrow \infty} H_{\mathrm{min}}(\roo^k|B_k) = \lim_{k\rightarrow \infty} H_{\mathrm{min}} \big(\roh^k_{AB}|B_k\big),
\end{equation}
and the supremum in Eq.~(\ref{def,eq2:min/max-entropy}) is attained if $\HminB$ is finite.
\begin{equation}\label{p,eq3:reduction of Hmin to finite dim}
\HmaxB = \lim_{k \rightarrow \infty} H_{\mathrm{max}}(\roo^k|B_k) = \lim_{k\rightarrow \infty} H_{\mathrm{max}} \big(\roh^k_{AB}|B_k\big).
\end{equation}
Here,  $B_k$ denotes the restriction of system $B$ to the projection space $U_k^B$ of $P_k^B$.
\end{p}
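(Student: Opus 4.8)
The plan is to handle the three identities by one common scheme: a ``projection'' argument conjugating by $P^A_k\otimes P^B_k$ gives one of the two inequalities for free, and an infinite-dimensional weak-operator-compactness argument supplies the reverse inequality together with the attainment statements. The passage between the $\rho^k_{AB}$ and the normalized $\roh^k_{AB}$ (and between $\sigma^k_B$ and $\sih^k_B$) is then immediate from the scaling identities $H_{\mathrm{min}}(c\rho|d\sigma)=H_{\mathrm{min}}(\rho|\sigma)-\log(c/d)$ and $H_{\mathrm{max}}(c\rho|B)=H_{\mathrm{max}}(\rho|B)+\log c$, since $\tr\roo^k\to1$ and $\tr\sigma^k_B\to1$ by trace-norm convergence (Corollary~\ref{dfnbkl}). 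Throughout I will use Lemma~\ref{I:id-P} to identify the min-/max-entropies of the finite-rank operators $\roo^k$ with the corresponding quantities on the finite-dimensional space $U^A_k\otimes U^B_k$, so that in particular the finite-dimensional optima are attained. For Eq.~(\ref{p,eq1:reduction of Hmin to finite dim}): conjugating any feasible $\lambda\,\id_A\otimes\sigma_B\geq\roo$ by $P^A_k\otimes P^B_k$ gives $\lambda\,P^A_k\otimes\sigma^k_B\geq\roo^k$, whence $\Lambda(\roo^k|\sigma^k_B)\leq\Lambda(\roo|\sigma_B)$ and $H_{\mathrm{min}}(\roo^k|\sigma^k_B)\geq\Hmin$ for all $k$. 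For the converse, pass to a subsequence realizing $\liminf_k\Lambda(\roo^k|\sigma^k_B)$; if this is finite, use $P^A_k\otimes\sigma^k_B=(P^A_k\otimes P^B_k)(\id_A\otimes\sigma_B)(P^A_k\otimes P^B_k)\to\id_A\otimes\sigma_B$ in the strong (hence weak) operator topology and $\roo^k\to\roo$ weakly to take weak-operator limits in $\Lambda(\roo^k|\sigma^k_B)\,P^A_k\otimes\sigma^k_B\geq\roo^k$, obtaining $(\liminf_k\Lambda(\roo^k|\sigma^k_B))\,\id_A\otimes\sigma_B\geq\roo$; thus $\Lambda(\roo|\sigma_B)\leq\liminf_k\Lambda(\roo^k|\sigma^k_B)$, which combined with the first inequality forces the limit to exist and equal $\Lambda(\roo|\sigma_B)$ and shows the infimum in Eq.~(\ref{def,eq1:min/max-entropy}) is attained (by $\lambda=\Lambda(\roo|\sigma_B)$). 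If the liminf is infinite the same limit passage would otherwise yield a finite feasible $\lambda$, a contradiction, so $\Lambda(\roo^k|\sigma^k_B)\to\infty$ and both sides are $-\infty$.

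For Eq.~(\ref{p,eq2:reduction of Hmin to finite dim}) the same conjugation turns a feasible $\siit\in\tau^+_1(H_B)$ for $\roo$ into a feasible $P^B_k\siit P^B_k$ for $\roo^k$ of trace $\leq\tr\siit$, so $\Lambda(\roo^k|B_k)\leq\Lambda(\roo|B)$ and $H_{\mathrm{min}}(\roo^k|B_k)\geq\HminB$ for every $k$. The reverse inequality is the technical heart and simultaneously gives the attainment of the supremum in Eq.~(\ref{def,eq2:min/max-entropy}). Assuming $\HminB$ finite (otherwise argue as above), $\lambda_k:=\Lambda(\roo^k|B_k)$ is bounded; let $\siit^{\,k}\in\tau^+_1(U^B_k)$ be finite-dimensional optimizers with $\id\otimes\siit^{\,k}\geq\roo^k$ and $\tr\siit^{\,k}=\lambda_k$. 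These have uniformly bounded operator norm, so by weak-operator compactness of norm-bounded sets in the separable space $H_B$ a subsequence converges in the weak operator topology to some $\siit_B\geq0$. One checks $\siit_B\in\tau^+_1(H_B)$ with $\tr\siit_B\leq\liminf_j\lambda_{k_j}$ by writing, for finite-rank projectors $Q\uparrow\id_{H_B}$, $\tr(Q\siit_B Q)=\lim_j\tr(Q\siit^{\,k_j}Q)\leq\liminf_j\lambda_{k_j}$ and taking the supremum over $Q$; and $\id_A\otimes\siit_B\geq\roo$ by taking weak-operator limits in $\id\otimes\siit^{\,k_j}\geq\roo^{k_j}$. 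Since $\siit_B$ is feasible, $\tr\siit_B\geq\Lambda(\roo|B)$, so $\Lambda(\roo|B)=\lim_k\lambda_k$ (forcing existence of the limit) and the optimum is attained; normalizing $\siit_B$ produces the optimal state in Eq.~(\ref{def,eq2:min/max-entropy}).

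For Eq.~(\ref{p,eq3:reduction of Hmin to finite dim}), fix a purification $|\psi\rangle\in H_A\otimes H_B\otimes H_C$ of $\roo$ and set $|\psi^k\rangle:=(P^A_k\otimes P^B_k\otimes\id_C)|\psi\rangle$. Then $|\psi^k\rangle\langle\psi^k|$ is a purification of $\roo^k$, so by purification-independence of the max-entropy $H_{\mathrm{max}}(\roo^k|B_k)=-H_{\mathrm{min}}(\rho^k_{AC}|C)$ with $\rho^k_{AC}:=\tr_B|\psi^k\rangle\langle\psi^k|$ (note this is in general \emph{not} a projected state of $\rho_{AC}$), and likewise $\HmaxB=-H_{\mathrm{min}}(\rho_{AC}|C)$. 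Using $(P^A_k\otimes P^B_k\otimes\id_C)=(P^A_k\otimes\id_{BC})(\id_A\otimes P^B_k\otimes\id_C)$ one gets $\rho^k_{AC}=(P^A_k\otimes\id_C)\,Y^k_{AC}\,(P^A_k\otimes\id_C)$ with $Y^k_{AC}:=\tr_B[(\id_A\otimes P^B_k\otimes\id_C)|\psi\rangle\langle\psi|(\id_A\otimes P^B_k\otimes\id_C)]$, and an elementary calculation (the cross terms vanish under $\tr_B$ because they are supported on orthogonal subspaces of $B$) shows $Y^k_{AC}\leq\rho_{AC}$. Hence any feasible $\sitC\in\tau^+_1(H_C)$ for $\rho_{AC}$ is feasible for $\rho^k_{AC}$, because $(P^A_k\otimes\id_C)(\id_A\otimes\sitC)(P^A_k\otimes\id_C)=P^A_k\otimes\sitC\leq\id_A\otimes\sitC$; this gives $\Lambda(\rho^k_{AC}|C)\leq\Lambda(\rho_{AC}|C)$. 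Conversely $|\psi^k\rangle\to|\psi\rangle$ in norm, so $\rho^k_{AC}\to\rho_{AC}$ in trace norm, and the weak-operator-compactness argument of the previous paragraph (applied to $C$) shows $\Lambda(\,\cdot\,|C)$ is lower semicontinuous along this sequence, i.e. $\Lambda(\rho_{AC}|C)\leq\liminf_k\Lambda(\rho^k_{AC}|C)$. The two bounds force $\lim_k\Lambda(\rho^k_{AC}|C)=\Lambda(\rho_{AC}|C)$, equivalently $\lim_k H_{\mathrm{max}}(\roo^k|B_k)=\HmaxB$.

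I expect the main obstacle to be the infinite-dimensional compactness step: extracting from the finite-dimensional optimizers $\siit^{\,k}$ a weak-operator limit that is again a \emph{trace-class} operator with controlled trace and that remains feasible for the limiting operator inequality. This is precisely what is needed to obtain the non-trivial inequalities in Eqs.~(\ref{p,eq2:reduction of Hmin to finite dim})--(\ref{p,eq3:reduction of Hmin to finite dim}) as well as the attainment of the optima; the monotonicity/projection half and the normalized-state reductions are routine.
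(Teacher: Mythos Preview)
Your proposal is correct and follows essentially the same strategy as the paper's proof in Appendix~\ref{app:proofprop1}: one direction by conjugating with $P^A_k\otimes P^B_k$, the reverse direction together with attainment by extracting a convergent subsequence from the finite-dimensional optimizers $\siit^{\,k}$. The paper additionally records that $\Lambda(\roo^k|\sii^k)$, $\Lambda(\roo^k|B_k)$ and $\Lambda(\rho^k_{AC}|C)$ are \emph{monotone increasing} in $k$ (Lemma~\ref{l:mon incr}), which makes the existence of the limits immediate, but your squeeze argument (upper bound from conjugation, lower bound from lower semicontinuity) gives the same conclusion.

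The one technical variation worth noting is the compactness tool. The paper invokes Banach--Alaoglu on $\tau_1(H_B)=\mathcal{K}(H_B)^*$ to get a weak* limit $\siit_B$, which is then automatically trace class with $\Vert\siit_B\Vert_1\leq\liminf\Vert\siit^{\,k}\Vert_1$, and for which Lemma~\ref{lem:weakstar Tk implies weak op of id otimes Tk} directly yields weak-operator convergence of $\id_A\otimes\siit^{\,k}$. You instead use weak-operator compactness of norm-bounded sets in $\mathcal{L}(H_B)$ and then verify the trace-class property by hand via finite-rank projectors $Q\uparrow\id$. This is fine, but be aware of the step you gloss over when passing to the limit in $\id_A\otimes\siit^{\,k_j}\geq\roo^{k_j}$: testing this inequality against $\psi\in H_A\otimes H_B$ amounts to $\tr(\siit^{\,k_j}\,\tr_A|\psi\rangle\langle\psi|)$, i.e.\ pairing against a \emph{trace-class} (not finite-rank) operator, so you need the standard fact that on norm-bounded sequences weak-operator convergence coincides with $\sigma$-weak convergence. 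With that observation in place your argument is complete and equivalent to the paper's.
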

The proof of this proposition is found in Appendix~\ref{app:proofprop1}.
When we say that the infimum in (\ref{def,eq1:min/max-entropy}) is attained, it means that there exists a finite $\lambda'$ such that $ \lambda' \id_{A} \otimes \sigma_{B}  - \rho_{AB} \geq 0$ and $H_{\mathrm{min}}(\rho_{AB}|\sigma_{B}) = -\log \lambda'$. Similarly, that the supremum in (\ref{def,eq2:min/max-entropy}) is attained, means that there exists a $\sigma'_{B} \in \tau_1^+(H_B)$ satisfying $\id \otimes \sigma'_B \geq \roo$ such that  $H_{\mathrm{min}}(\rho_{AB}|B) =  H_{\mathrm{min}}(\rho_{AB}|\sigma'_{B})$.

Given the above proposition, a natural question is if $\HminB$ and $\HmaxB$ are trace norm continuous in general. In the finite-dimensional case \cite{OnTheSmoothing} it is known that these entropies are continuous with a Lipschitz constant depending on the dimension of $H_A$. However, the following example shows that they are in general not continuous in the infinite-dimensional case. Let $\{|k\rangle\}_{k= 0,1,\ldots}$ be an arbitrary orthonormal basis of the Hilbert space $H_A$. For each $n= 1,2,\ldots$ let
\begin{equation}
\rho_{n} = (1-\frac{1}{n})|0\rangle\langle 0| + \frac{1}{n^{2}}\sum_{k=1}^{n}|k\rangle\langle k|.
\end{equation}
One can see that $\rho_n$ converges in the trace norm to $|0\rangle\langle0|$ as $n\rightarrow\infty$, while
$\lim_{n\rightarrow\infty}H_{\mathrm{max}}(\rho_n) = 2$, and $H_{\mathrm{max}}(|0\rangle\langle 0|) = 0$. Hence, the max-entropy is not continuous. ($H_{\mathrm{max}}(\rho)$ without conditioning means that we condition on a trivial subsystem $B$. See Eq.~(\ref{cor,eq1:unconditioned max-entropy}).) The duality, Eq.~(\ref{def,eq3:min/max-entropy}), yields an example also for the min-entropy.


\section{\label{section:properties of min- and max-entropy} Properties of Min- and Max-Entropy}
\subsection{Additivity and the data processing inequality}
Proposition~\ref{p:reduction of Hmin to finite dim} can be used as a tool to generalize known finite-dimensional results to the infinite-dimensional case. A simple example is the ordering property \cite{QuantumAEP}
\begin{equation}\label{orderingHminHmax}
H_{\mathrm{min}}(\roo|B) \leq H_{\mathrm{max}}(\roo|B),
\end{equation}
which is obtained by a direct application of Proposition~\ref{p:reduction of Hmin to finite dim}. Another example is the additivity, which in the finite-dimensional case was proved in
\cite{RennerPhD}. A direct generalization of the proof techniques they employed appears rather challenging, while Proposition~\ref{p:reduction of Hmin to finite dim} makes the generalization straightforward.
\begin{p} \label{p:additivity}
Let $\roo \in \mc{S}(\hahb)$ and $\rho_{A'B'} \in \mc{S}(H_{A'} \otimes H_{B'}) $ for $H_A, H_{A'}$, $H_B$, and $H_{B'}$ separable Hilbert spaces. Then, it follows that
\begin{align}  \label{Hminadd} H_{\mathrm{min}}(\roo \otimes \rho_{A'B'}| BB')  & =  H_{\mathrm{min}}(\roo|B) + H_{\mathrm{min}}(\rho_{A'B'}|B'), \\
    \label{Hmaxadd} H_{\mathrm{max}}(\roo \otimes \rho_{A'B'}|BB') & = H_{\mathrm{max}}(\roo|B) + H_{\mathrm{max}}(\rho_{A'B'}|B').
\end{align}
\end{p}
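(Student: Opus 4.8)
The plan is to deduce both identities from the corresponding finite-dimensional statements, proved in \cite{RennerPhD}, by means of Proposition~\ref{p:reduction of Hmin to finite dim}. The essential observation is that the finite-dimensional truncation procedure of Definition~\ref{def:projected states} is compatible with tensor products, so that a truncation of $\rho_{AB}\otimes\rho_{A'B'}$ can be chosen to factorize into truncations of $\rho_{AB}$ and of $\rho_{A'B'}$.

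Concretely, I would first fix a generator of projected states $(P^A_k,P^B_k)$ for $\ha\otimes\hb$ and a generator $(P^{A'}_k,P^{B'}_k)$ for $H_{A'}\otimes H_{B'}$, and check that $(P^A_k\otimes P^{A'}_k,\,P^B_k\otimes P^{B'}_k)$ is a generator of projected states for $(\ha\otimes H_{A'})\otimes(\hb\otimes H_{B'})$ in the sense of Definition~\ref{def:projected states}: the tensor factors are projectors onto the finite-dimensional subspaces $U^A_k\otimes U^{A'}_k$ and $U^B_k\otimes U^{B'}_k$; monotonicity in $k$ follows by writing $P^A_{k'}\otimes P^{A'}_{k'}-P^A_k\otimes P^{A'}_k$ as a sum of tensor products of positive operators; and weak-operator convergence to the identity holds because these projectors are uniformly norm-bounded and converge weakly on the total set of product vectors. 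Since projectors act factor-wise, one has $(\rho_{AB}\otimes\rho_{A'B'})^k=\rho^k_{AB}\otimes\rho^k_{A'B'}$ and $\tr(\rho_{AB}\otimes\rho_{A'B'})^k=\tr\rho^k_{AB}\cdot\tr\rho^k_{A'B'}$, hence the normalized projected state of $\rho_{AB}\otimes\rho_{A'B'}$ equals $\hat\rho^k_{AB}\otimes\hat\rho^k_{A'B'}$.

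For the min-entropy I would then apply Eq.~(\ref{p,eq2:reduction of Hmin to finite dim}) to $\rho_{AB}\otimes\rho_{A'B'}$ relative to this product generator, obtaining
\begin{equation*}
H_{\mathrm{min}}(\rho_{AB}\otimes\rho_{A'B'}|BB')=\lim_{k\to\infty}H_{\mathrm{min}}\big(\hat\rho^k_{AB}\otimes\hat\rho^k_{A'B'}\,\big|\,B_kB'_k\big).
\end{equation*}
By Lemma~\ref{I:id-P} the entropy on the right may be evaluated as an entropy on the finite-dimensional Hilbert space $(U^A_k\otimes U^{A'}_k)\otimes(U^B_k\otimes U^{B'}_k)$, so the finite-dimensional additivity \cite{RennerPhD} gives
\begin{equation*}
H_{\mathrm{min}}\big(\hat\rho^k_{AB}\otimes\hat\rho^k_{A'B'}\,\big|\,B_kB'_k\big)=H_{\mathrm{min}}\big(\hat\rho^k_{AB}\,\big|\,B_k\big)+H_{\mathrm{min}}\big(\hat\rho^k_{A'B'}\,\big|\,B'_k\big).
\end{equation*}
Letting $k\to\infty$ and invoking Eq.~(\ref{p,eq2:reduction of Hmin to finite dim}) once more, now for $\rho_{AB}$ and for $\rho_{A'B'}$ separately, yields Eq.~(\ref{Hminadd}). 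The only step that needs care is that the limit of a sum equals the sum of the limits; this is fine once one observes that the conditional min-entropy of a normalized state is never $+\infty$: taking a normalized eigenvector $|a\rangle$ of $\rho_A$ of maximal eigenvalue and compressing the constraint $\id_A\otimes\tilde\sigma_B\geq\rho_{AB}$ by $\langle a|\cdot|a\rangle$ on $A$ forces $\tr\tilde\sigma_B\geq\langle a|\rho_A|a\rangle=\Vert\rho_A\Vert>0$, so $\Lambda(\rho_{AB}|B)>0$. Hence both limits lie in $[-\infty,+\infty)$ and no $\infty-\infty$ ambiguity occurs.

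The max-entropy identity then follows by duality without any further limiting argument. If $\rho_{ABC}$ and $\rho_{A'B'C'}$ are purifications of $\rho_{AB}$ and $\rho_{A'B'}$, then $\rho_{ABC}\otimes\rho_{A'B'C'}$ purifies $\rho_{AB}\otimes\rho_{A'B'}$ with purifying system $CC'$, and its reduction to $AA'CC'$ is $\rho_{AC}\otimes\rho_{A'C'}$. Hence, by Definition~\ref{def:min/max-entropy} and the min-entropy additivity just established,
\begin{equation*}
H_{\mathrm{max}}(\rho_{AB}\otimes\rho_{A'B'}|BB')=-H_{\mathrm{min}}(\rho_{AC}\otimes\rho_{A'C'}|CC')=-H_{\mathrm{min}}(\rho_{AC}|C)-H_{\mathrm{min}}(\rho_{A'C'}|C')=H_{\mathrm{max}}(\rho_{AB}|B)+H_{\mathrm{max}}(\rho_{A'B'}|B'),
\end{equation*}
which is Eq.~(\ref{Hmaxadd}); the two min-entropies on the right lie in $[-\infty,+\infty)$ by the previous paragraph, so again there is no indeterminate form. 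The main obstacle is thus essentially bookkeeping — verifying that a product of generators of projected states is again a generator and commutes with the truncation, plus the handling of the infinite-entropy edge cases; once Proposition~\ref{p:reduction of Hmin to finite dim}, Lemma~\ref{I:id-P} and the finite-dimensional additivity are available, no new analytic input is required.
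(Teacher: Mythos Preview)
Your proposal is correct and follows essentially the same approach that the paper sketches (and then omits): reduce to the finite-dimensional additivity of \cite{RennerPhD} via Proposition~\ref{p:reduction of Hmin to finite dim}, after checking that the tensor product of two generators of projected states is again a generator (this is precisely the role of Lemma~\ref{nvdakj} in the paper's outline). The one small variation is that you obtain Eq.~(\ref{Hmaxadd}) by duality from Eq.~(\ref{Hminadd}) rather than by rerunning the limiting argument with the max-entropy part of Proposition~\ref{p:reduction of Hmin to finite dim}; both routes are equally valid, and your explicit handling of the $-\infty$ edge cases (via the bound $\Lambda(\rho_{AB}|B)\geq\Vert\rho_A\Vert$, cf.~Proposition~\ref{cor:upper bound for min/max-entropy}) is a useful addition.
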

The proof is a simple application of the approximation scheme in Proposition~\ref{p:reduction of Hmin to finite dim} combined with Lemma~\ref{nvdakj} and the finite-dimensional version of Proposition~\ref{p:additivity}, and therefore omitted.

For the sake of completeness we note that the data processing inequalities \cite{RennerPhD} also hold in the infinite-dimensional setting. In this case, however, there is no need to resort to Proposition~\ref{p:reduction of Hmin to finite dim}, as the proof in \cite{RennerPhD} can be generalized directly.
\begin{p} \label{p:strsubadditivity}
Let $\rho_{ABC} \in \tau_{+}(\hahb\otimes H_C)$ for separable Hilbert spaces $H_A$, $H_B$ and $H_C$. Then, it follows that
\begin{eqnarray}
\label{p,eq2:strong subadd of Hmin} & &  H_{\mathrm{min}}(\rho_{ABC}|BC) \leq H_{\mathrm{min}}(\rho_{AB}|B),  \\
 \label{p:strong subadd of Hmax}   & &  H_{\mathrm{max}}(\rho_{ABC}|BC) \leq H_{\mathrm{max}}(\rho_{AB}|B).
\end{eqnarray}
\end{p}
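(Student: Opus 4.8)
The plan is to transcribe the finite-dimensional argument of~\cite{RennerPhD} directly; the single point needing care is that the operators involved need not be trace class when $H_A$ is infinite-dimensional. Inequality~(\ref{p,eq2:strong subadd of Hmin}) is just data processing under the partial-trace channel $\tr_C$, and I would prove it as follows. If $\HminBC=-\infty$ there is nothing to show, so fix $\sigma_{BC}\in\mc{S}(H_B\otimes H_C)$ and $\lambda\in\mathbb{R}$ with $\lambda\,\id_A\otimes\sigma_{BC}\geq\rho_{ABC}$, and put $\sigma_B:=\tr_C\sigma_{BC}$, which lies in $\mc{S}(H_B)$. The claim is that $\lambda\,\id_A\otimes\sigma_B\geq\rho_{AB}$. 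Here $\lambda\,\id_A\otimes\sigma_{BC}-\rho_{ABC}$ is a \emph{bounded} positive operator (the first term is bounded, as $\sigma_{BC}$ is), so for an orthonormal basis $\{|i\rangle\}$ of $H_C$ and any $|\psi\rangle\in H_A\otimes H_B$ one gets, using the matrix-element definition of the partial trace together with termwise positivity,
\[
\langle\psi|\rho_{AB}|\psi\rangle=\sum_i\langle\psi\otimes i|\rho_{ABC}|\psi\otimes i\rangle\leq\lambda\sum_i\langle\psi\otimes i|(\id_A\otimes\sigma_{BC})|\psi\otimes i\rangle=\lambda\,\langle\psi|(\id_A\otimes\sigma_B)|\psi\rangle ,
\]
all three quantities being finite because $\rho_{ABC}$ is trace class and $\sigma_B$ bounded. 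Hence $\Lambda(\rho_{AB}|\sigma_B)\leq\lambda$, so $H_{\mathrm{min}}(\rho_{AB}|\sigma_B)\geq-\log\lambda$; taking the infimum over admissible $\lambda$ and the supremum over $\sigma_{BC}$ yields $\HminB\geq\HminBC$. Since the conditioning states always range over $\mc{S}$, no normalization of $\rho_{ABC}$ is required.

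For the max-entropy inequality~(\ref{p:strong subadd of Hmax}) I would pass through the duality~(\ref{def,eq3:min/max-entropy}). Choose a purification $\rho_{ABCD}\in\tau_1^+(H_A\otimes H_B\otimes H_C\otimes H_D)$ of $\rho_{ABC}$ with $H_D$ separable, which exists by the standard Schmidt-type construction (the defining series converges because $\rho_{ABC}$ is trace class). Then $\rho_{ABCD}$ is simultaneously a purification of $\rho_{AB}$ with purifying system $CD$, so Definition~\ref{def:min/max-entropy} gives $\HmaxBC=-H_{\mathrm{min}}(\rho_{AD}|D)$ and $\HmaxB=-H_{\mathrm{min}}(\rho_{ACD}|CD)$. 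Applying the just-established inequality~(\ref{p,eq2:strong subadd of Hmin}) to $\rho_{ACD}$, with the subsystem $C$ discarded, gives $H_{\mathrm{min}}(\rho_{ACD}|CD)\leq H_{\mathrm{min}}(\rho_{AD}|D)$, and negating both sides yields $\HmaxB\geq\HmaxBC$. Extended-real values cause no trouble here, since the min-entropy inequality, and hence its negation, holds in $[-\infty,+\infty]$.

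The only genuine obstacle is the bookkeeping forced by the lack of finite-dimensional (or even trace-class) structure: one must ensure that ``$\tr_C$ of a bounded positive operator'' is used exclusively through its matrix elements, i.e.\ through the quadratic form $|\psi\rangle\mapsto\sum_i\langle\psi\otimes i|\,X\,|\psi\otimes i\rangle$, and that each such series of nonnegative terms converges --- which it does, its sum being a finite quantity of the form $\langle\psi|(\id_A\otimes\sigma_B)|\psi\rangle$. With this in place the argument is a line-by-line copy of the finite-dimensional proof, and in particular Proposition~\ref{p:reduction of Hmin to finite dim} is not invoked.
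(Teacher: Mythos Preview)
Your proposal is correct and follows exactly the approach the paper indicates: the paper does not spell out a proof but explicitly states that the argument from~\cite{RennerPhD} ``can be generalized directly'' without invoking Proposition~\ref{p:reduction of Hmin to finite dim}, which is precisely what you do. Your handling of the partial trace via matrix elements (with termwise positivity ensuring convergence) and the duality reduction for the max-entropy inequality are both sound and constitute the natural direct generalization.
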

The data processing inequalities can be regarded as the min- and max-entropy counterparts of the strong subadditivity of the von Neumann entropy (and are sometimes directly referred to as ``strong subadditivity'').
One reason for this is that the standard formulation of the strong subadditivity of von Neumann entropy \cite{LiebRuskai1,Lieb,LiebRuskai2}, $H(\rho_{ABC}) + H(\rho_{B})\leq H(\rho_{AB}) + H(\rho_{BC})$,  can be recast in the same form.


\subsection{\label{subsect:entropy of pure states versus unconditioned entropy} Entropy of pure states, and a bound for general states}
Here we briefly consider the fact that the min-entropy can take the value $-\infty$, and  the max-entropy can take the value $+\infty$. For this purpose we discuss the special case of pure states, as well as the case of no conditioning (i.e., if there is no subsystem $B$). Based on this we obtain a general bound which says that the conditional min- and max-entropies of a state $\roo$ are finite if the operator $\sqrt{\rho_A}$ is trace class. Moreover it turns out that the min-entropy cannot attain the value $+\infty$, while the max-entropy cannot attain $-\infty$.
\begin{lem}
\label{cor:min-entropy for purestates}
The min-entropy of $\roo=\vert \psi \rangle \langle \psi \vert$, where $\vert\psi\rangle \in \hahb$, is given by
\begin{equation}\label{cor,eq1:min-entropy for purestates}
H_{\mathrm{min}}(\roo|B) = - 2 \log\tr\sqrt{\rho_A}.
\end{equation}
\end{lem}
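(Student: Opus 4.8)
The plan is to deduce the formula from its finite-dimensional counterpart (established in \cite{OperationalMeaning,RennerPhD}) by a single application of Proposition~\ref{p:reduction of Hmin to finite dim}, using a generator of projected states adapted to the Schmidt decomposition of $\vert\psi\rangle$, so that the projected states stay pure and the finite-dimensional identity applies directly.

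Concretely, first I would fix a Schmidt decomposition $\vert\psi\rangle=\sum_{i}\sqrt{p_{i}}\,\vert e_{i}\rangle_{A}\otimes\vert f_{i}\rangle_{B}$ with $p_{i}\ge 0$, $\sum_{i}p_{i}=1$ and $\{\vert e_{i}\rangle\}$, $\{\vert f_{i}\rangle\}$ orthonormal; if the Schmidt rank is strictly smaller than $\dim H_{A}$ (resp. $\dim H_{B}$) one completes $\{\vert e_{i}\rangle\}$ (resp. $\{\vert f_{i}\rangle\}$) to an orthonormal basis by appending vectors with $p_{i}=0$, and if $H_{A}$ or $H_{B}$ is finite-dimensional the corresponding basis is finite. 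Then $P^{A}_{k}:=\sum_{i\le\min(k,\dim H_{A})}\vert e_{i}\rangle\langle e_{i}\vert$ and the analogous $P^{B}_{k}$ form a generator of projected states (Definition~\ref{def:projected states}), and for $\roo=\vert\psi\rangle\langle\psi\vert$ one gets $\roo^{k}=\vert\psi_{k}\rangle\langle\psi_{k}\vert$ with $\vert\psi_{k}\rangle=(P^{A}_{k}\otimes P^{B}_{k})\vert\psi\rangle=\sum_{i\le k}\sqrt{p_{i}}\,\vert e_{i}\rangle\vert f_{i}\rangle$. Hence the normalized projected state $\roh^{k}_{AB}$ is a \emph{pure} state on the finite-dimensional space $U^{A}_{k}\otimes U^{B}_{k}$ whose reduced state on $A$ has eigenvalues $p_{i}/t_{k}$, $i\le k$, where $t_{k}:=\tr\roo^{k}=\sum_{i\le k}p_{i}\to 1$.

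Next I would invoke the finite-dimensional pure-state identity $H_{\mathrm{min}}(\vert\phi\rangle\langle\phi\vert\,\vert\,B)=-2\log\tr\sqrt{(\vert\phi\rangle\langle\phi\vert)_{A}}$ for normalized $\vert\phi\rangle$, applied to $\roh^{k}_{AB}$ on $U^{A}_{k}\otimes U^{B}_{k}$, obtaining $H_{\mathrm{min}}(\roh^{k}_{AB}\vert B_{k})=-2\log\sum_{i\le k}\sqrt{p_{i}/t_{k}}=-2\log\big(\sum_{i\le k}\sqrt{p_{i}}\big)+\log t_{k}$. Since $\roo\in\mc{S}(\hahb)$, Proposition~\ref{p:reduction of Hmin to finite dim}, Eq.~(\ref{p,eq2:reduction of Hmin to finite dim}), gives $\HminB=\lim_{k\to\infty}H_{\mathrm{min}}(\roh^{k}_{AB}\vert B_{k})$, and because $t_{k}\to 1$ this limit equals $-2\log\sum_{i=1}^{\infty}\sqrt{p_{i}}$. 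As $\rho_{A}=\sum_{i}p_{i}\vert e_{i}\rangle\langle e_{i}\vert$ and therefore $\tr\sqrt{\rho_{A}}=\sum_{i}\sqrt{p_{i}}$, this is precisely $-2\log\tr\sqrt{\rho_{A}}$.

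Given Proposition~\ref{p:reduction of Hmin to finite dim} I do not anticipate a genuine obstacle; the argument is essentially bookkeeping. The one point needing attention is the case $\tr\sqrt{\rho_{A}}=\sum_{i}\sqrt{p_{i}}=+\infty$, which can occur for a bona fide density operator (e.g. $p_{i}\propto i^{-2}$): then the partial sums diverge, both sides of the claimed identity equal $-\infty$ under the convention $-2\log(+\infty)=-\infty$, and indeed $H_{\mathrm{min}}(\roh^{k}_{AB}\vert B_{k})\to-\infty$, so the statement remains correct and consistent with $\HminB=-\infty$ in the sense of Definition~\ref{def:min/max-entropy}. For completeness I note that the lemma can also be proved directly: for $\Lambda(\roo\vert B)\le(\tr\sqrt{\rho_{A}})^{2}$ one checks that $\sigma_{B}=(\tr\sqrt{\rho_{A}})\sqrt{\rho_{B}}$ (with $\rho_{B}$ diagonal in the Schmidt basis) satisfies $\id_{A}\otimes\sigma_{B}\ge\vert\psi\rangle\langle\psi\vert$ by a Cauchy--Schwarz estimate on vectors $\sum_{i}b_{i}\vert e_{i}\rangle\vert f_{i}\rangle$, and for the reverse bound one uses that $\id_{A}\otimes\sigma_{B}\ge\vert\psi\rangle\langle\psi\vert$ forces $\langle\psi\vert(\id_{A}\otimes\sigma_{B}^{-1})\vert\psi\rangle\le 1$ and again Cauchy--Schwarz; but this route requires care with generalized inverses of non-invertible trace-class operators, which the reduction via Proposition~\ref{p:reduction of Hmin to finite dim} cleanly avoids.
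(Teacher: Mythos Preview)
Your proposal is correct, but it takes a genuinely different route from the paper. The paper does \emph{not} invoke Proposition~\ref{p:reduction of Hmin to finite dim} here; it gives a direct, self-contained argument. For the bound $\Lambda(\roo|B)\ge(\tr\sqrt{\rho_A})^2$ it simply tests the feasibility constraint $\id_A\otimes\siit\ge\roo$ against the unnormalized vectors $|\chi_n\rangle=\sum_{k=1}^{n}|a_k\rangle|b_k\rangle$ built from the Schmidt bases, obtaining $\tr\siit\ge\langle\chi_n|\id_A\otimes\siit|\chi_n\rangle\ge|\langle\chi_n|\psi\rangle|^2=(\sum_{k\le n}r_k)^2$; no inverse of $\siit$ is ever formed, so the difficulty you flag in your final aside does not arise. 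For the reverse bound (when $\tr\sqrt{\rho_A}<\infty$) it exhibits the explicit optimizer $\siit=(\tr\sqrt{\rho_A})\sqrt{\rho_B}$ and checks feasibility by Cauchy--Schwarz, essentially as you sketch.

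Your approach has the merit of being a clean, mechanical application of the finite-dimensional reduction machinery, and it handles the $\tr\sqrt{\rho_A}=+\infty$ case uniformly through the divergence of the partial sums. The paper's approach is more elementary (independent of the comparatively heavy Proposition~\ref{p:reduction of Hmin to finite dim}) and, as a bonus, identifies the optimal $\siit$ explicitly, which your limiting argument does not directly provide.
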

From this lemma we can conclude that $H_{\mathrm{min}}(\roo|B)$ is finite if and only if $\sqrt{\rho_{A}}$ is trace class. Otherwise $H_{\mathrm{min}}(\roo|B) = -\infty$.
If the Schmidt decomposition \cite{Convertability} of $\psi$ is given by $\sum_{k=1}^{\infty} r_k \vert a_k\rangle\vert b_k\rangle$, we have $\tr\sqrt{\rho_A}=\sum_{k=1}^{\infty}r_k$, such that a finite Schmidt rank always implies that $H_{\mathrm{min}}(\roo|B)$ is finite.
Recall that the Schmidt coefficients characterize the entanglement of a pure state, and, roughly speaking, that the more uniformly the Schmidt coefficients are distributed the stronger is the entanglement (see for instance \cite{Convertability}). This suggests that pure states with $H_{\mathrm{min}}(\roo|B) =-\infty$ are entangled in a rather strong sense.
\begin{proof}
Let $\vert \psi \rangle = \sum_{k=1}^{\infty} r_k \vert a_k\rangle\vert b_k\rangle$ be the Schmidt decomposition of $\vert \psi \rangle$, and $\tilde{\sigma}_{B} \in \tau^+_1(\hb)$ any operator that satisfies $\id_A \otimes \siit \geq \roo$. For each $n \in \mathbb{N}$ define $|\chi_{n}\rangle = \sum_{k=1}^{n}|a_{k}\rangle\vert b_{k}\rangle$.
It follows that
\begin{equation*}
 \tr\siit \geq \langle\chi_{n}|\id_A\otimes \siit|\chi_{n}\rangle \geq \langle\chi_{n}|\roo|\chi_{n}\rangle = \big(\sum_{k=1}^{n}r_{k}\big)^{2},
\end{equation*}
and thus, by taking the infimum over all  $\tilde{\sigma}_{B}$ with $\id_A \otimes \siit \geq \roo$, as well as the supremum over all $n$, we find $\Lambda(\rho_{AB}|B)  \geq    (\tr\sqrt{\rho_A})^2$. Especially, we see that if $\tr\sqrt{\rho_A} = +\infty$ then $\Lambda(\roo|B) = +\infty$ (and thus $H_{\mathrm{min}}(\roo|B)=-\infty$).
In the following we assume that $\tr\sqrt{\rho_A}< +\infty$, i.e., $\sqrt{\rho_A} \in \tau^+_1(\ha)$. We show that the lower bound $\Lambda(\roo|B) \geq  (\tr\sqrt{\rho_A})^2$ is attained, by proving that $\siit := \tr(\sqrt{\rho_A})\sqrt{\rho_B}$ satisfies $id_A \otimes \siit \geq \roo$.
By using the Schmidt decomposition of $\psi$ we compute for an arbitrary $\eta \in \hahb$
\begin{align*}
 \langle \eta \vert ( \id \otimes\siit-\roo) \vert \eta\rangle
 = & \tr(\sqrt{\rho_A})\sum_{k,l=1}^{\infty}|c_{k,l}|^{2}r_{l} -\Big|\sum_{k=1}^{\infty}c_{k,k}r_{k}\Big|^{2}  \\
 \geq &  \sum_{l=1}^{\infty}r_l \sum_{k=1}^{\infty}|c_{k,k}|^{2}r_{k} - \Big|\sum_{k=1}^{\infty}c_{k,k}r_{k}\Big|^{2} \geq 0 ,
\end{align*}
where $c_{k,l} = (\langle a_k\vert\langle b_l\vert)\vert \eta \rangle$, and the last step follows from the Cauchy-Schwarz inequality. Hence, $\id_A \otimes \siit - \roo$ is positive and therefore $  \tr(\siit) \geq \Lambda(\roo|B)$. Combined with $\Lambda(\roo|B) \geq  (\tr\sqrt{\rho_A})^2$, we find $H_{\mathrm{min}}(\roo|B) = -\log \Lambda(\roo|B)=- 2 \log\tr\sqrt{\rho_A}$.
\end{proof}

The duality (\ref{def,eq3:min/max-entropy}) allows us to rewrite Lemma~\ref{cor:min-entropy for purestates} by using the unconditional max-entropy. For every $\rho \in \mc{S}(H)$ this yields the quantum $1/2$-R\'enyi entropy (cf. \cite{OperationalMeaning}),
\begin{equation}\label{cor,eq1:unconditioned max-entropy}
 H_{\mathrm{max}}(\rho) = 2\log\tr\sqrt{\rho} = H_{\frac{1}{2}}(\rho),
\end{equation}
if $\sqrt{\rho}$ is trace-class. Otherwise $H_{\mathrm{max}}(\rho)=+\infty$.

The unconditional min-entropy is obtained by conditioning on a trivial subsystem $B$. One can see that
\begin{equation}\label{cor,eq1:unconditioned min-entropy}
H_{\mathrm{min}}(\rho) = -\log \Vert \rho\Vert .
\end{equation}
For a pure state $\roo = \vert \psi \rangle \langle \psi \vert\in \mc{S}(\hahb)$, the max-entropy is given by
\begin{equation}\label{cor,eq1:max-entropy of pure states}
 H_{\mathrm{max}}(\roo|B) = \log \Vert \rho_A\Vert.
\end{equation}
To see this one can apply the duality (\ref{def,eq3:min/max-entropy}) where we purify the pure state $\roo$ with a trivial system $C$,  and next use Eq.~(\ref{cor,eq1:unconditioned min-entropy}).

By combining these facts with the data processing inequality, $H_{\mathrm{min}}(\rho_{ABC}|BC) \leq \HminB \leq H_{\mathrm{min}}(\rho_A)$ and $H_{\mathrm{max}}(\rho_{ABC}|BC) \leq \HmaxB \leq H_{\mathrm{max}}(\rho_A)$, for $\rho_{ABC}$ a purification of $\roo$, we find the following bounds on the min- and max-entropy.
\begin{p}\label{cor:upper bound for min/max-entropy}
For every state $\roo \in S(\hahb)$ it holds that
\begin{eqnarray}
 -2\log\tr\sqrt{\rho_A} \leq &\HminB  & \leq -\log \Vert \rho_A\Vert,   \label{cor,eq1:upper bound for min-entropy} \\
  \log \Vert \rho_A\Vert  \leq & \HmaxB & \leq 2\log\tr\sqrt{\rho_A}.   \label{cor,eq2:upper bound for max-entropy}
\end{eqnarray}
Hence, $\HminB$ and $\HmaxB$ are finite if $\sqrt{\rho_A}$ is trace-class.
\end{p}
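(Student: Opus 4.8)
The plan is to derive all four inequalities by sandwiching $\HminB$ and $\HmaxB$ between entropies of a purification of $\roo$ and entropies of the reduced state $\rho_A$, using the data processing inequality (Proposition~\ref{p:strsubadditivity}) together with the explicit values of the min-/max-entropy for pure states and in the unconditioned case collected above.

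First I would fix a purification $\rho_{ABC}$ of $\roo$, so that $\rho_{ABC}$ is a pure state with $(\rho_{ABC})_A=\rho_A$. Applying the data processing inequality for the min-entropy, Eq.~(\ref{p,eq2:strong subadd of Hmin}), once with the conditioning system enlarged from $B$ to $BC$ gives $\HminBC \le \HminB$, and once with $B$ replaced by the trivial system gives $\HminB \le H_{\mathrm{min}}(\rho_A)$ (formally, the bipartite inequality is the special case of Proposition~\ref{p:strsubadditivity} in which one of the three tensor factors is one-dimensional, with $\rho_{AB}=\tr_C\rho_{ABC}$ equal to the original $\roo$). By Lemma~\ref{cor:min-entropy for purestates}, applied with the composite system $BC$ in the role of the conditioning system, $\HminBC = -2\log\tr\sqrt{\rho_A}$, while Eq.~(\ref{cor,eq1:unconditioned min-entropy}) gives $H_{\mathrm{min}}(\rho_A) = -\log\Vert\rho_A\Vert$. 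Chaining these yields Eq.~(\ref{cor,eq1:upper bound for min-entropy}).

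The max-entropy bound Eq.~(\ref{cor,eq2:upper bound for max-entropy}) is obtained in exactly the same fashion, now using the data processing inequality for the max-entropy, Eq.~(\ref{p:strong subadd of Hmax}), to get $\HmaxBC \le \HmaxB \le H_{\mathrm{max}}(\rho_A)$; here the lower end is evaluated via Eq.~(\ref{cor,eq1:max-entropy of pure states}) (again with $BC$ as the conditioning system, $\rho_{ABC}$ pure) as $\HmaxBC = \log\Vert\rho_A\Vert$, and the upper end via Eq.~(\ref{cor,eq1:unconditioned max-entropy}) as $H_{\mathrm{max}}(\rho_A) = 2\log\tr\sqrt{\rho_A}$. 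Alternatively, Eq.~(\ref{cor,eq2:upper bound for max-entropy}) can be deduced from Eq.~(\ref{cor,eq1:upper bound for min-entropy}) applied to $\rho_{AC}$ through the duality~(\ref{def,eq3:min/max-entropy}), since $(\rho_{AC})_A=\rho_A$. For the final assertion, note that $\rho_A$ is a nonzero state, so $0<\Vert\rho_A\Vert\le 1$ and hence $-\log\Vert\rho_A\Vert\in[0,\infty)$; and if $\sqrt{\rho_A}$ is trace class then $1=\tr\rho_A\le\tr\sqrt{\rho_A}<\infty$, so $2\log\tr\sqrt{\rho_A}\in[0,\infty)$. Thus in that case both $\HminB$ and $\HmaxB$ lie between two finite numbers and are therefore finite.

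As for obstacles, there is essentially nothing delicate once Proposition~\ref{p:strsubadditivity}, Lemma~\ref{cor:min-entropy for purestates}, Eq.~(\ref{cor,eq1:max-entropy of pure states}) and the unconditioned formulas are in hand; the only points needing a little care are the bookkeeping of which subsystem plays which role in the data processing inequality (in particular, recognizing $\HminB \le H_{\mathrm{min}}(\rho_A)$ and $\HmaxB\le H_{\mathrm{max}}(\rho_A)$ as trivial-factor instances of the tripartite statement), the existence of a purification on a separable Hilbert space (which is used throughout the paper anyway), and keeping track of the possible values $\pm\infty$ so that the chained inequalities remain meaningful even when $\sqrt{\rho_A}$ fails to be trace class.
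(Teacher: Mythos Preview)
Your proposal is correct and follows essentially the same route as the paper: sandwich $\HminB$ and $\HmaxB$ via the data processing inequality between the entropy of a purification $\rho_{ABC}$ conditioned on $BC$ and the unconditioned entropy of $\rho_A$, then evaluate the endpoints using Lemma~\ref{cor:min-entropy for purestates}, Eq.~(\ref{cor,eq1:unconditioned min-entropy}), Eq.~(\ref{cor,eq1:max-entropy of pure states}), and Eq.~(\ref{cor,eq1:unconditioned max-entropy}). Your treatment is in fact a bit more explicit than the paper's one-line justification, particularly regarding the trivial-factor instance of the data processing inequality and the finiteness claim.
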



\section{\label{chapter:operational interpretation}Operational Interpretations of Min- and Max-Entropy}

Min- and max-entropy can be regarded as answers to operational questions, i.e., they quantify the optimal solution to certain information-theoretic tasks.
Max-entropy $\HmaxB$ answers the question of how distinguishable $\roo$ is from states that are maximally mixed on A, while uncorrelated with B  \cite{OperationalMeaning} (see also Definition~\ref{def:decoupling accuracy} below).
This is a useful concept, e.g., in quantum key distribution, where one ideally would have a maximally random key uncorrelated with the eavesdropper's state. Thus, the above distinguishability quantifies how well this is achieved.
Min-entropy $\HminB$ is related to the question of how close one can bring the state $\roo$ to a maximally entangled state on the bipartite system AB, allowing only local quantum operations on the B system \cite{OperationalMeaning}.
In the special case that A is classical (i.e., we have a classical-quantum state, see Eq.~(\ref{eq:cq state}) below) one finds that $\HminB$ is related to the guessing probability, i.e., our best chance to correctly guess the value of the classical system A, given the quantum system B.
In the following sections we show that these results can be generalized to the case that $H_{B}$ is  infinite-dimensional. These generalizations are for instance crucial in cryptographic settings, where there is a priori no reason to expect an eavesdropper to be limited to a finite-dimensional Hilbert space, while it is reasonable to assume the key to be finite. The operational interpretations of the min- and max-entropy exhibit a direct dependence on the dimension of the A system, which is why a naive generalization to an infinite-dimensional A appears challenging, and will not be considered here.


\subsection{ \label{section:oper interp of max entropy} Max-entropy as decoupling accuracy}

To define decoupling accuracy we use fidelity $F(\rho,\sigma) := \Vert \sqrt{\rho}\sqrt{\sigma}\Vert_1$ as a distance measure between states.
\begin{de}\label{def:decoupling accuracy}
For a finite-dimensional Hilbert space $\ha$ and an arbitrary separable Hilbert space $\hb$, we define the decoupling accuracy of $\rho_{AB} \in \tau_1^+(\hahb)$ w.r.t. the system B as
\begin{equation} \label{def,eq1:decoupling accuracy}
d(\roo|B) := \sup_{\sii \in \mc{S}(\hb)} d_A F( \roo , \tau_A \otimes \sii )^2 .
\end{equation}
Here, $d_A$ is the dimension of $\ha$, and $\tau_A := d_A^{-1} \id_A$ is the maximally mixed state on A.
\end{de}

Note that in infinite-dimensional Hilbert spaces there is no trace class operator which can be regarded as a generalization of the maximally mixed state in finite dimensions.  We must thus require system A to be finite-dimensional in order to keep the decoupling accuracy well-defined.
In \cite{OperationalMeaning}, Proposition~\ref{prop:operational interp of max entropy} was proved in the case where $H_B$ is assumed to be finite-dimensional. Below we use Proposition~\ref{p:reduction of Hmin to finite dim} to extend the assertion to the infinite-dimensional case.
\begin{p}\label{prop:operational interp of max entropy}
Let $\ha$ be a finite-dimensional and $\hb$ a separable Hilbert space. It follows that
\begin{equation} \label{prop,eq1:operational interp of max entropy}
 d(\roo|B) = 2^{H_{\mathrm{max}}(\roo|B)},
\end{equation}
for each $\roo \in \tau_1^+(\hahb)$.
\end{p}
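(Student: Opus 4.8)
The plan is to reduce the infinite-dimensional statement to the finite-dimensional result of \cite{OperationalMeaning} via the approximation scheme of Proposition~\ref{p:reduction of Hmin to finite dim}, applied on the $B$ side only (the $A$ side is already finite-dimensional, so we take $P_k^A = \id_A$ for all $k$). Writing $\roo^k := (\id_A \otimes P_k^B)\roo(\id_A \otimes P_k^B)$ for a generator of projected states $(\id_A, P_k^B)$, Eq.~(\ref{p,eq3:reduction of Hmin to finite dim}) gives $H_{\mathrm{max}}(\roo|B) = \lim_{k\to\infty} H_{\mathrm{max}}(\roo^k|B_k)$. Since each $\roo^k$ effectively lives on the finite-dimensional space $\ha \otimes U_k^B$, the finite-dimensional version of the proposition applies and yields $2^{H_{\mathrm{max}}(\roo^k|B_k)} = d(\roo^k|B_k) = \sup_{\sii \in \mc{S}(U_k^B)} d_A F(\roo^k, \tau_A\otimes\sii)^2$. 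So it suffices to show $d(\roo|B) = \lim_{k\to\infty} d(\roo^k|B_k)$, i.e.\ that the decoupling accuracy is continuous along the projected sequence.

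First I would handle the ``$\geq$'' direction (continuity of the sup is lower semicontinuous-ish here): given $\sii \in \mc{S}(U_k^B) \subset \mc{S}(\hb)$, it is a candidate in the sup defining $d(\roo|B)$, but $F(\roo^k, \tau_A\otimes\sii)$ need not equal $F(\roo,\tau_A\otimes\sii)$. Instead I would argue the other way: for the ``$\leq$'' direction, fix $\sii \in \mc{S}(\hb)$ nearly optimal for $d(\roo|B)$, set $\sii^k := \tr(P_k^B\sii P_k^B)^{-1} P_k^B \sii P_k^B \in \mc{S}(U_k^B)$, and use that $P_k^B \to \id$ strongly together with the trace-norm convergence $\roo^k \to \roo$ (Corollary~\ref{dfnbkl}) to show $F(\roo^k, \tau_A\otimes\sii^k) \to F(\roo,\tau_A\otimes\sii)$; continuity of fidelity in trace norm (which for bounded sequences follows from $|\,F(\rho,\sigma)-F(\rho',\sigma')\,| \le \sqrt{\|\rho-\rho'\|_1} + \sqrt{\|\sigma-\sigma'\|_1}$, or a direct estimate) gives $\liminf_k d(\roo^k|B_k) \ge d_A F(\roo,\tau_A\otimes\sii)^2$, hence $\liminf_k d(\roo^k|B_k) \ge d(\roo|B)$. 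For the reverse inequality, $d_A F(\roo^k,\tau_A\otimes\sii)^2 \le d_A F(\roo,\tau_A\otimes\sii)^2 + (\text{error})$ with the error controlled uniformly in $\sii$ by $\sqrt{\|\roo^k-\roo\|_1}$; taking the sup over $\sii\in\mc{S}(U_k^B)\subseteq\mc{S}(\hb)$ yields $\limsup_k d(\roo^k|B_k) \le d(\roo|B)$. Combining, $d(\roo^k|B_k)\to d(\roo|B)$, and chaining the equalities closes the argument.

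The main obstacle I anticipate is the continuity of the fidelity term $F(\roo^k,\tau_A\otimes\sii^k)$ uniformly enough in $\sii$ to pass to the supremum — in particular justifying the interchange of the limit $k\to\infty$ with the supremum over the (growing) sets $\mc{S}(U_k^B)$. The clean way is to prove the two one-sided bounds separately as above, each of which only needs the uniform trace-norm estimate $\|\roo^k-\roo\|_1 \to 0$ and the elementary Fuchs--van de Graaf-type bound relating fidelity differences to trace-norm differences; no uniformity over $\sii$ beyond that is required. A minor technical point worth checking is that $\tr(P_k^B\sii P_k^B) \to 1$, so that the normalization of $\sii^k$ is harmless for large $k$ — this follows again from strong convergence $P_k^B\to\id$ and $\sii\in\tau_1^+(\hb)$. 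One should also note the edge case $H_{\mathrm{max}}(\roo|B) = +\infty$ (which by Proposition~\ref{cor:upper bound for min/max-entropy} can only occur if $\sqrt{\rho_A}$ is not trace class — impossible here since $\ha$ is finite-dimensional), so in fact both sides are automatically finite and the limits are genuine real-number limits, which simplifies the bookkeeping.
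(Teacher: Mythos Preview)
Your proposal is correct and shares the paper's overall strategy: use the generator $(\id_A,P_k^B)$, invoke the finite-dimensional result $d(\roo^k|B_k)=2^{H_{\mathrm{max}}(\roo^k|B_k)}$, and combine with Proposition~\ref{p:reduction of Hmin to finite dim} so that everything reduces to showing $d(\roo^k|B_k)\to d(\roo|B)$. Where you diverge from the paper is in how you establish this last convergence. The paper proves the two one-sided bounds by structural means: for $d(\roo|B)\le \lim_k d(\roo^k|B_k)$ it builds an explicit TPCPM $\mc{E}_k(\eta)=P_k\eta P_k + q_k(\eta)\,|\theta_k\rangle\langle\theta_k|$ and uses monotonicity of fidelity under channels; for the reverse it invokes Uhlmann's theorem and compares purifications $|\psi_{ABC}\rangle$ and $P_k|\psi_{ABC}\rangle$. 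Your route is more analytic: a single uniform continuity estimate of the form $|F(\rho,\sigma)-F(\rho',\sigma)|\le \|\sqrt{\rho}-\sqrt{\rho'}\|_2\sqrt{\tr\sigma}\le \sqrt{\|\rho-\rho'\|_1}$ (Powers--St{\o}rmer), which is indeed uniform in $\sigma\in\mc{S}(\hb)$ and handles the $\limsup$ direction in one stroke; for the $\liminf$ you project the near-optimizer and use joint trace-norm continuity of $F$. Both arguments are sound; yours is shorter and avoids constructing auxiliary channels or purifications, while the paper's has the virtue of staying within the ``operational'' toolkit (channel monotonicity, Uhlmann) that motivates the whole section. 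One small remark: your Fuchs--van de Graaf-type bound as stated needs the Powers--St{\o}rmer inequality in infinite dimensions, which does hold for positive trace-class operators, but you should cite it explicitly rather than leave it as an ``elementary'' bound.
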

In the following we will need to consider physical operations (channels) on states, i.e., trace preserving completely positive maps \cite{Kraus}. By $\TPCPM(\ha,\hb)$ we denote the set of all trace preserving completely positive maps $\mc{E}: \tau_1(H_A) \rightarrow \tau_1(H_B)$. Let $\mathcal{I}$ denote the identity map.

\begin{proof}
Let us take projected states $\roo^k$ relative to a generator of the form $(\id_A,P^B_k)$ (this is a proper generator since $\dim H_A<\infty$). Denote the space onto which $P^B_k$ projects by $U_k^B$ and set  $P_k := \id_A \otimes P_k^B$.
The finite-dimensional version of Proposition~\ref{prop:operational interp of max entropy} together with Proposition~\ref{p:reduction of Hmin to finite dim} yield $d(\roo^k|B_k)= 2^{H_{\mathrm{max}}(\roo^k|B_k)}\rightarrow 2^{H_{\mathrm{max}}(\roo|B)}$, as $k\rightarrow\infty$.

In order to prove $d(\roo|B) \leq 2^{H_{\mathrm{max}}(\roo|B)}$ we construct a suitable TPCPM and use the fact that the fidelity can only increase under its action \cite{NielsenChuang}.
For each $k \in \mathbb{N}$ choose a normalized state $\vert \theta_k \rangle \in \hahb$ such that $P_k\vert \theta_k \rangle =0$. We define a channel $\mc{E}_k \in \TPCPM(\hahb,\hahb)$ as
$\mc{E}_k(\eta) := P_k \eta P_k + q_k(\eta) \vert \theta_k \rangle \langle \theta_k \vert$,
with $q_k(\eta) :=\tr[ \eta (\id-P_k)]$. Then, for all $\sii \in \mc{S}(\hb)$ we find
\begin{align*}
F(\roo, \tau_{A} \otimes \sii) & \leq F\big( \mc{E}_k(\roo) , \mc{E}_k(\tau_{A} \otimes \sii) \big)\\
& = \big\Vert  \sqrt{\roo^k}\sqrt{\tau_{A} \otimes \sii^k} + \sqrt{q_k(\roo)q_k(\tau_{A} \otimes \sii)} \vert \theta_k \rangle \langle \theta_k \vert \, \big\Vert_1  \\
 & \leq \big\Vert \sqrt{\roo^k}\sqrt{\tau_{A} \otimes \sii^k} \big\Vert_1 + \sqrt{q_k(\roo)} = F(\roo^k , \tau_{A} \otimes \sii^k ) + \sqrt{q_k(\roo)},
\end{align*}
where $\sii^k := P_k^B \sii P_k^B$. The second line is due to the fact that $\vert \theta_k \rangle $ is  orthogonal to the support of both  $\roo^k$ and $\tau_{A} \otimes \sii^k$. The last line follows by the triangle inequality and $q_k(\tau_{A} \otimes \sii) \leq 1$. By taking the supremum over all $\sii \in \mc{S}(\hb)$ we obtain
\begin{equation*}
 \sqrt{d(\roo|B)} \leq \sqrt{d(\roo^k|B_k)} +  \sqrt{d_A \tr[ \roo (\id-P_k)]} \rightarrow 2^{\frac{1}{2} H_{\mathrm{max}}(\roo|B)},
\end{equation*}
as $k\rightarrow\infty$.
It remains to show $d(\roo|B) \geq 2^{H_{\mathrm{max}}(\roo|B)}$. For this purpose we use that the fidelity can be reformulated as
\begin{equation}\label{eq:Uhlmann}
F(\rho, \sigma) = \sup_{\vert \phi \rangle}  F(\vert \psi\rangle, \vert \phi \rangle),
\end{equation}
where $\vert \psi\rangle$ is a purification of $\rho$, and the supremum is taken over all purifications $\vert \phi \rangle$ of $\sigma$ \cite{Uhlmann1}.
Let us fix an arbitrary $k\in \mathbb{N}$ and a $\sii \in \mc{S}(\hb)$. Assume $\vert \psi_{ABC} \rangle$ to be a purification of $\roo$, and note that $\vert \psi^k_{ABC} \rangle := \tilde{P}_k\vert \psi_{ABC} \rangle$, with $\tilde{P}_k=P_k \otimes \id_C$, is a purification of $\roo^k$.
Let $\vert \phi \rangle\in H_{A}\otimes H_{B}\otimes H_{C}$ be an arbitrary purification of  $\tau_{A} \otimes \sii$. According to  (\ref{eq:Uhlmann}) it follows that
\begin{align*}
F(\roo , \tau_{A} \otimes \sii) & \geq F(\vert \psi_{ABC} \rangle ,\vert \phi \rangle )= | \langle  \psi_{ABC} \vert \phi \rangle | \\
& = | \langle  \psi_{ABC} \vert \tilde{P}_k \vert \phi \rangle + \langle  \psi_{ABC} \vert \id - \tilde{P}_k \vert \phi \rangle | \\
& \geq  | \langle  \psi^k_{ABC} \vert \phi \rangle | -  \Vert (\id-\tilde{P}_k )|\psi_{ABC}\rangle\Vert,
\end{align*}
where the last line is obtained by the reverse triangle inequality and the Cauchy-Schwarz inequality. By taking the supremum over all the purifications $|\phi\rangle$ of $\tau_{A} \otimes \sii$ in the above inequality, Eq.~(\ref{eq:Uhlmann}) yields $F(\roo , \tau_{A} \otimes \sii)  \geq F(\roo^k , \tau_{A} \otimes \sii)  -  \Vert (\id-\tilde{P}_k )|\psi_{ABC}\rangle\Vert$.
As this holds for all $\sii \in \mc{S}(\hb)$ and all $k$, we obtain with the definition of the decoupling accuracy:
\begin{equation*}
 d(\roo|B) \geq \lim_{k \rightarrow \infty} \Big(\sqrt{d(\roo^k|B_k)} -  \sqrt{d_A} \ \Vert (\id-\tilde{P}_k )|\psi_{ABC}\rangle\Vert \Big)^2 = 2^{H_{\mathrm{max}}(\roo|B)}.
 \end{equation*}

\end{proof}


\subsection{\label{section:Min-entropy is maximum achievable quantum correlation} Min-entropy as maximum achievable quantum correlation}

Assume a bipartite quantum system consisting of a finite-dimensional A system and an arbitrary B system. We can then define a maximally entangled state between the A and B system as
\begin{equation}\label{max entangled state}
\vert \Psi_{AB} \rangle := \frac{1}{\sqrt{d_A}}\sum_{k=1}^{d_A} \vert a_k \rangle\vert b_k \rangle.
\end{equation}
Here, $d_A$ denotes the dimension of $\ha$, $\{a_k\}_{k=1}^{d_A}$ an arbitrary orthonormal basis of $H_A$ and $\{b_k\}_{k=1}^{d_A}$ an arbitrary orthonormal system in $H_B$, where we assume that $\dim(H_A) \leq \dim(H_B)$.

\begin{de}\label{def:quantum correlation}
For $\ha$ a finite-dimensional and $\hb$ a separable Hilbert space (with $\dim H_A \leq \dim H_B$), we define the quantum correlation of a state $\rho_{AB} \in \mc{S}(\hahb)$ relative to B as
\begin{equation}\label{def,eq1:quantum correlation}
q(\roo|B) := \sup_{\mc{E}} d_A F\big(( \mathcal{I}_A \otimes \mc{E})\roo , \mcor\big)^2 ,
\end{equation}
where the supremum is taken over all $\mc{E}$ in TPCPM($H_B$,$H_B$), and $\vert \Psi_{AB}\rangle$ is given by (\ref{max entangled state}).
\end{de}
Due to the invariance of the fidelity under unitaries \cite{NielsenChuang}, the definition of $q(\rho_{AB}|B)$ is independent of the choice of the maximally entangled state $\vert \Psi_{AB} \rangle$. The quantum correlation can be rewritten as
\begin{equation}\label{eq:quantum correlation equiv expression}
q(\roo|B) = \sup_{\mc{E}} d_A \langle \Psi_{AB} \vert (\mathcal{I}_A \otimes \mc{E})\roo \vert \Psi_{AB} \rangle.
\end{equation}
The min-entropy is directly linked to the quantum correlation as shown in \cite{OperationalMeaning} for the finite-dimensional case. We extend this result to a B system with a separable Hilbert space.
\begin{p}\label{prop:oper interpr of min entropy}
Let $H_A$ be a finite-dimensional and $H_B$ be a separable Hilbert space. It follows that
\begin{equation}
q(\roo|B) = 2^{-H_{\mathrm{min}}(\roo|B)}  ,
\end{equation}
for each $\roo \in \mc{S}(\hahb)$.
\end{p}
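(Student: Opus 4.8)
The plan is to prove the two inequalities $q(\roo|B)\le 2^{-\HminB}$ and $q(\roo|B)\ge 2^{-\HminB}$ separately. Since $\ha$ is finite-dimensional, Proposition~\ref{cor:upper bound for min/max-entropy} guarantees that $\HminB$ is finite, so the set of $\siit\in\tau_1^+(\hb)$ with $\id_A\otimes\siit\ge\roo$ is nonempty and $\Lambda(\roo|B)=2^{-\HminB}$ equals the infimum of $\tr\siit$ over this set, cf.~Eq.~(\ref{lambdadef2}). I would work throughout with the reformulation $q(\roo|B)=\sup_{\mc{E}}d_A\langle\Psi_{AB}|(\mathcal{I}_A\otimes\mc{E})\roo|\Psi_{AB}\rangle$ of Eq.~(\ref{eq:quantum correlation equiv expression}), where the supremum runs over $\mc{E}\in\TPCPM(\hb,\hb)$.

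For the upper bound, which requires neither Proposition~\ref{p:reduction of Hmin to finite dim} nor the finite-dimensional result, I would fix an arbitrary $\mc{E}\in\TPCPM(\hb,\hb)$ and an arbitrary $\siit\in\tau_1^+(\hb)$ with $\id_A\otimes\siit\ge\roo$. Because $\ha$ is finite-dimensional, complete positivity of $\mc{E}$ makes $\mathcal{I}_A\otimes\mc{E}$ a positive map, so applying it to $\id_A\otimes\siit-\roo\ge 0$ gives $\id_A\otimes\mc{E}(\siit)\ge(\mathcal{I}_A\otimes\mc{E})\roo$. Taking the expectation value in $|\Psi_{AB}\rangle$ of both sides, and using that $d_A\langle\Psi_{AB}|\id_A\otimes\omega|\Psi_{AB}\rangle=\sum_{k=1}^{d_A}\langle b_k|\omega|b_k\rangle\le\tr\omega$ for every $\omega\in\tau_1^+(\hb)$ together with $\tr\mc{E}(\siit)=\tr\siit$, yields $d_A\langle\Psi_{AB}|(\mathcal{I}_A\otimes\mc{E})\roo|\Psi_{AB}\rangle\le\tr\siit$. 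Minimizing over $\siit$ and then maximizing over $\mc{E}$ gives $q(\roo|B)\le\Lambda(\roo|B)=2^{-\HminB}$.

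For the lower bound I would invoke Proposition~\ref{p:reduction of Hmin to finite dim} together with the finite-dimensional version of the present proposition from \cite{OperationalMeaning}. Choose a generator of the form $(\id_A,P_k^B)$, which is admissible since $\dim\ha<\infty$, and let $\roo^k$ and $\roh^k_{AB}$ be the projected and normalized projected states. Fixing $k_0$ with $\dim U_{k_0}^B\ge d_A$, one may take the orthonormal system defining $|\Psi_{AB}\rangle$ inside $U_{k_0}^B$, so that $|\Psi_{AB}\rangle$ is a legitimate maximally entangled vector for every truncated system $\ha\otimes U_k^B$ with $k\ge k_0$; this is harmless since $q$ is independent of that choice. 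The finite-dimensional result then gives $q(\roh^k_{AB}|B_k)=2^{-H_{\mathrm{min}}(\roh^k_{AB}|B_k)}$, and Eq.~(\ref{p,eq2:reduction of Hmin to finite dim}) gives $H_{\mathrm{min}}(\roh^k_{AB}|B_k)\to\HminB$, hence $q(\roh^k_{AB}|B_k)\to 2^{-\HminB}$. Now let $\mc{E}_k\in\TPCPM(U_k^B,U_k^B)$ attain this value (the supremum is attained by compactness in finite dimensions) and extend it to $\tilde{\mc{E}}_k\in\TPCPM(\hb,\hb)$ by $\tilde{\mc{E}}_k(\eta):=\mc{E}_k(P_k^B\eta P_k^B)+\tr[(\id-P_k^B)\eta]\,\omega$, with $\omega$ a fixed state supported in $U_{k_0}^B$. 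Since $(\mathcal{I}_A\otimes\tilde{\mc{E}}_k)\roo$ equals $(\tr\roo^k)(\mathcal{I}_A\otimes\mc{E}_k)\roh^k_{AB}$ plus a positive operator, taking the expectation value in $|\Psi_{AB}\rangle$ gives $q(\roo|B)\ge d_A\langle\Psi_{AB}|(\mathcal{I}_A\otimes\tilde{\mc{E}}_k)\roo|\Psi_{AB}\rangle\ge(\tr\roo^k)\,q(\roh^k_{AB}|B_k)$, and letting $k\to\infty$, where $\tr\roo^k\to 1$, yields $q(\roo|B)\ge 2^{-\HminB}$. Combined with the upper bound this proves the claim.

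I expect the main effort to be bookkeeping rather than anything conceptual: checking that the extended channel $\tilde{\mc{E}}_k$ is genuinely completely positive and trace preserving on all of $\tau_1(\hb)$ (it is the sum of the CP compression $\eta\mapsto P_k^B\eta P_k^B$ followed by $\mc{E}_k$ and a CP ``trace-and-replace'' term, with the traces recombining to $\tr\eta$), verifying that the operator discarded in the expectation-value estimate is indeed non-negative, and arranging that a single maximally entangled vector $|\Psi_{AB}\rangle$ lies inside every sufficiently large truncated subspace. One should also keep in mind that $q$ is defined only for normalized states, which is why the argument runs through the normalized projected states $\roh^k_{AB}$ and carries the normalization factor $\tr\roo^k\to 1$ explicitly.
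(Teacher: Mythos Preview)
Your proof is correct. The lower bound $q(\roo|B)\ge 2^{-\HminB}$ follows the paper's route almost exactly: both pick an optimal channel $\mc{E}_k$ on the truncated space, extend it to a TPCPM on all of $\hb$, and use Proposition~\ref{p:reduction of Hmin to finite dim} to pass to the limit. The paper's extension is $\tilde{\mc{E}}_k(\rho)=\mc{E}_k(P_k^B\rho P_k^B)+(\id_B-P_k^B)\rho(\id_B-P_k^B)$ rather than your trace-and-replace version, and the paper works with the unnormalized projected states, but these are cosmetic differences.

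Your upper bound, however, is genuinely different and more elementary. The paper proves $q(\roo|B)\le\Lambda(\roo|B)$ by invoking a Stinespring dilation of $\mc{E}$, Uhlmann's theorem, monotonicity of the fidelity under partial trace, and then Proposition~\ref{prop:operational interp of max entropy} together with the duality $H_{\mathrm{max}}(\rho_{AC}|C)=-\HminB$. You bypass all of this: applying the completely positive map $\mathcal{I}_A\otimes\mc{E}$ directly to the operator inequality $\id_A\otimes\siit\ge\roo$ (which is legitimate since $\dim\ha<\infty$ makes $\id_A\otimes\siit$ trace class) and then evaluating in $|\Psi_{AB}\rangle$ gives the bound in one stroke. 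Your argument is self-contained and does not depend on the max-entropy interpretation having been established first; the paper's argument, on the other hand, exhibits the structural link between the two operational interpretations via duality.
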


\begin{proof}
Let $\{\roo^k\}_{k\in\mathbb{N}}$ be the projected states of $\roo$ relative to a generator of the form $(\id_A,P_k^B)$, and set $P_k := \id_A \otimes P_k^B$.
Let us denote the projection space of $P_k^B$ by $U_k^B$ and assume that $\vert b_l\rangle \in U_k^B$, $l=1,...,d_A$, for all $k$, with $\vert b_l\rangle$ as in equation (\ref{max entangled state}).
By the already proved finite-dimensional version of Proposition~\ref{prop:oper interpr of min entropy} and Proposition~\ref{p:reduction of Hmin to finite dim}, we obtain  $q(\roo^k|B_k) = \Lambda(\roo^k|B_k) \rightarrow \Lambda(\roo|B)$.

We begin to prove $\Lambda(\roo|B) \leq q(\roo|B)$. Fix $k$ and choose $\mc{E}_k \in \TPCPM(U_k^B,U_k^B)$ such that
$ q(\roo^k|B_k) =  d_A \langle \Psi_{AB} \vert (\mathcal{I}_A \otimes \mc{E}_k)\roo^k \vert \Psi_{AB} \rangle$.
Define $\tilde{\mc{E}}_k(\rho)  = \mathcal{E}_k(P_k\rho P_k) + (\id_B-P_k^B)\rho (\id_B -P_k^B)$, which is a valid quantum operation in $\TPCPM(\hb,\hb)$.
As $\tilde{\mc{E}}_k$ is just one possible TPCPM, it follows that
\begin{equation*}
q(\roo|B) \geq   d_A \langle \Psi_{AB} \vert (\mathcal{I}_A \otimes \tilde{\mc{E}}_k)\roo \vert \Psi_{AB} \rangle  \geq   q(\roo^k|B_k).
\end{equation*}
We thus find $q(\roo|B) \geq \lim_{k \rightarrow \infty} q(\roo^k|B_k) = \Lambda(\roo|B)$.

We next prove $\Lambda(\roo|B) \geq q(\roo|B)$. Let $\mc{E}$ be an arbitrary $\TPCPM(\hb,\hb)$. As a special instance of Stinespring dilations we know that there exists an ancilla $H_R$ together with an unitary $U_{BR} \in \mc{L}(\hb \otimes H_R)$ and a state $\vert \theta_R \rangle \in H_R$, such that
$\mc{E}(\sigma_B) = \tr_R[U_{BR} ( \sigma_B \otimes \vert \theta_R \rangle \langle \theta_R \vert )U^{\dagger}_{BR}]$ \cite{Kraus}. With $\vert \psi_{ABC} \rangle$ a purification of $\roo$, it follows according to (\ref{eq:Uhlmann}) that
\begin{align*}
 F \big(( \mathcal{I}_{A} \otimes \mc{E})\roo , \mcor\big)
& = \sup_{\eta_{CR}} F\big((\id \otimes U_{BR}) \vert \psi_{ABC} \rangle\vert\theta_R \rangle , \vert \Psi_{AB} \rangle\vert \eta_{CR} \rangle\big) \\
& \leq \sup_{\eta_{CR}} F\big(\rho_{AC} , \tau_A \otimes \tr_R(\vert \eta_{CR} \rangle \langle \eta_{CR} \vert)\big),
\end{align*}
where the last inequality is due to the monotonicity of fidelity under the partial trace and $\tau_A = d_A^{-1} \id_A = \tr_B(\mcor)$. The optimization over all pure states $\eta_{CR}$ can be replaced by the optimization over all density operators on $H_C$. Then, with Proposition~\ref{prop:operational interp of max entropy} it follows that
\begin{align*}
d_A F(( \mathcal{I}_{A} \otimes \mc{E})\roo , \mcor)^2  & \leq \sup_{\sigma_C} d_A F( \rho_{AC}, \tau_A \otimes \sigma_C)^2 = 2^{H_{\mathrm{max}}(\rho_{AC}|C)} \\
&  = 2^{-H_{\mathrm{min}}(\roo|B)} = \Lambda(\roo|B).
\end{align*}
Since this holds for all $\mc{E} \in \TPCPM(\hb,\hb)$, we obtain $q(\roo|B) \leq \Lambda(\roo|B)$.
\end{proof}


The quantum correlation and its relation to min-entropy applied to classical quantum states connects the min-entropy with the optimal guessing probability.
Imagine a source that produces the quantum states $\roBx \in \mc{S}(H_B)$ at random, according to the probability distribution $P_X(x)$. The average output is characterized by the classical-quantum state,
\begin{equation}\label{eq:cq state}
 \rho_{XB} = \sum_{x \in X} P_X(x) \vert x \rangle \langle x \vert \otimes \roBx,
\end{equation}
where $X$ denotes the (finite) alphabet of the classical system describing the source and $\{ \vert x \rangle \}_{x\in X}$ is an orthonormal basis spanning $H_X$. We define the guessing probability $g(\rho_{XB}|B)$ as the probability to correctly guess $x$, permitting an optimal measurement strategy on subsystem $B$. Formally, this can be expressed as
\begin{equation}\label{eq:def of guessing prob}
g(\rho_{XB}|B) := \sup_{ \{M_x\}} \sum_{x\in X}P_X(x) \tr(\roBx M_x),
\end{equation}
where the supremum is taken over all positive operator valued measures (POVM) on $H_B$. By POVM on $H_B$ we intend a set $\{M_x\}_{x \in X}$ of positive operators which sum up to the identity. For finite-dimensional $H_B$ it is known \cite{OperationalMeaning} that the guessing probability is linked to the min-entropy by
\begin{equation}\label{eq:guessin prob finite dim}
g(\rho_{XB}|B) = 2^{-H_{\mathrm{min}}(\rho_{XB}|B)}.
\end{equation}
We will now use Proposition~\ref{prop:oper interpr of min entropy} to show that Eq.~(\ref{eq:guessin prob finite dim}) also holds for separable $H_B$.

Let $\rho_{XB}$ be a state as defined in Eq.~(\ref{eq:cq state}), and construct the state $|\Psi_{XB}\rangle := |X|^{-1/2}\sum_{x\in X}|x \rangle|x_B\rangle$, where $\{|x_B\rangle\}_{x\in X}$ is an arbitrary orthonormal set in $H_B$. We now define
$Q(\rho_{XB},\mathcal{E}) :=  d_A \langle \Psi_{XB} \vert (\mathcal{I}_X \otimes \mc{E})\rho_{XB} \vert \Psi_{XB} \rangle$ (cf.  Eq.~(\ref{eq:quantum correlation equiv expression}))
 and $G(\rho_{XB}, \{M_x\}) := \sum_{x\in X}P_X(x) \tr(\roBx M_x)$ (cf. Eq.~(\ref{eq:def of guessing prob})).
Then,
\begin{equation}
\label{nadflv}
Q(\rho_{XB},\mathcal{E}) = \sum_{x\in X}P_X(x)\tr[\mathcal{E}^{*}(|x_B\rangle\langle x_B|)\rho_B^{x}],
\end{equation}
where $\mathcal{E}^{*}$ denotes the adjoint operation of $\mathcal{E}$. Let $\{M_x\}$ be an arbitrary $|X|$-element POVM on $H_B$. One can see that the TPCPM $\mathcal{E}(\rho) := \sum_{x\in X} \tr(M_x \rho) |x_B\rangle\langle x_B|$ satisfies $\mathcal{E}^{*}(|x_B\rangle\langle x_B|) = M_x$. Thus, by Eq.~(\ref{nadflv}), we find $Q(\rho_{XB},\mathcal{E}) = G(\rho_{XB}, \{M_x\})$. Since the POVM was arbitrary, it follows that $q(\rho_{XB}|B)\geq  g(\rho_{XB}|B)$.

Next, let  $\mathcal{E}$ be an arbitrary TPCPM on $H_B$. Define $P := \sum_{x\in X} |x_B\rangle\langle x_B|$ and
\begin{equation*}
M_{x} := \mathcal{E}^{*}(|x_B\rangle\langle x_B|) + \frac{1}{|X|}\mathcal{E}^{*}(\id_B-P),\quad x\in X.
\end{equation*}
One can verify that $\{M_x\}$ is a POVM on $H_B$. By using Eq.~(\ref{nadflv}) we can see that $G(\rho_{XB}, \{M_x\}) \geq Q(\rho_{XB},\mathcal{E})$. This implies $g(\rho_{XB}|B)\geq  q(\rho_{XB}|B)$, and thus $g(\rho_{XB}|B) = q(\rho_{XB}|B)$.


\section{\label{chapter:smooth entropy} Smooth Min- and Max-Entropy}

The entropic quantities that usually appear in operational settings are the smooth min- and max-entropies \cite{ChannelCodingMinEntropy,ChannelCodingMaxEntropy,OperationalMeaning}. They result from the non-smoothed versions by an optimization procedure over states close to the original state. The closeness is defined by an appropriate metric on the state space, and a smoothing parameter specifies the maximal distance to the original state. The choice of metric has varied in the literature, but here we follow  \cite{OnTheSmoothing}.

By $\mc{S}_{\leq}(H)$ we denote  the set of positive trace class operators with trace norm smaller than or equal to 1. We define the generalized fidelity on $\mc{S}_{\leq}(H)$ by $\bar{F}(\rho,\sigma):= \Vert\sqrt{\rho}\sqrt{\sigma}\Vert_{1} + \sqrt{(1-\tr\rho)(1-\tr\sigma)}$, which induces a metric on $\mc{S}_{\leq}(H)$ via
\begin{equation} \label{purified distance}
P(\rho,\sigma) := \sqrt{1 - \bar{F}(\rho,\sigma)^2},
\end{equation}
referred to as the purified distance.

\begin{de}\label{def:smooth min-/max-entropy}
For $\epsilon >0$, we define the $\epsilon$-smooth min- and max-entropy of $\roo \in \mc{S}_{\leq}(\hahb)$ conditioned on $B$ as
\begin{equation}\label{def,eq1:smooth min-/max-entropy}
\sHmin := \ssup \HminBt,
\end{equation}
\begin{equation}\label{def,eq2:smooth min-/max-entropy}
\sHmax := \sinf \HmaxBt,
\end{equation}
where the smoothing set $\mc{B}^{\epsilon}(\roo)$ is defined with respect to the purified distance
\begin{equation}\label{def,eq3:smooth min-/max-entropy}
\mc{B}^{\epsilon}(\roo) := \{ \root \in \mc{S}_{\leq}(H_A\otimes H_B)|   P(\roo,\root) \leq \epsilon \}.
\end{equation}
\end{de}

Closely related to this particular choice of smoothing set is the invariance of the smooth entropies under (partial) isometries acting locally on each of the subsystems. This can be used to show the duality relation of the smooth entropies, namely, for all states $\roo$ on $\hahb$ it follows that
\begin{equation}\label{eq:duality of smooth entropies}
\sHmin = - H_{\mathrm{max}}^{\epsilon}(\rho_{AC}|C),
\end{equation}
where $\rho_{ABC}$ is an arbitrary purification of $\roo$ on an ancilla $H_C$.
A proof for the finite-dimensional case can be found in \cite{OnTheSmoothing}, which allows a straightforward modification to infinite dimensions.

A useful property of the smooth entropies is the data processing inequality.
\begin{p}\label{p:strong subadditivity for smooth entropies}
Let be $ \rho_{ABC} \in \mc{S}_{\leq}(\hahb \otimes H_C)$, then it follows that
\begin{eqnarray*}
 H_{\mathrm{min}}^{\epsilon}(\rho_{ABC}|BC) &\leq & \sHmin,\\
 H_{\mathrm{max}}^{\epsilon}(\rho_{ABC}|BC) &\leq & \sHmax.
\end{eqnarray*}
\end{p}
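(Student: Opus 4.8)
The plan is to reduce the smooth data processing inequality to the un-smoothed one (Proposition~\ref{p:strsubadditivity}) by exploiting the monotonicity of the purified distance under trace-preserving completely positive maps, in exactly the same way as in the finite-dimensional proof. I will focus on the min-entropy; the max-entropy statement then follows either by an entirely analogous argument or by invoking the duality~(\ref{eq:duality of smooth entropies}) together with the min-entropy case.

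First I would recall (or briefly establish) that the purified distance $P$ is monotone under any $\mc{E}\in\TPCPM(\hb\otimes H_C,H_{B'})$, i.e. $P(\mc{E}(\rho),\mc{E}(\sigma))\le P(\rho,\sigma)$ for $\rho,\sigma\in\mc{S}_{\le}$. This is immediate from the monotonicity of the generalized fidelity $\bar F$ under trace-nonincreasing CP maps, which in turn follows from Uhlmann's theorem~(\ref{eq:Uhlmann}) and the contractivity of fidelity under the partial trace; all of this carries over verbatim to separable Hilbert spaces. In particular the partial trace $\tr_C$ is such a map.

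The core argument is then short. Fix $\rho_{ABC}\in\mc{S}_{\le}(\hahb\otimes H_C)$ and let $\tilde\rho_{AB}\in\mc{B}^{\epsilon}(\rho_{AB})$ be a state attaining (or approaching) the supremum in the definition of $H_{\mathrm{min}}^{\epsilon}(\rho_{AB}|B)$. The obstacle is that we need an optimizer \emph{on the $ABC$ system}, not on $AB$, so we must lift $\tilde\rho_{AB}$ back to $ABC$; this is the one place where a little care is needed, but it is handled by Uhlmann's theorem: since $P(\rho_{AB},\tilde\rho_{AB})\le\epsilon$, there exists a purification, hence (taking a partial trace over the purifying ancilla) a state $\tilde\rho_{ABC}\in\mc{S}_{\le}(\hahb\otimes H_C)$ with $\tr_C\tilde\rho_{ABC}=\tilde\rho_{AB}$ and $P(\rho_{ABC},\tilde\rho_{ABC})\le\epsilon$, i.e. $\tilde\rho_{ABC}\in\mc{B}^{\epsilon}(\rho_{ABC})$. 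Applying the un-smoothed data processing inequality~(\ref{p,eq2:strong subadd of Hmin}) to $\tilde\rho_{ABC}$ gives $H_{\mathrm{min}}(\tilde\rho_{ABC}|BC)\le H_{\mathrm{min}}(\tilde\rho_{AB}|B)$. Hence
\begin{equation*}
H_{\mathrm{min}}^{\epsilon}(\rho_{ABC}|BC)\;\ge\;H_{\mathrm{min}}(\tilde\rho_{ABC}|BC)\quad\text{is the wrong direction,}
\end{equation*}
so instead I argue the other way: take $\tilde\rho_{ABC}\in\mc{B}^{\epsilon}(\rho_{ABC})$ attaining (or approaching) the supremum defining $H_{\mathrm{min}}^{\epsilon}(\rho_{ABC}|BC)$, set $\tilde\rho_{AB}:=\tr_C\tilde\rho_{ABC}$, and note $P(\rho_{AB},\tilde\rho_{AB})\le P(\rho_{ABC},\tilde\rho_{ABC})\le\epsilon$ by monotonicity of $P$ under $\tr_C$, so $\tilde\rho_{AB}\in\mc{B}^{\epsilon}(\rho_{AB})$. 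Then
\begin{equation*}
H_{\mathrm{min}}^{\epsilon}(\rho_{ABC}|BC)=H_{\mathrm{min}}(\tilde\rho_{ABC}|BC)\le H_{\mathrm{min}}(\tilde\rho_{AB}|B)\le H_{\mathrm{min}}^{\epsilon}(\rho_{AB}|B),
\end{equation*}
which is the claim. (If the supremum is not attained, replace the equality by $\le H_{\mathrm{min}}(\tilde\rho_{ABC}|BC)+\delta$ and let $\delta\to0$.)

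For the max-entropy I would run the same computation with $H_{\mathrm{max}}$ in place of $H_{\mathrm{min}}$, using~(\ref{p:strong subadd of Hmax}) and the fact that the smooth max-entropy is defined by an \emph{infimum} over the ball — here one starts from a near-optimal $\tilde\rho_{AB}$ for $H_{\mathrm{max}}^{\epsilon}(\rho_{AB}|B)$, lifts it via Uhlmann to $\tilde\rho_{ABC}\in\mc{B}^{\epsilon}(\rho_{ABC})$ with $\tr_C\tilde\rho_{ABC}=\tilde\rho_{AB}$, and then $H_{\mathrm{max}}^{\epsilon}(\rho_{ABC}|BC)\le H_{\mathrm{max}}(\tilde\rho_{ABC}|BC)\le H_{\mathrm{max}}(\tilde\rho_{AB}|B)$, taking the infimum on the right. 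Alternatively, and more cheaply, the max-entropy inequality follows from the min-entropy one by writing $\rho_{ABC}$ via a purification on a system $D$ and applying the duality~(\ref{eq:duality of smooth entropies}). The only genuinely infinite-dimensional point to check is that Uhlmann's theorem and the contractivity of $\bar F$ hold for separable Hilbert spaces and trace-nonincreasing CP maps between possibly different spaces, which they do; everything else is formally identical to the finite-dimensional proof in~\cite{OnTheSmoothing}. I expect the lifting step (producing $\tilde\rho_{ABC}$ from $\tilde\rho_{AB}$ with the purified distance controlled) to be the only place warranting an explicit sentence of justification.
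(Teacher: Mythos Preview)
Your argument is correct and, once you discard the initial false start (the Uhlmann lifting of $\tilde\rho_{AB}$ to $ABC$, which you rightly abandon), it coincides exactly with the paper's proof: for the min-entropy one takes a near-optimizer $\tilde\rho_{ABC}\in\mc{B}^{\epsilon}(\rho_{ABC})$, uses monotonicity of the purified distance under $\tr_C$ to land in $\mc{B}^{\epsilon}(\rho_{AB})$, and applies the un-smoothed data processing inequality; for the max-entropy the paper uses precisely your second option, namely the duality~(\ref{eq:duality of smooth entropies}). The Uhlmann-lifting route you sketch for the max-entropy would also work but is not needed.
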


\begin{proof}
Using the data processing inequality for the min-entropy, Eq.~(\ref{p,eq2:strong subadd of Hmin}), we obtain
\begin{eqnarray*}
H_{\mathrm{min}}^{\epsilon}(\rho_{ABC}|BC)=\sup_{\tilde{\rho}_{ABC} \in \mc{B}^{\epsilon}(\rho_{ABC})} H_{\mathrm{min}}(\rot_{ABC}|BC) \leq
\sup_{\tilde{\rho}_{ABC} \in \mc{B}^{\epsilon}(\rho_{ABC})}  H_{\mathrm{min}}(\tr_C \rot_{ABC}|B).
\end{eqnarray*}
Thus, it is sufficient to show that $\tr_C( \ball(\rho_{ABC})) \subseteq \ball(\roo)$. But this follows directly from the fact that the purified distance does not increase under partial trace \cite{OnTheSmoothing}, i.e., $P(\rho_{ABC},\rot_{ABC})\geq P(\rho_{AB},\rot_{AB})$.

The data processing inequality of the smooth max-entropy follows from the duality (\ref{eq:duality of smooth entropies}),
\begin{equation*}
 H_{\mathrm{max}}^{\epsilon}(\rho_{ABC}|BC) = -H_{\mathrm{min}}^{\epsilon}(\rho_{AD}|D) \leq- H_{\mathrm{min}}^{\epsilon}(\rho_{ACD}|CD) = \sHmax,
\end{equation*}
where $\rho_{ABCD}$ is a purification of $\rho_{ABC}$.
\end{proof}


\section{\label{AEP} An Infinite-Dimensional Quantum Asymptotic Equipartition Property}

In the finite-dimensional case the quantum asymptotic equipartition property (AEP) says that the conditional von Neumann entropy can be regained as an asymptotic quantity from the conditional smooth min- and max-entropy \cite{RennerPhD,QuantumAEP}. (For a discussion on why the AEP can be formulated in terms of entropies, see \cite{Independent}.) More precisely, $\lim_{\epsilon \rightarrow 0} \lim_{ n \rightarrow \infty} \frac{1}{n} H_{\mathrm{min}}^{\epsilon}(\roo^{\otimes n}|B^n) =  H(\roo|B)$ and
$\lim_{\epsilon \rightarrow 0} \lim_{ n \rightarrow \infty} \frac{1}{n} H_{\mathrm{max}}^{\epsilon}(\roo^{\otimes n}|B^n) =  H(\roo|B)$.
For the infinite-dimensional case we derive an upper (lower) bound to the conditional von Neumann entropy in terms of the smooth min-(max-)entropy. We then use these bounds to prove the above limits in the case where $H_A$ is finite-dimensional.
To this end we need a well defined notion of conditional von Neumann entropy in the infinite-dimensional case. Here we use the definition introduced in \cite{Kuznetsova}, which in turn is based on an infinite-dimensional extension of the relative entropy \cite{Klein31,Lindblad73,Lindblad74,HolevoShirokov}.
For $\rho , \sigma \in \tau_1^{+}(H)$ the relative entropy can be defined as
\begin{equation}
\label{smdal}
H(\rho \Vert \sigma): = \sum_{jk} |\langle a_{j}|b_{k}\rangle|^{2}(a_j\log a_j - a_j\log b_k + b_k-a_j),
\end{equation}
where $\{|a_j\rangle\}_j$ is an arbitrary orthonormal eigenbasis of $\rho$ with corresponding eigenvalues $a_j$, and analogously for $\{|b_k\rangle\}_k$, $b_k$, and $\sigma$. The relative entropy is always positive, possibly $+\infty$, and equal to 0 if and only if $\rho=\sigma$ \cite{Lindblad73}.
For states $\roo$ with $H(\rho_A) < +\infty$, the conditional von Neumann entropy can be defined as \cite{Kuznetsova}
\begin{equation}
H(\roo|B) := H(\rho_A) -H(\rho_{AB}\Vert \rho_A\otimes \rho_B).
\end{equation}
For many applications it appears reasonable to assume $H(\rho_A)$ to be finite, e.g., in cryptographic settings it would correspond to restricting the states of the `legitimate' users.

Similarly as for the min- and max-entropy, the conditional von Neumann entropy can be approximated by projected states \cite{Kuznetsova}, i.e., for $\roo \in \mc{S}(H_A \otimes H_B)$ satisfying $H(\rho_A)<\infty$ with corresponding normalized projected states $\rooh^k$ it follows that
\begin{equation}\label{eq: limit von neumann entropy}
\lim_{k \rightarrow \infty } H(\rooh^k|B) = H(\roo|B).
\end{equation}
In the finite-dimensional case it has been shown \cite{QuantumAEP} that the min-, max- and, von Neumann entropy can be ordered as
\begin{equation}
\label{ordering}
\HminB \leq H(\roo| B) \leq \HmaxB.
\end{equation}
A direct application of Proposition~\ref{p:reduction of Hmin to finite dim} and (\ref{eq: limit von neumann entropy}) shows that this remains true in the infinite-dimensional case, if $H(\rho_A)<\infty$.
Note, however, that the ordering between min- and max-entropy (\ref{orderingHminHmax}) does not hold for their smoothed versions.

\begin{p} \label{prop:AEP lower bound}
Let $\roo  \in \mc{S}(\hahb)$ be such that $H(\rho_A)<\infty$. For any  $\epsilon > 0$ it follows that
\begin{equation}
\label{lowerbound}
\frac{1}{n} H_{\mathrm{min}}^{\epsilon}(\rho_{AB}^{\otimes n} | B^n) \geq H(\roo|B) - \frac{1}{\sqrt{n}}4\log(\eta) \sqrt{\log\frac{2}{\epsilon^2}},
\end{equation}
\begin{equation}
\label{upperbound}
\frac{1}{n} H_{\mathrm{max}}^{\epsilon}(\rho_{AB}^{\otimes n} | B^n) \leq H(\rho_{AB}|B) + \frac{1}{\sqrt{n}}4\log(\eta) \sqrt{\log\frac{2}{\epsilon^2}}.
\end{equation}
for $n \geq (8/5)\log(2/\epsilon^2)$, and
 $\eta = 2^{-\frac{1}{2}\HminB} + 2^{\frac{1}{2}\HmaxB} +1$.
\end{p}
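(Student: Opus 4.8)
The plan is to reduce the infinite-dimensional AEP bound to the known finite-dimensional one (from \cite{QuantumAEP}, which gives exactly this type of estimate with the same $\eta$-parameter for finite-dimensional systems) via the finite-dimensional approximation machinery of Proposition~\ref{p:reduction of Hmin to finite dim}, together with the projected-state convergence of the conditional von Neumann entropy, Eq.~(\ref{eq: limit von neumann entropy}). I will prove only the min-entropy bound (\ref{lowerbound}); the max-entropy bound (\ref{upperbound}) then follows by the duality of the smooth entropies, Eq.~(\ref{eq:duality of smooth entropies}), applied to a purification, exactly as in the proof of Proposition~\ref{p:strong subadditivity for smooth entropies}.

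First I would fix a generator of projected states $(P_k^A, P_k^B)$ for $\roo$ and form the normalized projected states $\rooh^k \in \mc{S}(U_k^A \otimes U_k^B)$, which live on finite-dimensional spaces. Since $H(\rho_A)<\infty$, Eq.~(\ref{eq: limit von neumann entropy}) gives $H(\rooh^k|B) \to H(\roo|B)$, and Proposition~\ref{p:reduction of Hmin to finite dim} gives $\HminBk \to \HminB$ and $\HmaxBk \to \HmaxB$ (using $\HmaxB$ finite by Proposition~\ref{cor:upper bound for min/max-entropy} when $\sqrt{\rho_A}$ is trace-class — though in general one must be slightly careful here, see the obstacle below). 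Hence the finite-dimensional parameter $\eta_k := 2^{-\frac12 \HminBk} + 2^{\frac12 \HmaxBk} + 1$ converges to $\eta$. For the tensor-power state one notes $(\rooh^k)^{\otimes n}$ is the normalized projected state of $\roo^{\otimes n}$ relative to the product generator $((P_k^A)^{\otimes n}, (P_k^B)^{\otimes n})$, and by additivity (Proposition~\ref{p:additivity}) the relevant finite-dimensional entropies of $(\rooh^k)^{\otimes n}$ behave as $n$ times those of $\rooh^k$, so the finite-dimensional AEP of \cite{QuantumAEP} applied in $U_k^A \otimes U_k^B$ yields
\begin{equation*}
\frac1n H_{\mathrm{min}}^{\epsilon}\big((\rooh^k)^{\otimes n}|B_k^n\big) \geq H(\rooh^k|B) - \frac{1}{\sqrt n}\,4\log(\eta_k)\sqrt{\log\tfrac{2}{\epsilon^2}},
\end{equation*}
for $n \geq (8/5)\log(2/\epsilon^2)$.

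The remaining step is to transfer this lower bound from $(\rooh^k)^{\otimes n}$ to $\roo^{\otimes n}$ in the limit $k\to\infty$. Here I would use that the smoothing ball is defined via the purified distance and that $(\rooh^k)^{\otimes n} \to \roo^{\otimes n}$ in trace norm (Corollary~\ref{dfnbkl}), hence in purified distance; embedding $U_k^A\otimes U_k^B \hookrightarrow H_A\otimes H_B$ isometrically and invoking the isometric invariance of the smooth min-entropy, one gets $\liminf_{k} H_{\mathrm{min}}^{\epsilon'}((\rooh^k)^{\otimes n}|B_k^n) \le H_{\mathrm{min}}^{\epsilon}(\roo^{\otimes n}|B^n)$ for any $\epsilon' < \epsilon$ (the slack absorbing the purified-distance error, which one can make vanish since the error is independent of $n$ by the product structure — actually one must be a little careful: $P((\rooh^k)^{\otimes n}, \roo^{\otimes n})$ need \emph{not} be small uniformly in $n$, so instead one should take $k\to\infty$ for each fixed $n$ and then note the error at fixed $n$ tends to $0$). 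Combining with the displayed inequality and letting $k\to\infty$, then $\epsilon' \to \epsilon$, gives (\ref{lowerbound}).

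The main obstacle is precisely this last interchange of limits and the non-uniformity in $n$ of the approximation. Two issues need care: (i) $\HmaxB$ may be $+\infty$ when $\sqrt{\rho_A}$ is not trace-class, in which case $\eta = +\infty$ and the bound (\ref{lowerbound}) is vacuous, so one may assume $\eta<\infty$ from the outset; and (ii) one cannot simply take $n\to\infty$ and $k\to\infty$ simultaneously, so the argument must be organized as: fix $\epsilon$ and $n$, let $k\to\infty$ using trace-norm convergence of the projected states at that fixed $n$ (which \emph{is} available from Corollary~\ref{dfnbkl} applied to $\roo^{\otimes n}$), to obtain the bound with the limiting constants $H(\roo|B)$ and $\eta$, and only afterwards read off the dependence on $n$. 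A cleaner alternative, which I would actually pursue, is to apply Proposition~\ref{p:reduction of Hmin to finite dim}-type reasoning directly to $\roo^{\otimes n}$ for each fixed $n$ (so that only a single limit $k\to\infty$ is taken), thereby sidestepping the uniformity question entirely; the convergence $H((\rooh^k)^{\otimes n}|B^n) \to H(\roo^{\otimes n}|B^n) = n H(\roo|B)$ needed for this follows from Eq.~(\ref{eq: limit von neumann entropy}) and additivity of the conditional von Neumann entropy for product states.
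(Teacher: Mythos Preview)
Your approach is essentially the same as the paper's: fix $n$, apply the finite-dimensional AEP bound from \cite{QuantumAEP} to the normalized projected states $(\rooh^k)^{\otimes n}$ with a slightly shrunken smoothing radius, use the purified-distance triangle inequality to compare the smooth min-entropy at the projected state to that at $\roo^{\otimes n}$, take $k\to\infty$ via Proposition~\ref{p:reduction of Hmin to finite dim} and Eq.~(\ref{eq: limit von neumann entropy}), and finally remove the slack in the smoothing parameter; the max-entropy part then follows by duality. The paper isolates the comparison step as a separate lemma (the inclusion $\mc{B}^{t\epsilon}(\rooh^k)\subseteq\mc{B}^{\epsilon}(\roo)$ for $k$ large), and note that your displayed finite-dimensional AEP inequality should be written at level $\epsilon'$ rather than $\epsilon$ so that it chains correctly with the $\epsilon'$-vs-$\epsilon$ comparison you state afterwards.
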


Note that it is not clear under what conditions the limits $n\rightarrow \infty$, $\epsilon \rightarrow 0$  exist for the left hand side of equations (\ref{lowerbound}) and (\ref{upperbound}). If they do, Proposition~\ref{prop:AEP lower bound} implies
$\lim_{\epsilon\rightarrow 0}\lim_{n\rightarrow\infty}\frac{1}{n}
H_{\mathrm{min}}^{\epsilon}(\rho_{AB}^{\otimes n} | B^n) \geq H(\roo|B)$ and
$\lim_{\epsilon\rightarrow 0}\lim_{n\rightarrow\infty}\frac{1}{n}
H_{\mathrm{max}}^{\epsilon}(\rho_{AB}^{\otimes n} | B^n) \leq H(\roo|B)$.
For the case of a finite-dimensional $H_A$ we show that these inequalities can be replaced with equalities  (Corollary~\ref{cor:AEP}).

It should be noted that in the classical case  a lower bound on the min-entropy and an upper bound on the max-entropy, analogous to Eqs.~(\ref{lowerbound}) and (\ref {upperbound}), correspond \cite{Independent} to the AEP in classical probability theory \cite{CoverThomas}. Since in the finite-dimensional quantum case, the step from Proposition~\ref{prop:AEP lower bound} to Corollary~\ref{cor:AEP} is directly obtained \cite{QuantumAEP} via Fannes' inequality \cite{Fannes}, the limits in Corollary~\ref{cor:AEP} are usually referred to as `the quantum AEP' \cite{QuantumAEP}.
In the infinite-dimensional case  the relation between Proposition~\ref{prop:AEP lower bound} and Corollary~\ref{cor:AEP} appears  less straightforward, and it is thus not entirely clear what should be regarded as constituting `the quantum AEP'.
We will not pursue this question here, but merely note that it is the inequalities in Proposition~\ref{prop:AEP lower bound}, rather than the limits in Corollary~\ref{cor:AEP}, that are the most relevant for applications \cite{RennerPhD}. However, for the sake of simplicity we continue to refer to Corollary~\ref{cor:AEP} as a quantum AEP.

We prove Proposition~\ref{prop:AEP lower bound} after the following lemma.
\begin{lem} \label{cor: eps dep smooth min-entropy}
Let $\roo \in \mc{S}(\hahb)$ and let $\{\rooh^k\}_{k=1}^{\infty}$ be a sequence of normalized projected states. For any fixed $1 >t>0$, there exists a $k_0 \in \mathbb{N}$ such that
\begin{equation}
 H_{\mathrm{min}}^{\epsilon}(\roo|B) \geq H_{\mathrm{min}}^{t\epsilon}(\rooh^k|B),\quad \forall k\geq k_0.
\end{equation}
\end{lem}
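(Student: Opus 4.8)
The plan is to reduce the claimed inequality to a set inclusion between smoothing balls. Writing out the definition of the smooth min-entropy (Eq.~(\ref{def,eq1:smooth min-/max-entropy})) gives
\begin{equation*}
H_{\mathrm{min}}^{t\epsilon}(\rooh^k|B) = \sup_{\root \in \mc{B}^{t\epsilon}(\rooh^k)} \HminBt , \qquad \sHmin = \ssup \HminBt ,
\end{equation*}
so it is enough to show that $\mc{B}^{t\epsilon}(\rooh^k) \subseteq \ball(\roo)$ for all sufficiently large $k$; monotonicity of the supremum over a larger set then yields the lemma.

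To establish the inclusion I would invoke the triangle inequality for the purified distance, which is a genuine metric on $\mc{S}_{\leq}(\hahb)$ (remark after Eq.~(\ref{purified distance})). For any $\root$ with $P(\root,\rooh^k)\leq t\epsilon$ this gives
\begin{equation*}
P(\root,\roo) \leq P(\root,\rooh^k) + P(\rooh^k,\roo) \leq t\epsilon + P(\rooh^k,\roo) ,
\end{equation*}
so the task becomes: find $k_0$ with $P(\rooh^k,\roo)\leq (1-t)\epsilon$ for all $k\geq k_0$, since then $P(\root,\roo)\leq\epsilon$, i.e., $\root\in\ball(\roo)$.

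The only genuine work is the claim $P(\rooh^k,\roo)\to 0$ as $k\to\infty$. I would deduce this from the trace-norm convergence $\Vert\rooh^k-\roo\Vert_1\to 0$ (Corollary~\ref{dfnbkl} in Appendix~\ref{section:Technical Lemmas}) together with the elementary estimate $P(\rho,\sigma)\leq\sqrt{\Vert\rho-\sigma\Vert_1}$ for normalized states, which follows from $P(\rho,\sigma)=\sqrt{(1-F(\rho,\sigma))(1+F(\rho,\sigma))}\leq\sqrt{2(1-F(\rho,\sigma))}$ and the Fuchs--van de Graaf bound $F(\rho,\sigma)\geq 1-\tfrac12\Vert\rho-\sigma\Vert_1$ (note both $\rooh^k$ and $\roo$ are normalized, so the generalized fidelity coincides with the ordinary fidelity here). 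Choosing $k_0$ large enough — and large enough that $\tr\roo^k\neq 0$, so that $\rooh^k$ is defined at all — then gives $P(\rooh^k,\roo)\leq(1-t)\epsilon$ for $k\geq k_0$, completing the inclusion and the lemma. The main obstacle, such as it is, is precisely this continuity of the purified distance under trace-norm perturbations in the infinite-dimensional setting, i.e., checking that the Fuchs--van de Graaf bound and the trace-norm convergence of the projected states carry over to separable Hilbert spaces; both are standard or already established in the paper, so no new estimate on the min-entropy itself is required.
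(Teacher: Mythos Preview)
Your proposal is correct and follows essentially the same route as the paper: reduce to the ball inclusion $\mc{B}^{t\epsilon}(\rooh^k)\subseteq\ball(\roo)$, then combine the triangle inequality for the purified distance with $P(\rooh^k,\roo)\to 0$ coming from trace-norm convergence of the projected states. The only difference is that the paper simply asserts the compatibility of the purified distance with trace-norm convergence, whereas you spell it out via the Fuchs--van de Graaf bound; this extra detail is welcome but not a different argument.
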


\begin{proof}
In the following let $t\in (0,1)$ be fixed. According to the definition of the smooth min-entropy in Eq.~(\ref{def,eq1:smooth min-/max-entropy}),
it is enough to show that $\mc{B}^{t\epsilon}(\rooh^k) \subseteq \ball(\roo)$ for all $k \geq k_0$.
Note that the purified distance is compatible with trace norm convergence, i.e.,
 $\Vert \roo-\rooh^k\Vert_1\rightarrow 0$ implies that $P(\rooh^k,\roo) \rightarrow 0$.
Hence, there exists a $k_0$ such that $P(\rooh^k,\roo) < (1-t)\epsilon$ for all $k \geq k_0$. For $k \geq k_0$ and $\root \in \mc{B}^{t\epsilon}(\rooh^k)$ we thus find $P(\root,\roo) \leq P(\root, \rooh^k) + P(\rooh^k,\roo) < \epsilon$, such that $\root \in \ball(\roo)$.
\end{proof}

\begin{proof}\emph{(Proposition~\ref{prop:AEP lower bound})}
Let   $(P_k^A, P_k^B)$ be a generator of projected states. The pair of n-fold tensor products of the projections, $\big((P_k^A)^{\otimes n},(P_k^B)^{\otimes n}\big)$, is also a generator of projected states.
If we now fix $1>t>0$ and $n \in \mathbb{N}$, it follows by Lemma~\ref{cor: eps dep smooth min-entropy} that we can find a $k_0\in \mathbb{N}$  such that $H_{\mathrm{min}}^{\epsilon}(\roo^{\otimes n}|B^n) \geq H_{\mathrm{min}}^{t\epsilon}((\rooh^k)^{\otimes n}|B^n)$ for every $k\geq k_0$.
Since Eq.~(\ref{lowerbound}) is valid for the finite-dimensional case  \cite{QuantumAEP}, we can apply it to $H_{\mathrm{min}}^{t\epsilon}((\rooh^k)^{\otimes n}|B^n)$ to obtain
\begin{equation*}
\frac{1}{n}H_{\mathrm{min}}^{t\epsilon}((\rooh^k)^{\otimes n}|B^n) \geq
H(\rooh^k|B) - \frac{1}{\sqrt{n}}4\log(\eta_k) \sqrt{\log\frac{2}{(t\epsilon)^2}}
\end{equation*}
for any $n \geq (8/5)\log(2/(t\epsilon)^2)$, and $\eta_k = 2^{-\frac{1}{2}H_{\mathrm{min}}(\rooh^k|B) } + 2^{\frac{1}{2}H_{\mathrm{max}}(\rooh^k|B)} +1$. Hence
\begin{equation}\label{pf,lem:AEP lower bound,eq4}
\frac{1}{n} H_{\mathrm{min}}^{\epsilon}(\roo^{\otimes n}|B^n)   \geq H(\rooh^k|B) - \frac{1}{\sqrt{n}}4\log(\eta_k) \sqrt{\log\frac{2}{(t\epsilon)^2}},
\end{equation}
for all $k \geq k_0$.
Since the left hand side of Eq.~(\ref{pf,lem:AEP lower bound,eq4}) is independent of $k$ we can use
(\ref{eq: limit von neumann entropy}) and Proposition~\ref{p:reduction of Hmin to finite dim}, to find
\begin{align*}
\frac{1}{n} H_{\mathrm{min}}^{\epsilon}(\roo^{\otimes n}|B^n)
& \geq \lim_{k\rightarrow \infty} \Big\{H(\rooh^k|B) - \frac{1}{\sqrt{n}}4\log(\eta_k) \sqrt{\log\frac{2}{(t\epsilon)^2}} \Big\} \\
& = H(\roo|B) - \frac{1}{\sqrt{n}}4\log(\eta) \sqrt{\log\frac{2}{(t\epsilon)^2}}.
\end{align*}
We finally take the limit $t\rightarrow 1$ in the above inequality, as well as in the condition $n \geq (8/5)\log(2/(t\epsilon)^2)$ to obtain the first part of the proposition.

For the second part we use the duality of the conditional von Neumann entropy, i.e., $H(\rho_{AB}|B)= -H(\rho_{AC}|C)$ for a purification $\rho_{ABC}$ \cite{Kuznetsova}. This, together with the duality relation for smooth min- and max-entropy (\ref{eq:duality of smooth entropies}) leads directly to (\ref{upperbound}).
\end{proof}

\begin{cor} \label{cor:AEP} Let $H_A$ be a finite-dimensional and $H_B$ a separable Hilbert space. For all $\roo  \in \mc{S}(\hahb)$ it follows that
\begin{equation}\label{cor:AEP min-entropy}
\lim_{\epsilon\rightarrow 0} \lim_{n \rightarrow \infty} \frac{1}{n} H_{\mathrm{min}}^{\epsilon}(\rho_{AB}^{\otimes n} | B^n) = H(\roo|B),
\end{equation}
\begin{equation}\label{cor:AEP max-entropy}
\lim_{\epsilon\rightarrow 0} \lim_{n \rightarrow \infty} \frac{1}{n} H_{\mathrm{max}}^{\epsilon}(\rho_{AB}^{\otimes n} | B^n) = H(\roo|B).
\end{equation}
\end{cor}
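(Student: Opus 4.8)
The plan is to deduce the equalities from the inequalities already established in Proposition~\ref{prop:AEP lower bound}, by showing that in the finite-dimensional-$H_A$ case the error terms vanish and the reverse inequalities hold as well. First I would observe that since $H_A$ is finite-dimensional, $\rho_A$ automatically has finite von Neumann entropy (bounded by $\log d_A$), so Proposition~\ref{prop:AEP lower bound} applies to $\roo^{\otimes n}$ for every $n$. Moreover, by the bounds in Proposition~\ref{cor:upper bound for min/max-entropy} together with finiteness of $d_A$, both $\HminB$ and $\HmaxB$ are finite (indeed $|\HminB|,|\HmaxB|\le \log d_A$), so the constant $\eta = 2^{-\frac12\HminB}+2^{\frac12\HmaxB}+1$ is finite. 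Consequently, taking $n\to\infty$ in (\ref{lowerbound}) and (\ref{upperbound}) gives
\begin{equation*}
\liminf_{n\to\infty}\frac1n H_{\mathrm{min}}^{\epsilon}(\roo^{\otimes n}|B^n)\ \ge\ H(\roo|B),\qquad
\limsup_{n\to\infty}\frac1n H_{\mathrm{max}}^{\epsilon}(\roo^{\otimes n}|B^n)\ \le\ H(\roo|B),
\end{equation*}
uniformly in $\epsilon$, and then $\epsilon\to 0$ preserves these.

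For the matching bounds in the other direction, I would use the ordering (\ref{ordering}), $H_{\mathrm{min}}(\sigma|B')\le H(\sigma|B')\le H_{\mathrm{max}}(\sigma|B')$, valid here since $H(\rho_A^{\otimes n})=nH(\rho_A)<\infty$. Applied to any smoothed state $\root\in\mc{B}^\epsilon(\roo^{\otimes n})$ this gives $H_{\mathrm{min}}(\root|B^n)\le H(\root|B^n)$, hence $H_{\mathrm{min}}^{\epsilon}(\roo^{\otimes n}|B^n)\le \sup_{\root}H(\root|B^n)$, and similarly $H_{\mathrm{max}}^{\epsilon}(\roo^{\otimes n}|B^n)\ge \inf_{\root}H(\root|B^n)$. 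The task then reduces to showing that, after dividing by $n$ and letting $n\to\infty$ and $\epsilon\to 0$, these suprema/infima of the conditional von Neumann entropy over the $\epsilon$-ball converge to $H(\roo|B)$. This is precisely where an infinite-dimensional replacement for Fannes' inequality is needed: I would invoke the continuity bound for the conditional von Neumann entropy valid when $H_A$ is finite-dimensional (an Alicki--Fannes / Winter-type estimate), which gives $|H(\root|B^n)-H(\roo^{\otimes n}|B^n)|\le n\delta_n$ with $\delta_n\to 0$ as the trace distance (controlled by the purified distance $\epsilon$) shrinks; since $P(\roo^{\otimes n},\root)\le\epsilon$ for all $n$, the per-copy error depends only on $\epsilon$ and not on $n$, so $\frac1n|H(\root|B^n)-nH(\roo|B)|\le \delta(\epsilon)$ with $\delta(\epsilon)\to 0$. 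Combining, $\lim_{n\to\infty}\frac1n H_{\mathrm{min}}^{\epsilon}(\roo^{\otimes n}|B^n)\le H(\roo|B)+\delta(\epsilon)$ and the analogous lower bound for max-entropy, and then $\epsilon\to 0$ closes the gap.

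Assembling the two directions yields $\lim_{\epsilon\to0}\lim_{n\to\infty}\frac1n H_{\mathrm{min}}^{\epsilon}(\roo^{\otimes n}|B^n)=H(\roo|B)$ and likewise for the max-entropy, which is the assertion of the corollary. I would also note that the duality (\ref{eq:duality of smooth entropies}) for the smooth entropies combined with the duality $H(\rho_{AB}|B)=-H(\rho_{AC}|C)$ for the conditional von Neumann entropy allows the max-entropy statement to be obtained directly from the min-entropy one, provided one first checks that purifying a state with finite-dimensional $H_A$ keeps the relevant quantities under control; this gives an alternative route to (\ref{cor:AEP max-entropy}) that avoids redoing the continuity argument.

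The main obstacle I anticipate is the continuity step: one must have a clean statement of an Alicki--Fannes-type bound for the conditional von Neumann entropy $H(\cdot|B)$ in which only the dimension of $H_A$ enters (the dimension of $H_B$ must be irrelevant), and one must control the trace distance between $\roo^{\otimes n}$ and an arbitrary element of its purified-distance $\epsilon$-ball uniformly in $n$. The uniformity is essential — a bound whose constant grows with $n$ or with $\dim H_B$ would be useless here. Everything else (finiteness of the entropies via Proposition~\ref{cor:upper bound for min/max-entropy}, the ordering (\ref{ordering}), passing to the limits) is routine once that estimate is in hand.
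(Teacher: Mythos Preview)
Your approach is correct and follows the same overall strategy as the paper: one direction comes from Proposition~\ref{prop:AEP lower bound}, the other from the ordering $H_{\mathrm{min}}\le H\le H_{\mathrm{max}}$ combined with an Alicki--Fannes-type continuity bound for $H(\cdot|B)$ depending only on $\dim H_A$, and the max-entropy statement follows by duality. The difference lies in how the continuity step is executed. You propose to invoke an infinite-dimensional Alicki--Fannes inequality directly; this is legitimate, but requires citing or establishing such a bound for separable $H_B$. The paper instead stays within its own toolkit: it passes to projected states via a generator $(\id_A,P_k^B)$, applies the \emph{finite}-dimensional Alicki--Fannes bound to the normalized projected states $\hat{\sigma}_{AB}^k$ and $\hat{\rho}_{AB}^k$, and then lets $k\to\infty$ using Proposition~\ref{p:reduction of Hmin to finite dim} and (\ref{eq: limit von neumann entropy}). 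This yields the single-copy estimate $H_{\mathrm{min}}^\epsilon(\rho_{AB}|B)\le H(\rho_{AB}|B)+16\epsilon\log d_A+4H_{\mathrm{bin}}(4\epsilon)$, which is then applied to $\rho_{AB}^{\otimes n}$. Your route is more direct; the paper's route is more self-contained and provides another illustration of the finite-dimensional reduction technique that is the theme of the paper. One minor technicality you pass over (and the paper handles explicitly) is that elements of $\mathcal{B}^\epsilon(\rho_{AB}^{\otimes n})$ are generally sub-normalized, so before applying the ordering (\ref{ordering}) or the continuity bound one should normalize and absorb the resulting $\log\tr\tilde{\rho}$ correction, which is harmless since $\tr\tilde{\rho}\to 1$ as $\epsilon\to 0$.
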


\begin{proof} Let $\epsilon >0$ be sufficiently small, and let $(\id_A,P_k^B)$ be a generator of projected states $\roo^k$, with corresponding normalized projected states $\hat{\rho}_{AB}^k$. Let $\sigma_{AB}\in \ball(\roo)$, with projected states $\sigma^k_{AB}$, and normalized projected states $\hat{\sigma}^k_{AB}$.  By $H_{\mathrm{min}}(\sigma_{AB}^k|B)=H_{\mathrm{min}}(\sih^k_{AB}|B)+\log\tr\sigma^k_{AB}$ and (\ref{ordering}) we find $H_{\mathrm{min}}(\sigma^k_{AB}|B_k)
 \leq H(\sih^k_{AB}|B)$, where $\sih^k_{AB}=(\tr\sigma^k_{AB})^{-1}\sigma^k_{AB}$. Since  $H(\sih^k_{AB}|B_k)$ is finite-dimensional we can use Fannes' inequality \cite{Fannes} to obtain (for $k$ sufficiently large)
$H(\sih^k_{AB}|B_k) \leq H(\rooh^k|B_k) + 4 \Delta_k \log d_A + 4 H_{\mathrm{bin}}(\Delta_k)$,
with $d_A=\dim(H_A)$, $\Delta_k = \Vert \rooh^k - \sih^k_{AB} \Vert_1$, and $H_{\mathrm{bin}}(t)= -t\log t - (1-t)\log(1-t)$. Due to the general relation $\Vert \rho-\sigma\Vert_1 \leq 2 P(\rho,\sigma)$ (see Lemma 6 in \cite{OnTheSmoothing}), we have $\Vert \roo-\sigma_{AB}\Vert_1 \leq 2\epsilon$ for all $\sigma_{AB}\in\ball(\roo)$,
which yields $\lim_{k\rightarrow \infty} \Delta_k = \Vert \roo -\sih_{AB} \Vert_1 \leq 4\epsilon$, where $\sih_{AB} = \sigma_{AB}/\tr(\sigma_{AB})$.  Combined with (\ref{eq: limit von neumann entropy}) this leads to $H^{\epsilon}_{\mathrm{min}}(\rho_{AB} | B) = \ssupp\lim_{k\rightarrow\infty}H_{\mathrm{min}}(\sigma^k_{AB} | B) \leq H(\roo|B) + 16\epsilon \log d_A + 4 H_{\mathrm{bin}}(4\epsilon)$. Applied to an n-fold tensor product this gives
\begin{equation}
\label{MinEntrUpperbound}
\frac{1}{n}H_{\mathrm{min}}^{\epsilon}(\roo^{\otimes n} | B^n) \leq H(\roo|B) + 16\epsilon \log d_A + \frac{4}{n} H_{\mathrm{bin}}(4\epsilon).
\end{equation}
 Equation (\ref{cor:AEP min-entropy}) follows by combining (\ref{MinEntrUpperbound}) with the lower bound in (\ref{lowerbound}), taking the limits $n\rightarrow \infty$ and $\epsilon\rightarrow 0$.
 Equation (\ref{cor:AEP max-entropy}) follows directly by the duality of the conditional von Neumann entropy \cite{Kuznetsova} together with the duality of the smooth min- and max-entropy (\ref{eq:duality of smooth entropies}).
\end{proof}


\section{\label{concl} Conclusion and Outlook}

We have extended the min- and max-entropies to separable Hilbert spaces, and shown that  properties and operational interpretations, known from the finite-dimensional case, remain valid in the infinite-dimensional setting. These extensions are facilitated by the finding (Proposition~\ref{p:reduction of Hmin to finite dim}) that the infinite-dimensional min- and max-entropies can be expressed in terms of convergent sequences of finite-dimensional entropies.
We bound the smooth min- and max-entropies of iid states (Proposition \ref{prop:AEP lower bound}) in terms of an infinite-dimensional generalization of the conditional von Neumann entropy $H(A|B)$, introduced in \cite{Kuznetsova}, which is defined when the von Neumann entropy of system $A$ is finite, $H(A)<\infty$. Under the additional assumption that the Hilbert space of system $A$ has finite dimension we furthermore prove that the smooth entropies of iid states converge to the conditional von Neumann entropy  (Corollary \ref{cor:AEP}), corresponding to a quantum asymptotic equipartition property (AEP). Whether these conditions can be relaxed is an open question. In the general case where $H(A)$ is not necessarily finite, this would however require a more general definition of the conditional von Neumann entropy than the one used here.

For information-theoretic purposes it appears reasonable to require extensions of the conditional von Neumann entropy to be compatible with the AEP, i.e., that the conditional von Neumann entropy can be regained from the smooth min- and max-entropy in the asymptotic iid limit. This enables generalizations of operational interpretations of the conditional von Neumann entropy. For example, in the finite-dimensional asymptotic case the conditional von Neumann entropy  characterizes the amount of entanglement needed for state merging \cite{StateMerging}, i.e., the transfer of a quantum state shared by two parties to only one of the parties. An infinite-dimensional generalization of one-shot state merging \cite{Single Shot State merging}, together with the AEP, could be used to extend this result to the infinite-dimensional case.

Some other immediate applications of this work are in continuous variable quantum key distribution, and in statistical mechanics, where it has recently been shown \cite{Oscar,Lidia} that the smooth min- and max-entropies play a role. Our techniques may also be employed to derive an infinite-dimensional generalization of the entropic uncertainty relation \cite{Uncertainty}. Such a generalization would be interesting partially because it could find applications in continuous variable quantum information processing, but also because it may bring this information-theoretic uncertainty relation into the same realm as the standard uncertainty relation.


\section{Acknowledgments}
We thank Roger Colbeck and Marco Tomamichel for helpful comments and discussions, and an anonymous referee for very valuable suggestions.
Fabian Furrer acknowledges support from the Graduiertenkolleg 1463 of the Leibniz University Hannover.
 We furthermore acknowledge support from the Swiss National Science Foundation (grant No. 200021-119868).

\begin{appendix}


\section{\label{section:Technical Lemmas}Technical Lemmas}

 In the following, each Hilbert space is assumed to be separable. Let us define the positive cone $\mc{L}^+(H) := \{T \in \mc{L}(H)| \ T \geq 0\}$ in $\mc{L}(H)$. The next two lemmas follow directly from the definition of positivity of an operator.
\begin{lem}\label{lem:T geq 0 implies STS geq 0}
If $T \in \mc{L}^+(H)$, then for each $S \in \mc{L}(H)$ it follows that $STS^{\dagger} \in \mc{L}^+(H)$.
\end{lem}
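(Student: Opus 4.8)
The statement to prove is Lemma~\ref{lem:T geq 0 implies STS geq 0}: if $T \in \mc{L}^+(H)$, then $STS^{\dagger} \in \mc{L}^+(H)$ for every $S \in \mc{L}(H)$. The plan is to argue directly from the definition of positivity of a bounded operator, namely that $T \geq 0$ means $\langle \psi | T | \psi \rangle \geq 0$ for all $|\psi\rangle \in H$. First I would note that $STS^{\dagger}$ is bounded, since it is a composition of bounded operators and $\Vert STS^{\dagger}\Vert \leq \Vert S\Vert^2 \Vert T\Vert < \infty$, so it indeed lies in $\mc{L}(H)$. It is also self-adjoint: $(STS^{\dagger})^{\dagger} = S T^{\dagger} S^{\dagger} = STS^{\dagger}$, using $T^{\dagger}=T$.

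The main computation is then to fix an arbitrary $|\psi\rangle \in H$, set $|\phi\rangle := S^{\dagger}|\psi\rangle \in H$, and observe that
\begin{equation*}
\langle \psi | STS^{\dagger} | \psi \rangle = \langle \phi | T | \phi \rangle \geq 0,
\end{equation*}
where the inequality is exactly the positivity of $T$ applied to the vector $|\phi\rangle$. Since $|\psi\rangle$ was arbitrary, this shows $STS^{\dagger} \geq 0$, i.e., $STS^{\dagger} \in \mc{L}^+(H)$.

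There is essentially no obstacle here; the only point requiring the slightest care is confirming that $\mc{L}^+(H)$, as defined in the paper, refers to bounded non-negative operators in $\mc{L}(H)$, so that one must check both boundedness and the quadratic-form positivity — both of which are immediate. (Alternatively, one could phrase the same argument via $T = R^{\dagger}R$ for $R = \sqrt{T}$, giving $STS^{\dagger} = (RS^{\dagger})^{\dagger}(RS^{\dagger}) \geq 0$, but the quadratic-form argument is shorter and avoids invoking the square root.)
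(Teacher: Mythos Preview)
Your proof is correct and matches the paper's approach: the paper does not spell out a proof but simply remarks that the lemma follows directly from the definition of positivity of an operator, which is precisely the quadratic-form argument you give.
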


\begin{lem} \label{lem:pos weak op limit}
The positive cone $\mc{L}^+(H)$ is sequentially closed in the weak operator topology, i.e.,
for $\{T_k\}_{k\in\mathbb{N}} \subset \mc{L}^+(H)$ such that $T_k$ converge to $T \in \mc{L}(H)$ in the weak operator topology, it follows that $T\geq 0$.
\end{lem}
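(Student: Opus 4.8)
The plan is to reduce positivity of $T$ to a statement about its diagonal matrix elements, which passes to the limit trivially. Recall that for a bounded operator on a complex Hilbert space, $T\geq 0$ holds if and only if $\langle\psi|T|\psi\rangle\geq 0$ for every $\psi\in H$; in particular this already forces $T=T^{\dagger}$, so self-adjointness of the limit need not be verified by a separate argument.

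First I would fix an arbitrary vector $\psi\in H$. Since each $T_k\in\mc{L}^+(H)$, we have $\langle\psi|T_k|\psi\rangle\geq 0$ for all $k$. Applying the definition of weak operator convergence with the choice $\chi=\psi$ (and ket $\psi$) gives $\langle\psi|T_k|\psi\rangle\to\langle\psi|T|\psi\rangle$. Because the limit of a convergent sequence of non-negative real numbers is again a non-negative real number, we conclude $\langle\psi|T|\psi\rangle\geq 0$. As $\psi$ was arbitrary, $T\geq 0$, that is, $T\in\mc{L}^+(H)$, which is the claim.

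There is essentially no obstacle here: the only point worth spelling out is that positivity of a bounded operator on a complex Hilbert space can be tested on the `diagonal' expectation values $\langle\psi|\cdot|\psi\rangle$, so one does not need to control the full sesquilinear form $\langle\chi|\cdot|\psi\rangle$ with $\chi\neq\psi$ in the limit, nor check self-adjointness of $T$ separately. (One could equivalently phrase the argument by noting that $[0,\infty)$ is closed in $\mathbb{R}$.)
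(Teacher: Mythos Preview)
Your argument is correct and is precisely what the paper has in mind: it states that the lemma ``follows directly from the definition of positivity of an operator'' and omits the proof, and your write-up simply spells out that definition together with the diagonal case $\chi=\psi$ of weak operator convergence.
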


The following lemma is a special case of a theorem by Gr\"umm \cite{Grumm} (see also \cite{Simon}, pp. 25-29,  for similar results).
\begin{lem}
\label{lemgrumm}
Let $A_{k},A \in\mc{L}(H)$, such that $\sup_{k}\Vert A_{k}\Vert<+\infty$, and $A_{k}\rightarrow A$ in the strong operator topology, and let $T\in \tau_1(H)$. Then $\lim_{k\rightarrow\infty}\Vert A_{k}T -AT\Vert_{1}=0$ and $\lim_{k\rightarrow\infty}\Vert TA_{k} -TA\Vert_{1}=0$.
\end{lem}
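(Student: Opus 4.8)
The plan is to reduce to the case $A=0$ and then exploit a singular-value (Schmidt) decomposition of the fixed trace-class operator $T$, which turns the trace-norm estimate into a dominated-convergence argument over a summable family.

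First I would set $B_{k}:=A_{k}-A$. Then $M:=\sup_{k}\Vert B_{k}\Vert\leq\sup_{k}\Vert A_{k}\Vert+\Vert A\Vert<\infty$, and $B_{k}\to 0$ in the strong operator topology, since $B_{k}\psi=A_{k}\psi-A\psi\to 0$ for every $\psi\in H$; hence it suffices to prove $\Vert B_{k}T\Vert_{1}\to 0$ and $\Vert TB_{k}\Vert_{1}\to 0$. Since $T\in\tau_1(H)$, I would write $T=\sum_{j=1}^{\infty}s_{j}\,|e_{j}\rangle\langle f_{j}|$ with $s_{j}\geq 0$, $\sum_{j}s_{j}=\Vert T\Vert_{1}<\infty$, and $\{|e_{j}\rangle\}_{j}$, $\{|f_{j}\rangle\}_{j}$ orthonormal systems in $H$ (polar decomposition together with the spectral theorem for $|T|$). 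For the first limit, left multiplication by the bounded operator $B_{k}$ is $\Vert\cdot\Vert_{1}$-continuous, so the triangle inequality for $\Vert\cdot\Vert_{1}$ and the identity $\Vert\,|x\rangle\langle y|\,\Vert_{1}=\Vert x\Vert\,\Vert y\Vert$ give
\begin{equation*}
\Vert B_{k}T\Vert_{1}\;\leq\;\sum_{j=1}^{\infty}s_{j}\,\big\Vert B_{k}|e_{j}\rangle\langle f_{j}|\big\Vert_{1}\;=\;\sum_{j=1}^{\infty}s_{j}\,\Vert B_{k}e_{j}\Vert\;\leq\;\sum_{j=1}^{\infty}s_{j}M .
\end{equation*}
Each term $s_{j}\Vert B_{k}e_{j}\Vert$ is dominated by $s_{j}M$ uniformly in $k$, the bound $(s_{j}M)_{j}$ is summable, and $s_{j}\Vert B_{k}e_{j}\Vert\to 0$ as $k\to\infty$ for each fixed $j$ by strong convergence of $B_{k}$; dominated convergence for series then yields $\Vert B_{k}T\Vert_{1}\to 0$, i.e.\ $\Vert A_{k}T-AT\Vert_{1}\to 0$.

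For the second limit I would pass to adjoints: $\Vert TB_{k}\Vert_{1}=\Vert (TB_{k})^{\dagger}\Vert_{1}=\Vert B_{k}^{\dagger}T^{\dagger}\Vert_{1}$, with $T^{\dagger}\in\tau_1(H)$ and $\Vert T^{\dagger}\Vert_{1}=\Vert T\Vert_{1}$ (indeed $T^{\dagger}=\sum_{j}s_{j}|f_{j}\rangle\langle e_{j}|$), and then repeat the estimate above to obtain $\Vert B_{k}^{\dagger}T^{\dagger}\Vert_{1}\leq\sum_{j}s_{j}\Vert B_{k}^{\dagger}f_{j}\Vert\leq\sum_{j}s_{j}M$. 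The step I expect to be the real obstacle is closing this last estimate, because it needs $B_{k}^{\dagger}f_{j}\to 0$, i.e.\ $A_{k}^{\dagger}\to A^{\dagger}$ in the strong operator topology, which is \emph{not} a consequence of $A_{k}\to A$ strongly alone --- right multiplication $T\mapsto TA_{k}$ is genuinely not $\Vert\cdot\Vert_{1}$-continuous for merely strongly convergent $A_{k}$, and this is precisely why Gr\"umm's theorem is stated with strong-$*$ convergence. In every use of this lemma in the present paper the operators $A_{k}$ and $A$ are orthogonal projections, hence self-adjoint, so $A_{k}^{\dagger}=A_{k}\to A=A^{\dagger}$ strongly and the reduction is automatic; for the general statement one retains $A_{k}^{\dagger}\to A^{\dagger}$ among the hypotheses, as Gr\"umm does.
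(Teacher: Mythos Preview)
The paper gives no proof of this lemma; it merely cites Gr\"umm and Simon. Your argument for the first limit $\Vert A_kT-AT\Vert_1\to 0$ via the singular-value expansion of $T$ and dominated convergence is correct and is essentially the standard proof one finds in those references.

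Your hesitation about the second limit is fully justified: as written, the claim $\Vert TA_k-TA\Vert_1\to 0$ is \emph{false} without the extra hypothesis that $A_k^\dagger\to A^\dagger$ strongly. On $H=\ell^2(\mathbb N)$ take $A_k=(S^*)^k$ with $S$ the unilateral right shift; then $A_k\to 0$ strongly with $\Vert A_k\Vert=1$, and for $T=|e_1\rangle\langle e_1|$ one computes $TA_k=|e_1\rangle\langle e_{k+1}|$, so $\Vert TA_k\Vert_1=1$ for every $k$. Gr\"umm's theorem in fact asserts $\Vert A_kTB_k^\dagger-ATB^\dagger\Vert_p\to 0$ when $A_k\to A$ and $B_k\to B$ strongly and uniformly bounded; the right-hand factor carries an adjoint, and translating this into a statement about $TA_k$ requires strong convergence of $A_k^\dagger$, just as you say. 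Your observation that every invocation of the lemma in this paper (Corollary~\ref{dfnbkl} and Lemma~\ref{nvdakj}) involves orthogonal projections --- hence self-adjoint operators, for which the issue disappears --- is exactly right, so the paper's applications are unaffected by the overstatement.
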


\begin{cor}
\label{dfnbkl}
If $P_{k}$ is a sequence of projectors on $H$ that converges in the strong operator topology to the identity, and if $\rho \in \tau_1^+(H)$, then $\lim_{k\rightarrow\infty}\Vert P_{k}\rho P_{k}-\rho\Vert_{1}=0$.
\end{cor}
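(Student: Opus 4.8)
The plan is to reduce the statement to a double application of Lemma~\ref{lemgrumm} (Gr\"umm's theorem). First I would introduce the telescoping decomposition
\begin{equation*}
P_k\rho P_k - \rho = P_k\rho(P_k - \id) + (P_k - \id)\rho,
\end{equation*}
and estimate with the triangle inequality for the trace norm, using $\Vert XT\Vert_1 \le \Vert X\Vert\,\Vert T\Vert_1$ together with the elementary fact that a projector satisfies $\Vert P_k\Vert \le 1$:
\begin{equation*}
\Vert P_k\rho P_k - \rho\Vert_1 \le \Vert P_k\rho(P_k-\id)\Vert_1 + \Vert(P_k-\id)\rho\Vert_1 \le \Vert \rho(P_k-\id)\Vert_1 + \Vert(P_k-\id)\rho\Vert_1 .
\end{equation*}
(Both operators on the right are trace class, since $\rho\in\tau_1^+(H)$ and $P_k-\id\in\mc{L}(H)$.)

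Next I would invoke Lemma~\ref{lemgrumm} with $T=\rho\in\tau_1(H)$, $A_k=P_k$, and $A=\id$. Its hypotheses hold: by assumption $P_k\rightarrow\id$ in the strong operator topology, and $\sup_k\Vert P_k\Vert\le 1<+\infty$. The lemma then yields simultaneously $\lim_{k\rightarrow\infty}\Vert P_k\rho-\rho\Vert_1=0$ and $\lim_{k\rightarrow\infty}\Vert \rho P_k-\rho\Vert_1=0$, so both terms in the bound above tend to $0$. Hence $\Vert P_k\rho P_k-\rho\Vert_1\rightarrow 0$, which is the claim.

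I expect no real obstacle here: the proof is a short, routine manipulation. The only two points worth stating explicitly are the operator-norm bound $\Vert P_k\Vert\le 1$, which is what allows one to peel off the left factor $P_k$ after the telescoping split, and the fact that one genuinely needs \emph{strong} (not merely weak) operator convergence of the $P_k$ to feed into Lemma~\ref{lemgrumm} — which is exactly the hypothesis assumed in the corollary.
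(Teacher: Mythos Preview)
Your proof is correct and is exactly the intended argument: the paper states the result as an immediate corollary of Lemma~\ref{lemgrumm} without spelling out the details, and your telescoping split $P_k\rho P_k-\rho=P_k\rho(P_k-\id)+(P_k-\id)\rho$ together with $\Vert P_k\Vert\le 1$ and two applications of Gr\"umm's lemma is precisely how one fills them in.
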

\begin{lem}
\label{nvdakj}
If sequences of projectors $P_k^A$ and $P_k^B$ on $H_A$ and $H_B$, respectively, converge in the strong operator topology to the identity, then  $P_k^A\otimes P_k^B$ converges in the strong operator topology to $\id_{AB}$.
\end{lem}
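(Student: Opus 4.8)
The plan is to reduce the claim to the standard fact that a uniformly norm‑bounded sequence of operators that converges strongly on a dense subspace converges strongly everywhere. The three ingredients are: (i) a uniform bound $\Vert P_k^A\otimes P_k^B\Vert\le 1$, valid because each $P_k^A$, $P_k^B$ is a projector; (ii) strong convergence on product vectors; and (iii) an $\epsilon/3$ argument exploiting the density of the algebraic tensor product (finite linear combinations of product vectors) in $H_A\otimes H_B$.

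First I would handle product vectors. For $|\psi\rangle\in H_A$, $|\phi\rangle\in H_B$ one has $(P_k^A\otimes P_k^B)(|\psi\rangle\otimes|\phi\rangle)=(P_k^A|\psi\rangle)\otimes(P_k^B|\phi\rangle)$, and inserting the intermediate term $(P_k^A|\psi\rangle)\otimes|\phi\rangle$ gives, by the triangle inequality and $\Vert P_k^A|\psi\rangle\Vert\le\Vert\,|\psi\rangle\,\Vert$,
\begin{align*}
\bigl\Vert (P_k^A\otimes P_k^B)(|\psi\rangle\otimes|\phi\rangle)-|\psi\rangle\otimes|\phi\rangle \bigr\Vert
&\le \Vert P_k^A|\psi\rangle\Vert\,\Vert P_k^B|\phi\rangle-|\phi\rangle\Vert + \Vert P_k^A|\psi\rangle-|\psi\rangle\Vert\,\Vert\,|\phi\rangle\,\Vert \\
&\le \Vert\,|\psi\rangle\,\Vert\,\Vert P_k^B|\phi\rangle-|\phi\rangle\Vert + \Vert P_k^A|\psi\rangle-|\psi\rangle\Vert\,\Vert\,|\phi\rangle\,\Vert,
\end{align*}
which tends to $0$ as $k\to\infty$ since $P_k^A\to\id_A$ and $P_k^B\to\id_B$ strongly and the norms $\Vert\,|\psi\rangle\,\Vert$, $\Vert\,|\phi\rangle\,\Vert$ are fixed. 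By linearity, $(P_k^A\otimes P_k^B)|\xi'\rangle\to|\xi'\rangle$ for every $|\xi'\rangle$ in the algebraic tensor product of $H_A$ and $H_B$.

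Finally I would run the density argument. Given an arbitrary $|\xi\rangle\in H_A\otimes H_B$ and $\epsilon>0$, pick $|\xi'\rangle$ in the algebraic tensor product with $\Vert\,|\xi\rangle-|\xi'\rangle\,\Vert<\epsilon/3$; then
\begin{align*}
\bigl\Vert (P_k^A\otimes P_k^B-\id_{AB})|\xi\rangle\bigr\Vert
&\le \bigl\Vert (P_k^A\otimes P_k^B)(|\xi\rangle-|\xi'\rangle)\bigr\Vert + \bigl\Vert (P_k^A\otimes P_k^B-\id_{AB})|\xi'\rangle\bigr\Vert + \bigl\Vert\,|\xi'\rangle-|\xi\rangle\,\bigr\Vert \\
&\le 2\,\Vert\,|\xi\rangle-|\xi'\rangle\,\Vert + \bigl\Vert (P_k^A\otimes P_k^B-\id_{AB})|\xi'\rangle\bigr\Vert,
\end{align*}
using $\Vert P_k^A\otimes P_k^B\Vert\le1$ in the first term; the middle term is $<\epsilon/3$ for all large $k$ by the previous step, so the left side is $<\epsilon$ eventually. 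This proves $P_k^A\otimes P_k^B\to\id_{AB}$ in the strong operator topology. There is no substantive obstacle here; the only points needing a word of care are the density of the algebraic tensor product in $H_A\otimes H_B$ and keeping the norm bound $\Vert P_k^A\otimes P_k^B\Vert\le1$ in view so that the $\epsilon/3$ estimate is uniform in $k$.
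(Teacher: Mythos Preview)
Your proof is correct and complete. The paper itself states this lemma without proof, treating it as a standard fact, so your argument---uniform norm bound, strong convergence on product vectors, then the $\epsilon/3$ density argument via the algebraic tensor product---is exactly the kind of routine verification the authors chose to omit.
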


\begin{lem}\label{lem:weakstar Tk implies weak op of id otimes Tk}
Let $\{T_k\}_{k\in \mathbb{N}} \subset \tau_1(H_B)$ be a sequence that converges in the weak* topology to $T\in \tau_1(H_B)$. Then, the sequence $\id_{A} \otimes T_k$ in $\mc{L}(\hahb)$ converges to $\id_{A} \otimes T$ in the weak operator topology.
\end{lem}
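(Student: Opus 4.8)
The plan is to first unpack the weak* convergence in $\tau_1(H_B)$ into a combination of weak operator convergence and a uniform norm bound, and then to transport this to the tensor product $\hahb$ by expanding arbitrary vectors along an orthonormal basis of $H_A$.

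Recall that $\tau_1(H_B)$ is the Banach-space dual of the compact operators $\mathcal{K}(H_B)$ via the pairing $(T,C)\mapsto\tr(TC)$, so weak* convergence $T_k\to T$ means $\tr(T_kC)\to\tr(TC)$ for every $C\in\mathcal{K}(H_B)$. Applying this to the rank-one operators $C=|\psi\rangle\langle\chi|$ (which are compact) gives $\langle\chi|T_k|\psi\rangle\to\langle\chi|T|\psi\rangle$ for all $\chi,\psi\in H_B$; that is, $T_k\to T$ in the weak operator topology. Moreover, a weak*-convergent sequence in a dual Banach space is norm-bounded by the uniform boundedness principle, so $M:=\sup_k\Vert T_k\Vert_1<\infty$ and consequently $\sup_k\Vert\id_A\otimes T_k\Vert\le\sup_k\Vert T_k\Vert\le M$.

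Now fix $\Phi,\Psi\in\hahb$ and an orthonormal basis $\{e_i\}_i$ of $H_A$, and write $\Phi=\sum_i|e_i\rangle\otimes|\phi_i\rangle$ and $\Psi=\sum_i|e_i\rangle\otimes|\psi_i\rangle$ with $\phi_i=\langle e_i|\Phi\rangle\in H_B$, $\psi_i=\langle e_i|\Psi\rangle\in H_B$, so that $\sum_i\Vert\phi_i\Vert^2=\Vert\Phi\Vert^2<\infty$ and $\sum_i\Vert\psi_i\Vert^2=\Vert\Psi\Vert^2<\infty$. Since $\id_A\otimes T_k$ is bounded, a direct computation gives
\begin{equation*}
\langle\Phi|(\id_A\otimes T_k)|\Psi\rangle=\sum_i\langle\phi_i|T_k|\psi_i\rangle,
\end{equation*}
and likewise with $T$ in place of $T_k$. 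Each summand converges as $k\to\infty$ by the weak operator convergence established above, and $|\langle\phi_i|T_k|\psi_i\rangle|\le M\Vert\phi_i\Vert\Vert\psi_i\Vert$ with $\sum_i\Vert\phi_i\Vert\Vert\psi_i\Vert\le\Vert\Phi\Vert\Vert\Psi\Vert<\infty$ by Cauchy--Schwarz. Hence dominated convergence for series permits interchanging the limit and the sum, yielding $\langle\Phi|(\id_A\otimes T_k)|\Psi\rangle\to\langle\Phi|(\id_A\otimes T)|\Psi\rangle$, which is precisely weak operator convergence of $\id_A\otimes T_k$ to $\id_A\otimes T$.

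The main obstacle I anticipate is this last interchange of limit and infinite summation over $i$, where the uniform bound $M$ coming from the uniform boundedness principle is essential: without it the terms $\langle\phi_i|T_k|\psi_i\rangle$ need not be controlled by a $k$-independent summable majorant. An alternative to spelling out the dominated-convergence argument would be to invoke the standard fact that a norm-bounded net of operators that converges in the weak operator topology on a total subset — here the finite linear combinations of product vectors — converges in the weak operator topology everywhere.
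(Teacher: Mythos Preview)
Your argument is correct. The paper, however, takes a shorter route that sidesteps both the uniform boundedness principle and the dominated-convergence step: it observes that for any $\psi\in\hahb$ one has $\langle\psi|\,\id_A\otimes T_k\,|\psi\rangle=\tr(T_k\,K^B_\psi)$ with $K^B_\psi:=\tr_A|\psi\rangle\langle\psi|$, and since $K^B_\psi$ is trace class (hence compact), weak* convergence of $T_k$ applies directly to this single compact operator. In other words, the paper uses the full weak* hypothesis against one compact operator per vector, whereas you only extract from it the rank-one consequence (weak operator convergence on $H_B$) and then have to reassemble the infinite sum by hand, which is precisely why you need the uniform bound $M$. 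Your approach is more elementary in that it avoids the partial-trace identity, and the alternative you mention at the end (bounded nets converging on a total set) is essentially the abstract version of what you spelled out; the paper's trick buys brevity by packaging the sum $\sum_i\langle\psi_i|T_k|\psi_i\rangle$ into a single trace against a compact operator.
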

\begin{proof}

For each $\psi \in \hahb$ we find that  $\langle \psi \vert \id \otimes T_k \vert \psi \rangle =  \tr( T_k K^{B}_{\psi})$, where $K^{B}_{\psi}= \tr_{A}|\psi\rangle\langle\psi|$ is the reduced operator. Since $K^{A}_{\psi}$ is trace class (and thus compact) the statement follows immediately.

\end{proof}


\section{\label{app:proofprop1}Proof of Proposition~\ref{p:reduction of Hmin to finite dim}}

In order to derive Proposition~\ref{p:reduction of Hmin to finite dim} we proceed as follows: In Section~\ref{Section:reduction to finite} we show that the min- and max-entropy of a projected state can be reduced to an entropy on a finite-dimensional space. In Section~\ref{sec:monotonicity} we show that the min- and max-entropies are monotonic over the sequences of projected states. Finally we prove the limits listed in Proposition~\ref{p:reduction of Hmin to finite dim}.  Note that in what follows we mostly make use of the quantities $\Lambda(\rho_{AB}|\sigma_{B})$ and $\Lambda(\rho_{AB}|B)$, as defined in Eqs.~(\ref{lambdadef1}) and (\ref{lambdadef2}), rather than the min- and max-entropies per se.

\subsection{\label{Section:reduction to finite} Reduction}

Here we show that the min- and max-entropy of a projected state can be considered as effectively finite-dimensional, in the sense that restricting the Hilbert space to the support of the projected states does not change the value of the entropies.

\begin{lem} \label{I:id-P}
Let $P_A$, $P_B$ be projectors onto closed subspaces $U_A \subseteq H_A$ and $U_B \subseteq H_B$, respectively, $\root \in \tau_{1}^{+}(H_A \otimes H_B)$, and $\siit \in \tau^+_1(H_B)$. \newline
i) If $(P_{A}\otimes \id_{B}) \root (P_{A}\otimes \id_{B}) = \root$ it follows that
      \begin{equation}
      \Lambda(\root|\siit)  = \inf \{ \lambda \in \mathbb{R} | \lambda P_A \otimes \siit \geq \root  \}.
  \end{equation}
ii) If $(\id_A\otimes P_{B})\root (\id_A\otimes P_{B}) = \root$ it follows that
    \begin{equation}\label{prop,eq1:Lambda statement for reduction of Hmin}
      \Lambda(\root|B) = \Lambda(\root|U_B),
    \end{equation}
where $\Lambda(\root |U_B)$ means that the infimum in Eq.~(\ref{lambdadef2}) is taken
only over the set $\tau^+_1(U^B)$.
\end{lem}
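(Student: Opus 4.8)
## Proof Plan for Lemma~\ref{I:id-P}

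The plan is to exploit, in both parts, the fact that the defining operator inequalities $\lambda \id_A \otimes \siit \geq \root$ can be ``sandwiched'' by the relevant projectors without changing their validity, precisely because $\root$ is already supported in the appropriate subspace.

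For part i), first observe that the infimum defining $\Lambda(\root|\siit)$ is over a larger constraint set than the infimum on the right-hand side: any $\lambda$ with $\lambda \id_A \otimes \siit \geq \root$ automatically satisfies $\lambda P_A \otimes \siit \geq \root$? — no, that direction is false, so I have to be careful. The correct argument: if $\lambda \id_A \otimes \siit \geq \root$, then conjugating by $P_A \otimes \id_B$ (using Lemma~\ref{lem:T geq 0 implies STS geq 0}) gives $\lambda P_A \otimes \siit \geq (P_A \otimes \id_B)\root(P_A \otimes \id_B) = \root$ by hypothesis. Conversely, if $\lambda P_A \otimes \siit \geq \root$, then since $P_A \leq \id_A$ we have $\lambda \id_A \otimes \siit \geq \lambda P_A \otimes \siit \geq \root$ (for $\lambda \geq 0$; the case $\lambda < 0$ cannot occur since $\root \geq 0$ and $\siit \geq 0$, and if $\Lambda = +\infty$ on one side it is on the other). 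Hence the two constraint sets coincide and so do the infima. I should note the edge case where neither constraint can be satisfied, in which case both sides equal $+\infty$ by the convention in Definition~\ref{def:min/max-entropy}.

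For part ii), I want to show that the infimum over $\siit \in \tau_1^+(H_B)$ with $\id_A \otimes \siit \geq \root$ equals the infimum over the smaller set $\siit \in \tau_1^+(U_B)$. One direction ($\leq$) is immediate since $\tau_1^+(U_B) \subseteq \tau_1^+(H_B)$. For the other direction ($\geq$), given any feasible $\siit \in \tau_1^+(H_B)$ with $\id_A \otimes \siit \geq \root$, I replace it by $\siit' := P_B \siit P_B \in \tau_1^+(U_B)$, which has $\tr \siit' \leq \tr \siit$ (since $P_B \leq \id_B$ and $\siit \geq 0$). It remains to check feasibility: conjugating $\id_A \otimes \siit \geq \root$ by $\id_A \otimes P_B$ (again Lemma~\ref{lem:T geq 0 implies STS geq 0}) gives $\id_A \otimes (P_B \siit P_B) \geq (\id_A \otimes P_B)\root(\id_A \otimes P_B) = \root$ by the hypothesis of part ii). So $\siit'$ is feasible with no larger trace, which shows the right-hand infimum is at most the left-hand one. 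Combining the two inequalities yields $\Lambda(\root|B) = \Lambda(\root|U_B)$; I should also remark that $\siit' \in \tau_1^+(U_B)$ can indeed be regarded as an operator on $U_B$ since it annihilates $U_B^\perp$.

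I do not anticipate a serious obstacle here; the only mild subtleties are bookkeeping around the $+\infty$ / infeasible case and making explicit that conjugation by a projector preserves operator inequalities (which is exactly Lemma~\ref{lem:T geq 0 implies STS geq 0} applied to the difference $\lambda \id_A \otimes \siit - \root \geq 0$, respectively $\id_A \otimes \siit - \root \geq 0$). The one point worth stating carefully is why, in part ii), restricting $\siit$ to act on $U_B$ loses nothing: it is the combination of monotonicity of the trace under compression together with the fact that compressing the inequality is harmless precisely because $\root$ lives on $H_A \otimes U_B$.
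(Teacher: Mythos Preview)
Your proposal is correct and is precisely the natural argument the paper has in mind; indeed, the paper does not give a proof at all, stating only that ``the proof is straightforward and left to the reader.'' Your use of Lemma~\ref{lem:T geq 0 implies STS geq 0} to conjugate the operator inequalities by the relevant projectors, together with $P_A \leq \id_A$ (respectively $\tr(P_B\siit P_B) \leq \tr\siit$) for the reverse directions, is exactly the intended route, and your handling of the infeasible/$+\infty$ case is adequate.
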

The proof is straightforward and left to the reader. In the particular case of projected states $\roo^k$ relative to a generator $(P_k^A,P_k^B)$, the evaluation of $\Lambda(\roo^k|\sii^k)$ and $\Lambda(\roo^k|B)$, where $\sii^k=P_k^B\sii P_k^B$, can be restricted to the finite-dimensional Hilbert space $U_k^A\otimes U_k^B$ given by the projection spaces of $P_k^A$ and $P_k^B$. Especially, we can conclude that the infima of Eqs.~(\ref{lambdadef1}) and (\ref{lambdadef2}), and consequently the infimum in (\ref{def,eq1:min/max-entropy}) and the supremum in (\ref{def,eq2:min/max-entropy}), are attained for projected states, since these are optimizations of continuous functions over compact sets.


\subsection{\label{sec:monotonicity}Monotonicity}
The next lemma considers the monotonic behaviour of the min- and max-entropies with respect to sequences of projected states.

\begin{lem} \label{l:mon incr}
For $\roo\in \mc{S}(\hahb)$, $ \sii \in \mc{S}(H_B)$, let $\{\roo^k\}_{k=1}^{\infty}$ and $\{\sii^k\}_{k=1}^{\infty}$ be projected states relative to a generator $(P_k^A,P_k^B)$.\newline
i) It follows that $\Lambda(\roo^k|\sii^k)$ and $\Lambda(\roo^k|B)$ are monotonically increasing in $k$, where the first sequence is bounded by $\Lambda(\roo|\sii)$ and the latter by $\Lambda(\roo|B)$.\newline
ii) For an arbitrary but fixed purification $\rho_{ABC}$ of $\rho_{AB}$ with purifying system $H_C$, let $\rho^k_{AC} = \tr_B\rho_{ABC}^k$ and $\rho_{ABC}^k = (P^A_k \otimes P_k^B \otimes \id_C)\rho_{ABC}(P^A_k \otimes P_k^B  \otimes \id_C)$. Then it follows that $\Lambda(\rho_{AC}^k|C)$ is monotonically increasing and bounded by $\Lambda(\rho_{AC}|C)$.
\end{lem}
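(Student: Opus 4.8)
\textbf{Proof plan for Lemma~\ref{l:mon incr}.}
The plan is to reduce the problem to the single, fundamental monotonicity statement: if $P \leq Q$ are projectors (commuting with the relevant structure), then compressing by $P$ gives a smaller operator than compressing by $Q$ in the positivity order. Concretely, for part i), first note that by Lemma~\ref{I:id-P}ii), $\Lambda(\roo^k|\sii^k) = \inf\{\lambda \ | \ \lambda P_k^A \otimes \sii^k \geq \roo^k\}$, and similarly for the general condition, so it suffices to compare the feasibility conditions for consecutive values of $k$. The key observation is that for $k \leq k'$ we have $P_k^A \otimes P_k^B \leq P_{k'}^A \otimes P_{k'}^B$ (since both factors are ordered by assumption on the generator), and compression by a projector is order-preserving on the positive cone: if $X \geq 0$ then $P X P$ is obtained from $P_{k'} X P_{k'}$ by a further compression $P = P_k$, hence $P_k (P_{k'} X P_{k'}) P_k = P_k X P_k$. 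The plan is to take a feasible $\lambda$ for level $k$, i.e.\ $\lambda \id_A \otimes \sii^k \geq \roo^k$, and show it is feasible for level $k'$ fails in that direction — rather, one shows the reverse: any $\lambda$ feasible for level $k'$ yields a $\lambda$ feasible for level $k$ by compressing both sides of $\lambda \id_A \otimes \sii^{k'} \geq \roo^{k'}$ with $\id_A \otimes P_k^B$ (using $P_k^B \sii^{k'} P_k^B = \sii^k$ and $(P_k^A \otimes P_k^B)\roo^{k'}(P_k^A \otimes P_k^B) = \roo^k$, which follows from the nesting of projectors). This shows the feasible set of $\lambda$'s shrinks as $k$ grows, hence the infimum $\Lambda(\roo^k|\sii^k)$ increases. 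The same compression argument with the limit $\roo$ in place of $\roo^{k'}$ (using $P_k^B \sii P_k^B = \sii^k$) gives the upper bound $\Lambda(\roo^k|\sii^k) \leq \Lambda(\roo|\sii)$. For $\Lambda(\roo^k|B)$, the argument is analogous using Eq.~(\ref{eq:equiv def of the min-entropy}): if $\id_A \otimes \tilde\sigma_B \geq \roo^{k'}$, then compressing with $\id_A \otimes P_k^B$ gives $\id_A \otimes (P_k^B \tilde\sigma_B P_k^B) \geq \roo^k$, and $\tr(P_k^B \tilde\sigma_B P_k^B) \leq \tr\tilde\sigma_B$; taking infima and then the same with $\roo$ in place of $\roo^{k'}$ gives monotonicity and the bound $\Lambda(\roo^k|B) \leq \Lambda(\roo|B)$.

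For part ii), the plan is to observe that $\rho_{ABC}^k$ is the projected state of the purification $\rho_{ABC}$ relative to the generator $(P_k^A, P_k^B \otimes \id_C)$ on the bipartition $A\,|\,BC$ — but we want $\Lambda(\rho_{AC}^k|C)$, which concerns the bipartition $A\,|\,C$. The subtlety is that $\rho_{AC}^k = \tr_B \rho_{ABC}^k$ need not be a projected state of $\rho_{AC}$ in the naive sense, because tracing out $B$ does not commute cleanly with compression by $P_k^B$. However, $\rho_{AC}^k$ \emph{is} of the form $(P_k^A \otimes \id_C) \sigma^k_{AC} (P_k^A \otimes \id_C)$ for an appropriate operator, and more importantly it satisfies $\rho_{AC}^k \leq \rho_{AC}^{k'}$ for $k \leq k'$: indeed $\rho_{ABC}^k \leq \rho_{ABC}^{k'}$ by the compression argument of part i) (applied on $A\,|\,BC$ with both projector factors nested), and the partial trace $\tr_B$ is positive, so $\rho_{AC}^k \leq \rho_{AC}^{k'}$ and likewise $\rho_{AC}^k \leq \rho_{AC}$. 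Then I invoke the elementary monotonicity of $\Lambda(\cdot|C)$ in its first argument: if $\eta_{AC} \leq \eta'_{AC}$ then any $\tilde\sigma_C$ feasible for $\eta'_{AC}$ (i.e.\ $\id_A \otimes \tilde\sigma_C \geq \eta'_{AC} \geq \eta_{AC}$) is feasible for $\eta_{AC}$, so $\Lambda(\eta_{AC}|C) \leq \Lambda(\eta'_{AC}|C)$. Applying this to $\rho_{AC}^k \leq \rho_{AC}^{k'} \leq \rho_{AC}$ yields the claim.

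The main obstacle I anticipate is bookkeeping around the compression identities in part ii): one must be careful that $(P_k^A \otimes P_k^B \otimes \id_C)\,\rho_{ABC}\,(P_k^A \otimes P_k^B \otimes \id_C)$, when compressed further by the $(k'\!\to\!k)$ projector, genuinely reproduces $\rho_{ABC}^k$, which relies on $P_k^A P_{k'}^A = P_k^A$ and $P_k^B P_{k'}^B = P_k^B$ (both consequences of the nesting $P_k \leq P_{k'}$ of projectors, since $P_k \leq P_{k'}$ for projectors is equivalent to $P_{k'}P_k = P_k$). Once these identities are in hand, everything reduces to the two trivial facts that (a) compression by a projector and the partial trace are positive (order-preserving) maps, and (b) $\Lambda(\cdot|\cdot)$, being an infimum of a feasibility condition, is monotone under enlarging the operator on the right-hand side or shrinking the feasible set. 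No deep analysis (no limits, no topology) is needed for this lemma — that is deferred to the subsequent step where the limits in Proposition~\ref{p:reduction of Hmin to finite dim} are established — so the proof is essentially a sequence of one-line order comparisons, and I would present it as such, leaving the routine verifications to the reader as the paper does for Lemma~\ref{I:id-P}.
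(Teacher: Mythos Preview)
Your plan for part i) is correct and matches the paper's proof: compression of the feasibility inequality $\lambda\,\id_A\otimes\sii^{k'}-\roo^{k'}\geq 0$ by $P_k^A\otimes P_k^B$ (for $k\leq k'$) yields the corresponding inequality at level $k$, whence monotonicity of both $\Lambda(\roo^k|\sii^k)$ and $\Lambda(\roo^k|B)$ and the upper bounds. (Strictly speaking you need the full compression $P_k^A\otimes P_k^B$, not just $\id_A\otimes P_k^B$, but this is harmless.)

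Part ii), however, has a genuine gap. You claim that $\rho_{ABC}^k\leq\rho_{ABC}^{k'}$ ``by the compression argument of part~i)'', and then deduce $\rho_{AC}^k\leq\rho_{AC}^{k'}$ by positivity of the partial trace. But the compression argument of part~i) does \emph{not} establish $\rho_{ABC}^k\leq\rho_{ABC}^{k'}$; it only says that compressing a \emph{positive difference} stays positive, which is a different statement from comparing two different compressions of the same operator. In fact $P\rho P\leq Q\rho Q$ for projectors $P\leq Q$ is false in general: take $\rho=|\psi\rangle\langle\psi|$ with $|\psi\rangle=\tfrac{1}{\sqrt{2}}(|0\rangle+|1\rangle)$, $P=|0\rangle\langle0|$, $Q=\id$; then $Q\rho Q-P\rho P=\tfrac{1}{2}\bigl(|0\rangle\langle1|+|1\rangle\langle0|+|1\rangle\langle1|\bigr)$ has negative determinant. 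The same example (with trivial $C$) shows that $\rho_{AC}^k\leq\rho_{AC}^{k'}$ fails too, so your invocation of the monotonicity of $\Lambda(\cdot|C)$ in its first argument cannot be applied.

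What does work --- and this is how the paper proceeds --- is to separate the $B$ and $A$ compressions. Tracing out $B$ \emph{after} compressing only on $B$ \emph{does} give a genuine operator ordering: with $\eta_{AC}^{(k)}:=\tr_B\bigl[(\id_A\otimes P_k^B\otimes\id_C)\,\rho_{ABC}\,(\id_A\otimes P_k^B\otimes\id_C)\bigr]$ one has $\eta_{AC}^{(k)}\leq\eta_{AC}^{(k')}$ for $k\leq k'$, since by cyclicity of the partial trace this equals $\tr_B[\rho_{ABC}(\id_A\otimes P_k^B\otimes\id_C)]$, and $\tr_B[\rho_{ABC}(\id\otimes R\otimes\id)]\geq 0$ whenever $R\geq 0$. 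One then handles the $A$-projector exactly as in part~i), by compressing the \emph{feasibility inequality} $\id_A\otimes\tilde\sigma_C\geq\rho_{AC}^{k'}$ with $P_k^A\otimes\id_C$ (not by comparing the states themselves). Concretely, the paper shows that the feasible sets $\mc{M}_k=\{\tilde\sigma_C:\id_A\otimes\tilde\sigma_C\geq\rho_{AC}^k\}$ are nested, $\mc{M}_{k'}\supset\mc{M}_k$ for $k'\leq k$, by combining the $B$-monotonicity of $\eta_{AC}^{(\cdot)}$ with the $A$-compression of the inequality. Your overall strategy is right in spirit; the fix is to avoid the false operator comparison and instead argue at the level of feasibility sets as above.
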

Note that $\rho^k_{AC}$ as defined in the lemma is not a projected state in the sense of Definition~\ref{def:projected states}. Translated to min- and max-entropies, the lemma above says that $\Hmink$ and $\HminBk$ are monotonically increasing while $\HmaxBk$ is monotonically decreasing. But in general, the monotonicity does not hold for \emph{normalized} projected states.

\begin{proof}
Set $P_k:=P_k^A \otimes P_k^B$ and recall that $\Lambda(\roo^k|\sii^k)= \inf \{ \lambda \in \mathbb{R}| \ \lambda P_k^A \otimes \sii^k \geq \roo^k \}$ according to Lemma~\ref{I:id-P}.
To show the first part of i) note that for $k'\leq k$ the equations
\begin{equation*}
P_{k'}P_k (\lambda \id \otimes \sii - \roo ) P_{k'}P_k=P_{k'} (\lambda P_k^A \otimes \sii^k - \roo^k ) P_{k'} =  \lambda P_{k'}^A \otimes \sii^{k'} - \roo^{k'}
\end{equation*}
hold, which imply via Lemma~\ref{lem:T geq 0 implies STS geq 0} that $\Lambda(\roo^{k'}|\sii^{k'}) \leq \Lambda(\roo^k|\sii^k) \leq \Lambda(\roo|\sii)$.
For the second part, let $\siit \in \tau_1^+(H_B)$ be the optimal state such that $\Lambda(\roo^k|B)=\tr\siit$ and $P_k^A\otimes \siit \geq \roo^k$. But then we obtain that $P_{k'}^A \otimes P_{k'}^B\siit P_{k'}^B - \roo^{k'} \geq 0$ and therefore also $\Lambda(\roo^{k'}|B) \leq \Lambda(\roo^k|B)$. The upper bound follows in the same manner.

In order to show ii) we define the sets $ \mc{M}_k := \{\sitC \in \tau_1^+(H_C)|\ \id_A \otimes \sitC \geq \rho_{AC}^k\}$ such that $\Lambda(\rho_{AC}^k|C) = \inf_{\sitC \in \mc{M}_k}\tr\sitC$.
To conclude the monotonicity we show that $ \mc{M}_{k'} \supset \mc{M}_{k}$ for $k'\leq k$. If $\mc{M}_{k}= \emptyset$, the statement is trivial.
Assume $\tilde{\sigma}_{C}\in \mc{M}_k$. Using $P^B_{k'}\leq P^B_{k}$ we find
\begin{equation*}
\id_A\otimes \tilde{\sigma}_{C}\geq  P_{k}^A \tr_{B}(P_{k}^B \rho_{ABC}P^B_{k}) P_{k}^A \geq P_{k}^A \tr_{B}(P_{k'}^B \rho_{ABC}P^B_{k'}) P_{k}^A.
\end{equation*}
Together with Lemma~\ref{lem:T geq 0 implies STS geq 0}, this yields $P^A_{k'} \otimes \tilde{\sigma}_{C} \geq \rho^{k'}_{AC}$ and thus $\tilde{\sigma}_{C}\in \mc{M}_{k'}$. A similar argument provides the upper bound $\Lambda(\rho_{AC}^{k}|C) \leq  \Lambda(\rho_{AC}|C)$.
\end{proof}


\subsection{\label{sec:limits}Limits}

After the above discussion on general properties of the min- and max-entropies of projected states we are now prepared to prove Proposition~\ref{p:reduction of Hmin to finite dim}. For the sake of convenience we divide the proof into three lemmas.

\begin{lem} \label{l=sup}
For $\roo\in \mc{S}(\hahb)$ and $\sii \in \mc{S}(\hb)$, let $\{\roo^k\}_{k=1}^{\infty}$ be the projected states of $\roo$ relative to a generator $(P_k^A,P_k^B)$, and let $\sii^k:= P_k^B\sii P_k^B$. It follows that
\begin{equation}
\label{knbvx}
\Lambda(\roo|\sii) = \lim_{k \rightarrow \infty} \Lambda(\roo^k|\sii^k),
\end{equation}
and the infimum in Eq.~(\ref{lambdadef1}) is attained if  $\Lambda(\roo|\sii)$ is finite.
\end{lem}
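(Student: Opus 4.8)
The plan is to establish the two inequalities $\liminf_{k} \Lambda(\roo^k|\sii^k) \geq \Lambda(\roo|\sii)$ and $\limsup_k \Lambda(\roo^k|\sii^k) \leq \Lambda(\roo|\sii)$ separately; by Lemma~\ref{l:mon incr}(i) the sequence $\Lambda(\roo^k|\sii^k)$ is monotonically increasing and bounded above by $\Lambda(\roo|\sii)$, so the limit exists and the ``$\leq$'' direction is already in hand. It therefore remains to show that the (monotone) limit is not strictly smaller than $\Lambda(\roo|\sii)$. Call this limit $\lambda_\infty := \lim_k \Lambda(\roo^k|\sii^k) \leq \Lambda(\roo|\sii)$; the case $\lambda_\infty = +\infty$ forces equality, so assume $\lambda_\infty < \infty$.

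First I would recall, via Lemma~\ref{I:id-P}(i), that since $\roo^k$ is supported on $U_k^A \otimes U_k^B$ the quantity $\Lambda(\roo^k|\sii^k)$ equals $\inf\{\lambda \mid \lambda P_k^A \otimes \sii^k \geq \roo^k\}$, and that this infimum is \emph{attained} because it is an optimization of a continuous function over the compact set of finite-dimensional operators (as noted after Lemma~\ref{I:id-P}). So for each $k$ we have the genuine operator inequality
\begin{equation*}
\Lambda(\roo^k|\sii^k)\, P_k^A \otimes \sii^k \;\geq\; \roo^k \;=\; (P_k^A \otimes P_k^B)\roo(P_k^A \otimes P_k^B).
\end{equation*}
Now use monotonicity: for any fixed $k$ and any $k' \geq k$, conjugating the $k'$-th inequality by $P_k = P_k^A \otimes P_k^B$ (which is $\leq P_{k'}$) and using $\Lambda(\roo^{k'}|\sii^{k'}) \leq \lambda_\infty$ gives $\lambda_\infty\, P_k^A \otimes \sii^k \geq \roo^k$. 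The key step is then to let $k \to \infty$ in $\lambda_\infty\, P_k^A \otimes \sii^k - \roo^k \geq 0$. Since $P_k^A \to \id_A$ strongly and $\sii^k = P_k^B \sii P_k^B \to \sii$ in trace norm (Corollary~\ref{dfnbkl}), one shows $P_k^A \otimes \sii^k \to \id_A \otimes \sii$ at least in the weak operator topology — combining Lemma~\ref{nvdakj}-type reasoning for the strong convergence $P_k^A \otimes P_k^B \to \id$ with Lemma~\ref{lem:weakstar Tk implies weak op of id otimes Tk}, or more directly estimating $\langle \chi | (P_k^A \otimes \sii^k - \id_A \otimes \sii)|\psi\rangle$ using $\|\sii^k - \sii\|_1 \to 0$ and $P_k^A \to \id_A$ strongly. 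Likewise $\roo^k \to \roo$ in trace norm, hence weakly. By Lemma~\ref{lem:pos weak op limit} (the positive cone is sequentially weakly closed), the limit $\lambda_\infty\, \id_A \otimes \sii - \roo$ is positive, so $\lambda_\infty$ is a feasible point in the optimization defining $\Lambda(\roo|\sii)$, giving $\Lambda(\roo|\sii) \leq \lambda_\infty$. Together with the reverse bound this yields $\Lambda(\roo|\sii) = \lambda_\infty$, and moreover $\lambda' := \lambda_\infty$ attains the infimum in Eq.~(\ref{lambdadef1}), proving the final clause.

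The main obstacle I anticipate is the convergence step $P_k^A \otimes \sii^k \to \id_A \otimes \sii$ in a topology strong enough to pass the operator inequality to the limit while weak enough to actually hold: one cannot hope for norm or even strong operator convergence of $\id_A \otimes \sii^k$ in general (the identity on $H_A$ is not compact), so the argument must be set up entirely at the level of the weak operator topology, exploiting that $\mc{L}^+$ is weakly sequentially closed (Lemma~\ref{lem:pos weak op limit}) rather than trying to take norm limits. A secondary subtlety is making sure the monotonicity conjugation is applied correctly — one needs $\roo^k = P_k \roo^{k'} P_k$ for $k \leq k'$, which follows from $P_k P_{k'} = P_k$ — so that the single limiting feasible operator $\lambda_\infty$ works uniformly rather than only along a subsequence.
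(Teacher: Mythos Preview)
Your proof is correct and follows the same strategy as the paper's: monotonicity (Lemma~\ref{l:mon incr}) gives one direction, and for the other you show that the limit $\lambda_\infty$ is feasible for the original problem by passing the operator inequality $\lambda_\infty\,\id_A\otimes\sii^k\geq\roo^k$ to the limit via closedness of the positive cone. The paper packages this second step more abstractly as lower semi-continuity of $(\rho,\sigma)\mapsto\Lambda(\rho|\sigma)$ in the trace-norm product topology (the sublevel sets $\{(\rho,\sigma):t\,\id\otimes\sigma\geq\rho\}$ are closed), which in fact sidesteps your weak-operator-topology concern: since $\|\id_A\otimes(\sii^k-\sii)\|\leq\|\sii^k-\sii\|_1\to 0$, one already has operator-\emph{norm} convergence of $\id_A\otimes\sii^k$, so your detour through $P_k^A\otimes\sii^k$, the conjugation step, and Lemma~\ref{lem:pos weak op limit} --- while not wrong --- are all unnecessary.
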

\begin{proof}
That the infimum is attained follows directly from the definition. To show (\ref{knbvx}) we prove that $\Lambda(\roo|\sii)$ is lower semi-continuous in $(\roo,\sii)$ with respect to the product topology induced by the trace norm topology on each factor. Since this means that $\liminf_{k \rightarrow \infty} \Lambda(\roo^k|\sii^k) \geq \Lambda(\roo|\sii)$, the combination with Lemma~\ref{l:mon incr} results directly in (\ref{knbvx}). To show lower semi-continuity recall that it is equivalent to say that all lower level sets $\Lambda^{-1}((-\infty,t]) =\{(\roo,\sii) | \ \Lambda(\roo|\sii) \leq t \}$,
for $t\in \mathbb R$ have to be closed. But this follows by rewriting $\Lambda^{-1}((-\infty,t])$ as $\{(\roo,\sii) | \ t\id \otimes \sii \geq \roo \}$.
\end{proof}

\begin{lem}\label{l:Lambda statement for reduction of Hmin}
For $\roo\in \mc{S}(\hahb)$, let $\{\roo^k\}_{k=1}^{\infty}$ be the projected states of $\roo$ relative to a generator $(P_k^A,P_k^B)$. It follows that
\begin{equation}
\label{lakdsfn}
 \Lambda(\roo|B) = \lim_{k\rightarrow \infty} \Lambda(\roo^k|B),
\end{equation}
and the infimum in Eq.~(\ref{lambdadef2}) is attained if $\Lambda(\roo|B)$ is finite.
\end{lem}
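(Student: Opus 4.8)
\textbf{Plan for the proof of Lemma~\ref{l:Lambda statement for reduction of Hmin}.}
The strategy is to prove the two inequalities $\Lambda(\roo|B) \geq \limsup_k \Lambda(\roo^k|B)$ and $\Lambda(\roo|B) \leq \liminf_k \Lambda(\roo^k|B)$ separately, and then deduce that the infimum is attained. The first inequality is already contained in Lemma~\ref{l:mon incr}~i), which tells us that the sequence $\Lambda(\roo^k|B)$ is monotonically increasing and bounded above by $\Lambda(\roo|B)$; in particular the limit on the right-hand side of~(\ref{lakdsfn}) exists (possibly $+\infty$) and does not exceed $\Lambda(\roo|B)$. So the real content is the reverse inequality, and this is where I expect the main obstacle to lie.

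For the reverse inequality I would first dispose of the trivial case: if $\lim_k \Lambda(\roo^k|B) = +\infty$ there is nothing to prove, so assume this limit is some finite number $L$. For each $k$, Lemma~\ref{I:id-P}~ii) together with the remark after it guarantees that the infimum defining $\Lambda(\roo^k|B)$ is attained by some $\siit^k \in \tau_1^+(U_k^B)$ with $\id_A \otimes \siit^k \geq \roo^k$ and $\tr\siit^k = \Lambda(\roo^k|B) \leq L$. Thus $\{\siit^k\}$ is a norm-bounded sequence in $\tau_1(H_B)$. The plan is to extract a subsequence that converges in the weak* topology (regarding $\tau_1(H_B)$ as the dual of the compact operators, bounded sets are weak*-sequentially compact by separability) to some $\siit \in \tau_1^+(H_B)$ — positivity of the limit follows since the positive cone is weak*-closed, analogously to Lemma~\ref{lem:pos weak op limit}. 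Then I would argue (i) that $\tr\siit \leq \liminf_k \tr\siit^k = L$ by weak* lower semicontinuity of the trace on the positive cone, and (ii) that $\id_A \otimes \siit \geq \roo$. For~(ii), I use Lemma~\ref{lem:weakstar Tk implies weak op of id otimes Tk} to get $\id_A \otimes \siit^k \to \id_A \otimes \siit$ in the weak operator topology along the subsequence, combine with $\roo^k \to \roo$ in trace norm (hence weakly), so $\id_A \otimes \siit^k - \roo^k \to \id_A \otimes \siit - \roo$ weakly; since each term in the sequence is positive, Lemma~\ref{lem:pos weak op limit} gives $\id_A \otimes \siit - \roo \geq 0$. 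Hence $\siit$ is feasible for the optimization defining $\Lambda(\roo|B)$, so $\Lambda(\roo|B) \leq \tr\siit \leq L$, completing the reverse inequality.

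Finally, the above argument does double duty: the feasible operator $\siit$ it produces satisfies $\tr\siit \leq L = \Lambda(\roo|B)$, and by definition of the infimum $\tr\siit \geq \Lambda(\roo|B)$, so $\tr\siit = \Lambda(\roo|B)$ and the infimum in Eq.~(\ref{lambdadef2}) is attained whenever $\Lambda(\roo|B)$ is finite. The step I expect to be most delicate is the passage to the weak* limit and verifying the trace inequality $\tr\siit \leq \liminf_k \tr\siit^k$: weak* convergence alone controls $\tr(\siit^k K)$ only for compact $K$, not for $K = \id$, so one must invoke lower semicontinuity of the trace norm on the positive cone under weak* convergence (equivalently, approximate $\id$ from below by finite-rank projections and use monotone convergence). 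An alternative that sidesteps some of this is to mimic the monotone-limit/semicontinuity argument of Lemma~\ref{l=sup} directly on the functional $\siit \mapsto \tr\siit$ restricted to the (weak*-closed) feasible set, but the compactness extraction seems to me the cleanest route.
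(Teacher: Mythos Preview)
Your proposal is correct and follows essentially the same route as the paper's proof: monotonicity from Lemma~\ref{l:mon incr} gives one inequality, and for the reverse you extract a weak* accumulation point $\siit$ of the optimal $\siit^k$ via Banach--Alaoglu (using $\tau_1(H_B) = \mc{K}(H_B)^*$), then verify feasibility through Lemma~\ref{lem:weakstar Tk implies weak op of id otimes Tk} and Lemma~\ref{lem:pos weak op limit}, exactly as the paper does. The ``delicate'' trace inequality you flag is in fact the easy part: Banach--Alaoglu applies to the weak*-closed ball of radius $L$, so the limit $\siit$ automatically satisfies $\Vert\siit\Vert_1 \leq L$, which for positive $\siit$ is precisely $\tr\siit \leq L$; no separate lower-semicontinuity argument is needed.
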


\begin{proof}
Let $\mu_k := \Lambda(\roo^k|B) = \Lambda(\roo^k|B_k)$, where the last equality is due to Lemma~\ref{I:id-P}.
By Lemma~\ref{l:mon incr} this sequence is monotonically increasing, and we can thus define $\mu := \lim_{k \rightarrow \infty} \mu_k \in \mathbb{R} \cup \{+\infty\}$. In addition, Lemma~\ref{l:mon incr} also yields $\mu \leq \Lambda(\roo|B)$. Hence, the case $\lambda = +\infty$ is trivial, and it remains to show $\mu \geq \Lambda(\roo|B)$, for $\mu < \infty$.

For each $k\in \mathbb{N}$ let $\tilde{\sigma}_B^k$ be an optimal state such that $\Lambda(\roo^k|B)=\tr\siit^k$ and $\id\otimes\siit^k\geq \roo^k$. Note that due to positivity $\tr\siit^k=\Vert \siit^k\Vert_1
\leq \mu$, such that $\tilde{\sigma}_B^k$ is a bounded sequence in $\tau_1(H_B)$. Since the trace class operators $\tau_1(H_B)$ is the dual space of the compact operators $\mc K(H_B)$ \cite{SimonReeds}, we can apply Banach Alaoglu's theorem \cite{SimonReeds,HillPhillips} to find a subsequence $\{\tilde{\sigma}_B^k\}_{k\in\Gamma}$ with a weak* limit $\siit \in \tau_1(H_B)$, i.e., $\tr(K\siit^k) \rightarrow \tr(K\siit)$ $(k\in \Gamma)$ for all $K\in \mc K(H_B)$, such that $\Vert \siit\Vert_1 \leq \mu$. Obviously, $\siit$ is also positive. According to Lemma~\ref{lem:weakstar Tk implies weak op of id otimes Tk}, $\id \otimes \siit^k$ (for $k\in \Lambda$) converges in the weak operator topology to $\id \otimes \siit$, and so does $\id \otimes \siit^k - \roo^k$ to $\id \otimes \siit - \roo$. But then we can conclude that $\id \otimes \siit - \roo\geq 0$ such that $\Lambda(\roo|B) \leq  \tr\siit \leq \mu$.
\end{proof}

\begin{lem}\label{l:Lambda statement for reduction of Hmax}
For $\roo\in \mc{S}(\hahb)$, let $\rho_{ABC}$ be a purification with purifying system $H_C$, and $(P_k^A,P_k^B)$ be a generator of projected states.
 It follows that
\begin{equation} \label{prop,eq1:Lambda statement for reduction of Hmax}
\Lambda(\rho_{AC}|C) = \lim_{k \rightarrow \infty} \Lambda(\rho^k_{AC}|C),
\end{equation}
where $\rho^k_{AC} = \tr_B[(P^A_k \otimes P_k^B \otimes \id_C)\rho_{ABC}(P^A_k \otimes P_k^B  \otimes \id_C)].$
\end{lem}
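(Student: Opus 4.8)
\emph{Plan.} The intention is to carry over the argument used for Lemma~\ref{l:Lambda statement for reduction of Hmin} almost verbatim, with the operators $\rho_{AC}^k$ playing the role that the projected states $\roo^k$ played there. First, by Lemma~\ref{l:mon incr}~ii) the sequence $\Lambda(\rho^k_{AC}|C)$ is monotonically increasing and bounded above by $\Lambda(\rho_{AC}|C)$, so the limit $\mu := \lim_{k\rightarrow\infty}\Lambda(\rho^k_{AC}|C)$ exists in $[0,+\infty]$ and satisfies $\mu \leq \Lambda(\rho_{AC}|C)$. If $\mu = +\infty$ then equality in (\ref{prop,eq1:Lambda statement for reduction of Hmax}) is immediate, so I would henceforth assume $\mu<\infty$ and aim to show $\Lambda(\rho_{AC}|C) \leq \mu$.

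For this remaining inequality I would, for each $k$, pick $\sitC^k \in \tau_1^+(H_C)$ with $\id_A \otimes \sitC^k \geq \rho^k_{AC}$ and $\tr\sitC^k \leq \Lambda(\rho^k_{AC}|C) + 1/k$. Since $\tr\sitC^k = \Vert\sitC^k\Vert_1 \leq \mu + 1$ for all $k$ (using $\Lambda(\rho^k_{AC}|C)\le\mu$ by monotonicity), the sequence $\{\sitC^k\}$ is bounded in $\tau_1(H_C)$, so by Banach--Alaoglu's theorem applied to $\tau_1(H_C) = \mc{K}(H_C)^*$ (exactly as in the proof of Lemma~\ref{l:Lambda statement for reduction of Hmin}) there is a subsequence $\{\sitC^k\}_{k\in\Gamma}$ with a weak* limit $\sitC \in \tau_1^+(H_C)$ obeying $\Vert\sitC\Vert_1 \leq \mu$. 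To pass to the limit in the operator inequality I would use that $\rho^k_{AC}\rightarrow\rho_{AC}$ in trace norm: indeed $\rho^k_{ABC}\rightarrow\rho_{ABC}$ in trace norm by Corollary~\ref{dfnbkl} (note that $P_k^A\otimes P_k^B\otimes\id_C$ converges strongly to $\id_{ABC}$ by Lemma~\ref{nvdakj}), and the partial trace over $B$ is a contraction in trace norm. Trace-norm convergence implies convergence in the weak operator topology, while Lemma~\ref{lem:weakstar Tk implies weak op of id otimes Tk} gives that $\id_A\otimes\sitC^k \rightarrow \id_A\otimes\sitC$ in the weak operator topology along $\Gamma$; hence $\id_A\otimes\sitC^k - \rho^k_{AC} \rightarrow \id_A\otimes\sitC - \rho_{AC}$ weakly. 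Each term of this sequence is positive, so Lemma~\ref{lem:pos weak op limit} yields $\id_A\otimes\sitC \geq \rho_{AC}$; thus $\sitC$ is feasible in the infimum (\ref{lambdadef2}) defining $\Lambda(\rho_{AC}|C)$, and therefore $\Lambda(\rho_{AC}|C) \leq \tr\sitC = \Vert\sitC\Vert_1 \leq \mu$, which together with $\mu\le\Lambda(\rho_{AC}|C)$ gives (\ref{prop,eq1:Lambda statement for reduction of Hmax}).

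\emph{Main subtlety.} The only point that genuinely differs from Lemma~\ref{l:Lambda statement for reduction of Hmin} is that, as already noted after Lemma~\ref{l:mon incr}, $\rho^k_{AC}$ is not a projected state in the sense of Definition~\ref{def:projected states}, so Lemma~\ref{I:id-P} cannot be invoked directly to replace the optimization over $\tau_1^+(H_C)$ by a finite-dimensional one. This is precisely why I would use the near-optimal operators $\sitC^k$ rather than exact minimizers in the argument above. (One may observe that, since $\rho_{ABC}$ is pure, $\rho^k_{ABC}$ has rank at most one and hence $\rho^k_{AC}$ has finite rank, being supported on $U_k^A\otimes\supp\rho^k_C$; Lemma~\ref{I:id-P}~ii) then even shows that the infimum defining $\Lambda(\rho^k_{AC}|C)$ is attained, so exact optimizers could be used — but this is not needed.) Apart from this and the bookkeeping of the $\mu=+\infty$ case, the proof is a routine transcription of that of Lemma~\ref{l:Lambda statement for reduction of Hmin}, and I do not expect any further obstacle.
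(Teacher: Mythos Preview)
Your argument is correct and follows essentially the same route as the paper's proof: monotonicity from Lemma~\ref{l:mon incr}~ii), then a Banach--Alaoglu compactness argument on a bounded sequence of (near-)optimizers in $\tau_1(H_C)$ to produce a feasible $\sitC$ with $\tr\sitC \leq \mu$. The only cosmetic difference is that the paper invokes the attainment statement established in Lemma~\ref{l:Lambda statement for reduction of Hmin} to take exact optimizers $\sitC^k$, whereas you use $1/k$-near-optimizers; as you note parenthetically, exact optimizers are in fact available here, so the two proofs coincide up to this inessential bookkeeping.
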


\begin{proof}
Let $\nu_k := \Lambda(\rho_{AC}^k|C)$.  Due to Lemma~\ref{l:mon incr} this sequence is monotonically increasing, so we can define $\nu := \lim_{k \rightarrow \infty} \nu_k\in \mathbb{R}\cup\{+\infty\}$, and conclude that $\nu \leq \Lambda(\rho_{AC}|C)$.  Thus, the case $\nu=+\infty$ is trivial. It thus remains to show $\nu \geq \Lambda(\rho_{AC}|C)$ for $\nu <+\infty$. As proved in Lemma~\ref{l:Lambda statement for reduction of Hmin}, the infimum in Eq.~(\ref{lambdadef2}) is attained even if the underlying Hilbert spaces are infinite-dimensional. Thereby there exists for each $k \in \mathbb{N}$ a state $\sitC^k$ such that $\id\otimes\sitC^k \geq \rho_{AC}^k$ and $\tr\sitC^k =\Lambda(\rho_{AC}^k|C)$. Now we can proceed in the same manner as in the proof of Lemma~\ref{l:Lambda statement for reduction of Hmin} to construct a weak* limit $\sitC \in \tau_1^+(H_B)$ that satisfies $\id_A \otimes \sitC \geq \rho_{AC}$, and is such that  $\Lambda(\rho_{AC}|C)  \leq  \tr\sitC \leq \nu \leq \Lambda(\rho_{AC}|C)$. This completes the proof.
\end{proof}

Of course, Lemma~\ref{l=sup} and \ref{l:Lambda statement for reduction of Hmin} can directly be rewritten in terms of min-entropies and yield the first two statements of Proposition~\ref{p:reduction of Hmin to finite dim}. The part for the normalized projected states follows via $H_{\mathrm{min}}(\roh^k_{AB}|\sih^k_B) = H_{\mathrm{min}}(\rho^k_{AB}|\sigma^k_{B}) - \log\tr\sigma_B^k + \log\tr\rho_{AB}^{k}$, and $H_{\mathrm{min}} (\roh^k_{AB}|B) = H_{\mathrm{min}}(\roo^k|B) + \log\tr\roo^k $.

In order to obtain the convergence stated for the max-entropy in Proposition~\ref{p:reduction of Hmin to finite dim}, note that $(P^A_k \otimes P_k^B \otimes \id_C) \rho_{ABC} (P^A_k \otimes P_k^B \otimes \id_C)$ is a purification of $\rho_{AB}^k$, whenever $\rho_{ABC}$ is a  purification of $\rho_{AB}$. Hence, $H_{\mathrm{max}}(\rho_{AB}^k|B) = -H_{\mathrm{min}}(\rho_{AC}^k|C) = \log\Lambda(\rho_{AC}^k|C)$. For normalized states use $H_{\mathrm{max}} (\hat{\rho}_{AB}^k|B_k) = H_{\mathrm{max}}(\roo^k|B_k) - \log\tr\roo^k$.

\end{appendix}

\end{document}